\begin{document}

%\newcommand{\0}{{\mathbf{0}}}

%%%%%%%%%%%%%%%%%%%%%%%%%%%%%%%%%%%%%%%%%%%%%%%%%%%%%%%%%%%%%%%%%
%defined by yanhong for fdgs

% only used in this paper
%\newcommand{\msFDGS}{\mathsf{FDGS}}
%\newcommand{\msGM}{\mathsf{GM}}
%\newcommand{\msTM}{\mathsf{TM}}
%\newcommand{\msInfo}{\mathsf{Info}}
%%%%%%%%%%%%%%%%%%%%%%%%%%%%%%%%%%%%%%%%%%%%%%%%%%%%%%%%%%%%%%%%
% adapted to use in any other paper

\newcommand{\mbbN}{\mathbb{N}}
\newcommand{\mcA}{\mathcal{A}}
\newcommand{\mcB}{\mathcal{B}}
\newcommand{\mcS}{\mathcal{S}}
\newcommand{\mcD}{\mathcal{D}}

\newcommand{\mb}{\mathbf}
\newcommand{\ms}{\mathsf}
\newcommand{\mbb}{\mathbb}
\newcommand{\mc}{\mathcal}

%%%%%%%%%%%%%%%%%%%%%%%%%%%%%%%%%%%%%%%%%%%%%%%%%%%%%%%%%%%%
%\pagestyle{plain}
%\mainmatter
\title{Lattice-Based Group Signatures: \\ Achieving Full Dynamicity (and Deniability) with Ease}
\author{San Ling, Khoa Nguyen, Huaxiong Wang, Yanhong Xu}
\institute{
Division of Mathematical Sciences, \\
School of Physical and Mathematical Sciences,\\
Nanyang Technological University, Singapore.\\
\texttt{\{lingsan,khoantt,hxwang,xu0014ng\}@ntu.edu.sg}
}
\maketitle

\begin{abstract}
Lattice-based group signature is an active research topic in recent years. Since the pioneering work by Gordon, Katz and Vaikuntanathan (Asiacrypt 2010), eight other schemes have been proposed, providing various improvements in terms of security, efficiency and functionality. However, most of the existing constructions work only in the static setting where the group population is fixed at the setup phase. The only two exceptions are the schemes by Langlois et al. (PKC 2014) that handles user revocations (but new users cannot join), and by Libert et al. (Asiacrypt 2016) which addresses the orthogonal problem of dynamic user enrollments (but users cannot be revoked).

~~~In this work, we provide the first lattice-based group signature that offers full dynamicity (i.e., users have the flexibility in joining and leaving the group), and thus, resolve a prominent open problem posed by previous works. Moreover, we achieve this non-trivial feat in a relatively simple manner.  Starting with Libert et al.'s fully static construction (Eurocrypt 2016) - which is arguably the most efficient lattice-based group signature to date, we introduce  simple-but-insightful tweaks that allow to upgrade it directly into the fully dynamic setting. More startlingly, our scheme even produces slightly shorter signatures than the former, thanks to an adaptation of a technique proposed by Ling et al. (PKC 2013), allowing to prove inequalities in zero-knowledge.
%Our design approach consists of upgrading Libert et al.'s static construction (EUROCRYPT 2016) - which is arguably the most efficient lattice-based group signature to date - into the fully dynamic setting. Somewhat surprisingly, our scheme produces slightly shorter signatures than the former, thanks to a new technique for proving inequality in zero-knowledge without relying on any inequality check.
The scheme satisfies the strong security requirements of Bootle et al.'s model (ACNS 2016), under the Short Integer Solution (\textsf{SIS}) and the Learning With Errors (\textsf{LWE}) assumptions.

~~~Furthermore, we demonstrate how to equip the obtained group signature scheme with the deniability functionality in a simple way. This attractive functionality, put forward by Ishida et al. (CANS  2016), enables the tracing authority to provide an evidence that a given user is \emph{not} the owner of a signature in question. In the process, we design a zero-knowledge protocol for proving that a given \textsf{LWE} ciphertext does not decrypt to a particular message.

\smallskip \smallskip
{\bf Keywords. } lattice-based group signatures; full dynamicity; updatable Merkle trees; deniability

\end{abstract}

\section{Introduction}
Group signature, introduced by Chaum and van Heyst  \cite{CV91},  is a fundamental anonymity primitive which  allows members of a group to sign messages on behalf of the whole group. Yet, users are kept accountable for the signatures they issue since a tracing authority
can identify them should the need arise. There have been numerous works on group signatures in the last quarter-century.

Ateniese et al.~\cite{ACJT00} proposed the first scalable instantiation meeting the security properties that can be intuitively expected
from the primitive, although clean security notions were not available yet at that time. Bellare et al.~\cite{BMW03} filled this gap by
providing strong security notions for static groups, in which the group population is fixed at the setup phase.
Subsequently,
 Kiayias and Yung~\cite{KY06} and Bellare et al.~\cite{BSZ05} established the models capturing the partially dynamic setting, where users are able to join the group at any time, but once they have done so, they cannot leave the group.
Sakai et al.~\cite{SSEH+12} strengthened these models by suggesting the notion of opening soundness, guaranteeing that a valid signature only traces to one user.
Efficient schemes satisfying these models have been proposed in the random
oracle model~\cite{NguyenS04,DP06}  and in the standard model \cite{Gro07,LPY15}.

One essential functionality of group signatures is the support for membership revocation.
Enabling this feature in an efficient manner is quite challenging, since one has to ensure that revoked users are no longer able to sign messages and the workloads of other parties (managers, non-revoked users, verifiers) do not significantly increase in the meantime.
%For instance, misbehaving members who issue signatures on controversial documents should be revoked from the group. If the scheme does not have a suitable mechanism to handle these cases, then the whole system has to be re-initialized, which is obviously an undesirable solution in practice.
Several different approaches have been suggested~\cite{BressonS01,CL02a,TX03,Nguyen05,BS04} to address this problem, and notable pairing-based constructions supporting both dynamic joining and efficient revocation were given in~\cite{NakanishiFHF09,LPY12a,LPY12b}. Very recently, Bootle et al.~\cite{BCCGG16} pointed out a few shortcomings of previous models, and put forward stringent security notions for fully dynamic group signatures. They also demonstrated a construction satisfying these notions based on the decisional Diffie-Hellman (\textsf{DDH}) assumption, following a generic transformation from a secure accountable ring signature scheme~\cite{BCCGGP15}.

Another interesting functionality of group signatures that we consider in this work is \emph{deniability}, suggested by Ishida et al.~\cite{IEHST16}. It allows the tracing authority to provide a digital evidence that a given group user did \emph{not} generate a signature in question. In~\cite{IEHST16}, Ishida et al. discussed various situations in which such functionality helps to protect the privacy of users. For instance, suppose that the entrace/exit control of  a building is implemented using a group signature scheme, and the police wants to check whether a suspect was in this building at the time of a crime. If the police asks the tracing authority to reveal the signers of all signatures generated during that time period, then this will seriously violate the privacy of innocent users. In such situation, deniable group signatures make it possible to prove that the signer of a given signature is not the suspect, while still keeping the signer's identity secret. As shown by Ishida et al.~\cite{IEHST16}, the main technical challenge towards realizing the deniability functionality consists of constructing a zero-knowledge proof/argument that a given ciphertext does not decrypt to a particular message. Such a mechanism is non-trivial to realize in general, but Ishida et al.~\cite{IEHST16} managed to achieve it from pairing-based assumptions.

For the time being, existing group signature schemes offering full dynamicity and/or deniability all rely on number-theoretic assumptions which are vulnerable to quantum attacks~\cite{Shor97}. To avoid putting all eggs in one basket, it is thus encouraging to consider instantiations based on alternative, post-quantum foundations, e.g., lattice assumptions. In view of this, let us now look at the topic of lattice-based group signatures.
\smallskip

\noindent
{\sc Lattice-based group signatures.}
Lattice-based cryptography has been an exciting research area since the seminal works of Regev~\cite{Regev05} and Gentry et al.~\cite{GPV08}. %Lattices not only allow to build powerful primitives (e.g.,~\cite{Gentry09}) that have no feasible instantiations in conventional number-theoretic cryptography, but they also provide several advantages over the latter, such as conjectured resistance against quantum adversaries and faster arithmetic operations.
Along with other primitives, lattice-based group signature has received noticeable attention in recent years.
The first scheme was introduced by Gordon, Katz and Vaikuntanathan~\cite{GKV10} whose solution produced signature size linear in
the number of group users $N$. Camenisch et al.~\cite{CNR12} then extended \cite{GKV10} to achieve anonymity in the strongest sense.  Later, Laguillaumie et al. \cite{LLLS13-Asiacrypt}  put forward the first scheme with the signature size logarithmic in $N$, at the cost of relatively large parameters. Simpler and more efficient solutions with $\mathcal{O}(\log N)$ signature size were subsequently given by Nguyen et al.~\cite{NZZ15} and Ling et al.~\cite{LNW15}.  Libert et al.~\cite{LLNW16} obtained substantial efficiency
improvements via a construction based on Merkle trees which eliminates the need for GPV trapdoors \cite{GPV08}.
More recently, a scheme supporting message-dependent opening (\textsf{MDO}) feature~\cite{SEHK+12} was proposed in~\cite{LMN16}. All the schemes mentioned above are designed for static groups.

The only two known lattice-based group signatures that have certain dynamic features were proposed by Langlois et al.~\cite{LLNW14-PKC} and Libert et al.~\cite{LLMNW16-dgs}. The former is a scheme with verifier-local revocation (\textsf{VLR})~\cite{BS04}, which means that only the verifiers need to download the up-to-date group information. The latter addresses the orthogonal problem of dynamic user enrollments (but users cannot be revoked). To achieve those partially dynamic functionalities, both of the proposals have to incorporate relatively complicated mechanisms\footnote{Langlois et al. considered users' ``tokens'' as functions of Bonsai signatures~\cite{CHKP10} and associated them with a sophisticated revocation technique, while Libert et al. used a variant of Boyen's signature~\cite{Boy10} to sign users' public keys. Both underlying signature schemes require long keys and lattice trapdoors.} and both heavily rely on lattice trapdoors.

The discussed above situation is somewhat unsatisfactory,  given that the full dynamicity feature is highly desirable in most applications of group signatures (e.g., protecting the privacy of commuters in public transportation), and it has been achieved based on number-theoretic assumptions. %In terms of deniability, the work by Ishida et al.~\cite{IEHST16} is actually the only group signature scheme supporting this functionality.
This motivates us to work on fully dynamic group signatures from lattices. %and it would be better to design a fully dynamic group signature scheme with deniability from lattices.
Besides, considering that the journey to partial dynamicity in previous works~\cite{LLNW14-PKC,LLMNW16-dgs} was shown not easy, we ask ourselves an inspiring question: Can we achieve full dynamicity with ease? Furthermore, given that deniability is an attractive and non-trivial functionality~\cite{IEHST16}, can we also realize it with ease in the context of lattice-based group signatures? At the end of the day, it is good to solve an open research question, but it would be even better and more exciting to do this in a simple way. To make it possible, we will likely need some new and insightful ideas.

\smallskip

\noindent
{\sc Our Results and Techniques. }
Our main contribution is the first fully dynamic group signature from lattices.
The scheme satisfies the strong security requirements put forward by Bootle et al.~\cite{BCCGG16}, under the Short Integer Solution (\textsf{SIS}) and the Learning With Errors (\textsf{LWE}) assumptions. As in all previous lattice-based group signatures, our scheme is analyzed in the random oracle model. Additionally, we incorporate the deniability functionality suggested by Ishida et al.~\cite{IEHST16} into Bootle et al.'s model, and then demonstrate how to make our fully dynamic group signature scheme deniable. %Note that this extended model is carefully designed based on the model of the deniable group signatures by Ishida et al.~\cite{IEHST16}.

\begin{table}[!htbp]
\centering
\setlength{\tabcolsep}{2pt}
\begin{tabular}{|c|c|c|c|c|c|c|c|c|}
  \hline
  % after \\: \hline or \cline{col1-col2} \cline{col3-col4} ...
  \rule{0pt}{2.8ex}
  Scheme &
 \begin{tabular}{c}
    % after \\: \hline or \cline{col1-col2} \cline{col3-col4} ...
    {\small Sig.} \\
    {\small size} \\
  \end{tabular}
& \begin{tabular}{c}
    % after \\: \hline or \cline{col1-col2} \cline{col3-col4} ...
    {\small Group} \\
    {\small PK size} \\
  \end{tabular}
& \begin{tabular}{c}
    % after \\: \hline or \cline{col1-col2} \cline{col3-col4} ...
    {\small Signer's} \\
    {\small SK size} \\
  \end{tabular}
&\begin{tabular}{c}
    % after \\: \hline or \cline{col1-col2} \cline{col3-col4} ...
    {\small Trap-} \\
    {\small door?} \\
  \end{tabular}
&
{\small Model}
   & \begin{tabular}{c}
    % after \\: \hline or \cline{col1-col2} \cline{col3-col4} ...
    \hspace*{-2.5pt}\small{Extra info} \\
    \small {per epoch}\hspace*{-2.5pt} \\
  \end{tabular}
  &\begin{tabular}{c}
    % after \\: \hline or \cline{col1-col2} \cline{col3-col4} ...
    \hspace*{-2.5pt}\small{Denia-} \\
    \hspace*{-2.5pt}\small {bility}  \\
  \end{tabular}
  %& {\small Deni-ty}
  %\begin{tabular}{c}
    % after \\: \hline or \cline{col1-col2} \cline{col3-col4} ...
 %   {\small Assumption} \\
 %   {\small $(\gamma$ \hspace*{-1pt}{\small in} \hspace*{-1pt}$\mathsf{SIVP}_\gamma)$} \\
 % \end{tabular}
 \\
  \hline
  \rule{0pt}{2.8ex}
  {GKV}~\cite{GKV10} & $\widetilde{\mathcal{O}}(\lambda^2\hspace*{-2pt}\cdot\hspace*{-2pt}N)$ & $\widetilde{\mathcal{O}}(\lambda^2\hspace*{-2pt}\cdot\hspace*{-2pt}N)$ & $\widetilde{\mathcal{O}}(\lambda^2)$ & yes & \textsf{static} & \textsf{NA} & ?\\
  \hline
  \rule{0pt}{2.8ex}
  CNR~\cite{CNR12} & $\widetilde{\mathcal{O}}(\lambda^2\hspace*{-2pt}\cdot\hspace*{-2pt}N)$ & $\widetilde{\mathcal{O}}(\lambda^2)$  & $\widetilde{\mathcal{O}}(\lambda^2)$ & yes & \textsf{static} &   \textsf{NA} & ?\\
  \hline
  \rule{0pt}{2.8ex}
  {\small LLLS}~\cite{LLLS13-Asiacrypt} & $\widetilde{\mathcal{O}}(\lambda\hspace*{-2pt}\cdot\hspace*{-2pt}\ell)$& ${\mathcal{O}}(\lambda^2\hspace*{-2pt}\cdot\hspace*{-2pt}\ell)$  & $\widetilde{\mathcal{O}}(\lambda^2)$ & yes & \textsf{static} &  \textsf{NA}& ? \\
   \hline
  \rule{0pt}{2.8ex}
  {\small LLNW}~\cite{LLNW14-PKC} & $\widetilde{\mathcal{O}}(\lambda\hspace*{-2pt}\cdot\hspace*{-2pt}\ell)$ & $\widetilde{\mathcal{O}}(\lambda^2\hspace*{-2pt}\cdot\hspace*{-2pt}\ell)$ & $\widetilde{\mathcal{O}}(\lambda\hspace*{-2pt}\cdot\hspace*{-2pt} \ell)$ & yes & \textsf{VLR} &
  \begin{tabular}{c}
    % after \\: \hline or \cline{col1-col2} \cline{col3-col4} ...
    \small{Sign:} \textbf{no} \\
    \hline
     \rule{0pt}{2.8ex}
    \hspace*{-2.8pt}\small{Ver:} \hspace*{-1.5pt}$\widetilde{\mathcal{O}}(\lambda)\hspace*{-1.65pt}\cdot \hspace*{-1.65pt}R$ \\
  \end{tabular} & ?
 \\
  \hline
  \rule{0pt}{2.8ex}
  NZZ~\cite{NZZ15} & $\widetilde{\mathcal{O}}(\lambda\hspace*{-2pt}+\hspace*{-2pt}\ell^2)$ & $\widetilde{\mathcal{O}}(\lambda^2\hspace*{-2pt}\cdot\hspace*{-2pt}\ell^2)$ & $\widetilde{\mathcal{O}}(\lambda^2)$ & yes & \textsf{static} &  \textsf{NA} & ?\\
  \hline
  \rule{0pt}{2.8ex}
  LNW~\cite{LNW15} & $\widetilde{\mathcal{O}}(\lambda\hspace*{-2pt}\cdot\hspace*{-2pt}\ell)$ & $\widetilde{\mathcal{O}}(\lambda^2\hspace*{-2pt}\cdot\hspace*{-2pt}\ell)$ & $\widetilde{\mathcal{O}}(\lambda)$ & yes & \textsf{static} &  \textsf{NA} & $\star$\\
  \hline
  \rule{0pt}{2.8ex}
  {\small LLNW}~\cite{LLNW16} & $\widetilde{\mathcal{O}}(\lambda\hspace*{-2pt}\cdot\hspace*{-2pt}\ell)$ &  $\widetilde{\mathcal{O}}(\lambda^2 + \lambda\cdot \ell)$ & $\widetilde{\mathcal{O}}(\lambda\cdot \ell)$ & \textbf{FREE} & \textsf{static} & \textsf{NA} & $\star$\\
  \hline
  \rule{0pt}{3.2ex}
  {\small LLM+}~\cite{LLMNW16-dgs} & $\widetilde{\mathcal{O}}(\lambda\hspace*{-2pt}\cdot\hspace*{-2pt}\ell)$ & $\widetilde{\mathcal{O}}(\lambda^2\hspace*{-2pt}\cdot\hspace*{-2pt}\ell)$  & $\widetilde{\mathcal{O}}(\lambda%\hspace*{-2pt}\cdot\hspace*{-2pt} \ell
  )$ & yes &   \begin{tabular}{c}
       {\small\textsf{partially}} \\
        {\small \textsf{dynamic}}
     \end{tabular}
  %\begin{minipage}[t]{0.1\textwidth}
        %{\small \hspace*{2.6pt}\textsf{partially}} \\
        %{\small \hspace*{2.6pt}\textsf{dynamic}}
    % \end{minipage}
  &  \textsf{NA} & $\star$\\

  \hline
  \rule{0pt}{2.8ex}
  LMN~\cite{LMN16} & $\widetilde{\mathcal{O}}(\lambda\hspace*{-2pt}\cdot\hspace*{-2pt}\ell)$ & $\widetilde{\mathcal{O}}(\lambda^2\hspace*{-2pt}\cdot\hspace*{-2pt}\ell)$ & $\widetilde{\mathcal{O}}(\lambda)$ & yes & \textsf{MDO} &  \textsf{NA} & ?\\
  \hline
  \rule{0pt}{3.2ex}
  Ours & $\widetilde{\mathcal{O}}(\lambda\hspace*{-2pt}\cdot\hspace*{-2pt}\ell)$ &  $\widetilde{\mathcal{O}}(\lambda^2 + \lambda\cdot \ell)$ & $\widetilde{\mathcal{O}}(\lambda) + \ell$ & \textbf{FREE} &
  \begin{tabular}{c}
       {\small\textsf{fully}} \\
        {\small \textsf{dynamic}}
     \end{tabular}
   &
     \begin{tabular}{c}
       {\small Sign:} \\ $\widetilde{\mathcal{O}}(\lambda\hspace*{-2pt}\cdot\hspace*{-2pt} \ell)$ \\
       \hline
        \rule{0pt}{2.8ex}
       {\small Ver:} $\widetilde{\mathcal{O}}(\lambda)$
     \end{tabular} & \textbf{yes}
     \\
  \hline
\end{tabular}
\vspace*{0.26cm}
\caption{Comparison of known lattice-based group signatures, in terms of efficiency and functionality. The comparison is done based on two governing parameters: security parameter $\lambda$ and the maximum expected number of group users $N= 2^\ell$.
%The public key of the scheme from~\cite{NZZ-PKC15} contains a constant number of matrices in $\mathbb{Z}_q^{n \times m}$, but the security proof requires $q > N$, and thus, each matrix has size $nm\log q > n^2\cdot \ell^2$ bits.
As for the scheme from~\cite{LLNW14-PKC}, $R$ denotes the number of revoked users at the epoch in question. In the last column, ``$\star$'' means that the respective scheme can directly achieve the deniability feature via our technique, while ``?'' indicates that we do not know how to do so for the respective scheme.}
\label{table-comparison}
\end{table}

For a security parameter parameter $\lambda$ and  maximum expected number of group users~$N$, our scheme features signature size $\widetilde{\mathcal{O}}(\lambda\hspace*{-2pt}\cdot\hspace*{-2pt}\log N)$ and group public key size
$\widetilde{\mathcal{O}}(\lambda^2 + \lambda\cdot \log N)$. The user's secret key has bit-size $\widetilde{\mathcal{O}}(\lambda) + \log N$. At each epoch when the group information is updated, the verifiers only need to download an extra $\widetilde{\mathcal{O}}(\lambda)$ bits in order to perform verification of signatures\footnote{We remark that in the \textsf{DDH}-based instantiation from~\cite{BCCGG16} which relies on the accountable ring signature from~\cite{BCCGGP15}, the verifiers have to periodically download public keys of active signers. Our scheme overcomes this issue, thanks to the use of an updatable accumulator constructed in Section~\ref{section:Merkle-trees}.}, while each active signer only has to download $\widetilde{\mathcal{O}}(\lambda\hspace*{-2pt}\cdot\hspace*{-2pt}\log N)$ bits.
In Table~\ref{table-comparison}, we give a detailed comparison of our scheme with known lattice-based group signatures, in terms of efficiency and functionality. The full dynamicity feature is achieved with a very reasonable cost and without having to rely on lattice trapdoors. Somewhat surprisingly, our scheme even produces shorter signatures than the scheme from~\cite{LLNW16} - which is arguably the most efficient lattice-based group signature known to date. Furthermore, these results are obtained in a relatively simple manner, thanks to three main ideas/techniques discussed below.

Our starting point is the scheme~\cite{LLNW16}, which works in the static setting.
Instead of relying on trapdoor-based ordinary signatures as in prior works, the LLNW scheme employs on a \textsf{SIS}-based Merkle tree accumulator. For a group of $N= 2^\ell$ users, the manager chooses uniformly random vectors $\mathbf{x}_0, \ldots, \mathbf{x}_{N-1}$;  hashes them to $\mathbf{p}_0, \ldots, \mathbf{p}_{N-1}$, respectively; builds a tree on top of these hash values; and publishes the tree root $\mathbf{u}$. The signing key of user $i$ consists of $\mathbf{x}_i$ and the witness for the fact that $\mathbf{p}_i$ was accumulated in $\mathbf{u}$. When issuing signatures, the user proves knowledge of a valid pair $(\mathbf{x}_i, \mathbf{p}_i)$ and of the tree path from~$\mathbf{p}_i$ to~$\mathbf{u}$. The user also has to encrypt the binary representation $\mathsf{bin}(i)$ of his identity~$i$, and prove that the ciphertext is well-formed. The encryption layer is also lattice-trapdoor-free, since it utilizes the Naor-Yung double-encryption paradigm~\cite{NY90} with Regev's \textsf{LWE}-based encryption scheme~\cite{Regev05}.
To upgrade the LLNW scheme directly into a fully dynamic group signature, we now let the user compute the pair $(\mathbf{x}_i, \mathbf{p}_i)$ on his own (for enabling non-frameability), and we employ the following three ideas/techniques.

First, we add a dynamic ingredient into the static Merkle tree accumulator from~\cite{LLNW16}. To this end, we equip it with an efficient updating algorithm with complexity $\mathcal{O}(\log N)$: to change an accumulated value, we simply update the values at the corresponding leaf and along its path to the root.

Second, we create a simple rule to handle user enrollment and revocation efficiently (i.e., without resetting the whole tree). Specifically, we use the updating algorithm to set up the system so that: (i)- If a user has not joined the group or has been revoked, the value at the leaf associated with him is set as~$\mathbf{0}$; (ii)-  When a user joins the group, that value is set as his public key $\mathbf{p}_i$. Our setup guarantees that only active users (i.e., who has joined and has not been revoked at the given epoch) have their \emph{non-zero} public keys accumulated into the updated root. This rule effectively separates active users who can sign from those who cannot: when signing messages, the user proceeds as in the LLNW scheme, and is asked to additionally prove in zero-knowledge that $\mathbf{p}_i \neq \mathbf{0}$. In other words, the seemingly big gap between being fully static and being fully dynamic has been reduced to a small difference!

Third, the arising question now is how to additionally prove the inequality $\mathbf{p}_i \neq \mathbf{0}$ in the framework of the Stern-like~\cite{Ste96} protocol from~\cite{LLNW16}. One would naturally expect that this extra job could be done without losing too much in terms of efficiency. Here, the surprising and somewhat unexpected fact is that we can actually do it while \emph{gaining} efficiency, thanks to the following simple idea. Recall that, in~\cite{LLNW16}, to prove knowledge of $\mathbf{p}_i \in \{0,1\}^{nk}$, an extension technique from~\cite{LNSW13} is employed, in which $\mathbf{p}_i$ is extended into a vector of dimension $2nk$. We note that, the authors of~\cite{LNSW13} also suggested a slightly modified version of their technique, that allows to simultaneously prove that $\mathbf{p}_i \in \{0,1\}^{nk}$ and $\mathbf{p}_i$ is non-zero while working only with dimension $2nk-1$. This intriguing tweak enables us to obtain a zero-knowledge protocol with slightly lower communication cost, and hence, group signatures with slightly smaller size than in~\cite{LLNW16}.

%change for deniability
Another problem we have to overcome is that the fully dynamic setting requires a proof of correct opening,
which boils down to proving correct decryption for Regev's encryption scheme~\cite{Regev05}. It involves modular linear equations with bounded-norm secrets, and can be handled using Stern-like techniques from~\cite{LNSW13,LLMNW16-dgs}.
%To this end, we provide a Stern-like zero-knowledge argument of correct decryption for Regev's encryption scheme~\cite{Regev05}.

Now, to equip the obtained fully dynamic group signature scheme with the deniability functionality, we first incorporate Ishida et al.'s notion of deniability~\cite{IEHST16} into Bootle et al.'s model~\cite{BCCGG16}.  Then we describe how to make our scheme deniable, via a zero-knowledge argument of knowledge that a given Regev ciphertext~\cite{Regev05} does not decrypt to a given message. The latter reduces to proving knowledge of a \emph{non-zero} vector $\mathbf{b}\in\{-1,0,1\}^{\ell}$ satisfying a modular linear equation, where $\mathbf{b}$ is the difference between the result of decryption and the given message. This in turn can be handled by developing the above method for proving inequality in the framework of Stern's protocol.

%Furthermore, this argument system can be easily combined with the tweak described above to realize a non-trivial mechanism: proving that a given ciphertext does \emph{not} decrypt to a given message. This mechanism is essentially what is needed to enable \emph{deniability} functionality, put forward by Ishida et al.~\cite{IEHST16}, which is helpful in protecting users' privacy in many real-life scenario. Moreover, this technique is also applicable for  several other existing lattice-based group signatures~\cite{LNW15,LLNW16,LLMNW16-dgs}.

To summarize, we solve the prominent open question of achieving full dynamicity in the field of lattice-based group signatures. The scheme can be easily extended to a deniable group signature. Moreover, our solution is simple and comparatively efficient. Our results, while not yielding truly practical schemes, would certainly help to enrich the field and bring lattice-based group signatures one step closer to practice.

\smallskip

\noindent
{\sc Remarks. } This article is the full, extended version of~\cite{LNWX17}, published in the proceedings of ACNS 2017. In comparison with~\cite{LNWX17}, the present version additionally introduces the deniability feature to the obtained fully dynamic group signature scheme, and provides details of all the security proofs.

\smallskip

\noindent
{\sc Organization.} In Section~\ref{section:prelim}, we recall some background on fully dynamic group signatures and lattice-based cryptography. Section~\ref{section:Merkle-trees} develops an updatable Merkle tree accumulator. Our main scheme is constructed and analyzed in Section~\ref{section:main-scheme}. In Section~\ref{section:deniability}, we demonstrate how to achieve deniability. %Due to space restriction,
Some details are deferred to  the appendix.

\section{Preliminaries}\label{section:prelim}
\subsection{Notations}

In this paper, all vectors are column vectors. The  concatenation two column vectors $\mathbf{x}\in\mathbb{R}^m$ and $\mathbf{y}\in\mathbb{R}^k$ is denoted by $(\mathbf{x}\|\mathbf{y})\in\mathbb{R}^{m+k}$ for simplicity. % instead of $(\mathbf{x}^T\|\mathbf{y}^T)^T$.
Let $\|\cdot\|$ and $\|\cdot\|_{\infty}$  be the Euclidean ($\ell_2$) norm and infinity ($\ell_{\infty}$) norm of a vector respectively.
For $a\in\mathbb{R}$, $\log a$ denotes logarithm of $a$ with base~$2$.
%For a vector $\mathbf{v}\in\mathbb{R}^m$, let $\mathbf{v}[i]$ be the $i$-th bit of $\mathbf{v}$.
Denote $[m]$ as the set $\{1,2,\cdots, m\}$.
\subsection{Fully Dynamic Group Signatures}\label{subsection:FDGS-definitions}
We recall the definition and security notions of fully dynamic group signatures ($\ms{FDGS}$) presented by Bootle et al.~\cite{BCCGG16}. An $\ms{FDGS}$ scheme involves the following entities: a group manager~$\ms{GM}$ that determines who can join the group, a tracing manager~$\ms{TM}$ who can open signatures, and a set of users who are potential group members. Users can join/leave the group at the discretion of $\ms{GM}$. We assume $\ms{GM}$ will publish some information~$\ms{info}_{\tau}$ regularly associated with a distinct index $\tau$ (referred as epoch hereafter). Without loss of generality, assume there is one-to-one correspondence between information and the associated epoch.
The information describes changes of the group, e.g., current group members or members that are excluded from the group. We assume the published group information is authentic. By comparing current group information with previous one, it allows any party to identify revoked users at current epoch. For simplicity assume $\tau_1< \tau_2$ if $\ms{info}_{\tau_1}$ is published before $\ms{info}_{\tau_2}$, i.e., the epoch preserves the order in which the corresponding group information was published.
In existing models, the keys of authorities are supposed to be generated honestly; while in~\cite{BCCGG16}, Bootle et al. consider a stronger model where the keys of authorities can be maliciously generated by the adversary.

\medskip
\noindent{\bf{Syntax of fully dynamic group signatures.}}
%\subsubsection{Syntax of Fully Dynamic Group Signatures}
An $\ms{FDGS}$ scheme is a tuple of following polynomial-time algorithms.

\begin{description}
        %\rightarrow \ms{pp}
\item $\ms{GSetup}(\lambda)\rightarrow \ms{pp}$. On input security parameter $\lambda$, this algorithm generates public parameters~$\ms{pp}$.% and makes them public to all parties.
      %It will also initialize the registration table $\mb{reg}$.
      %\item $\ms{GKgen_{GM}(pp)}$: Given the public parameters~$\ms{pp}$, group manager generates his secret key pair $(\ms{mpk},\ms{msk})$ and initializes the group information~$\ms{Info}_{\tau_{\init}}$ and registration table~$\mb{reg}$. Output~$\ms{mpk},\ms{Info}_{\tau_{\init}}$.
      %\item $\ms{GKgen_{TM}(pp)$: Given the public parameters $\ms{pp}$, tracing manager generates his secret key pair~ $(\ms{tpk},\ms{tsk})$. Output~$\ms{tpk}$. set $\ms{gpk}=(\ms{pp},\ms{mpk},\ms{tpk})$.
      %    \rightarrow(\ms{tpk},\ms{tsk})  \rightarrow(\ms{Info}_{\tau_{\ms{init}}},\mb{reg},\ms{mpk},\ms{msk})
\item $\langle\ms{GKgen}_{\ms{GM}}(\ms{pp}),\ms{GKgen}_{\ms{TM}}(\ms{pp})\rangle$. This is an interactive protocol between the group manager $\ms{GM}$ and the tracing manager $\ms{TM}$. If it completes successfully,  algorithm $\ms{GKgen}_{\ms{GM}}$ outputs a manager key pair $(\ms{mpk},\ms{msk})$. Meanwhile, $\ms{GM}$ initializes the group information $\ms{info}$ and the registration table $\mb{reg}$. The algorithm $\ms{GKgen}_{\ms{TM}}$ outputs a tracing key pair $(\ms{tpk},\ms{tsk})$. Set group public key $\ms{gpk}=(\ms{pp},\ms{mpk},\ms{tpk})$.   %Given the public parameters~$\ms{pp}$, group manager runs $\ms{GKgen}_{\ms{GM}}(\ms{pp})$ to obtain his secret key pair $(\ms{mpk},\ms{msk})$ and initializes the group information~$\ms{Info}_{\tau_{\ms{init}}}$ and registration table~$\mb{reg}$. While tracing manager runs $\ms{GKgen}_{\ms{TM}}(\ms{pp})$ to generate his secret key pair~$(\ms{tpk},\ms{tsk})$. Set $\ms{gpk}=(\ms{pp},\ms{mpk},\ms{tpk})$.

        %This is an interactive protocol between $\ms{GM}$ and $\ms{TM}$. Assume the communication between them is authentic and private. $\ms{TM}$ generates a key pair from an indistinguishable $\ms{CCA}$ secure public key encryption scheme, denoted as $(\ms{tpk},\ms{tsk})$ while $\ms{GM}$ generates generates $(\ms{mpk},\ms{msk})$and initializes group information $\ms{Info}$ and registration table $\mb{reg}$. If interaction completes successfully, set $\ms{gpk}$ to be $(\ms{pp},\ms{mpk},\ms{tpk})$ and make it available to all parties.
       %The output of $\ms{TM}$ is $(\ms{tpk},\ms{tsk})$, where $\ms{tsk}$ is for tracing. Set the $\ms{gpk}$ to be $(\ms{pp},\ms{mpk},\ms{tpk})$. In our construction, we can extract the set of all user public keys, denoted as $R$, from the $\ms{Info}$.
       % \rightarrow\langle(\ms{Info},\ms{reg},c,\ms{mpk},\ms{msk});(\ms{tpk},\ms{tsk})\rangle,
    % $\rightarrow(\ms{upk},\ms{usk})$:
    \smallskip
\item $\ms{UKgen}(\ms{pp})\rightarrow(\ms{upk},\ms{usk})$. Given public parameters~$\ms{pp}$, this algorithm generates a user key pair $(\ms{upk},\ms{usk})$. % and makes the user public key available to all other parties. We assume anyone can get an authentic copy of it.
\smallskip
\item  $\langle\ms{Join}(\ms{info}_{\tau_{\rm{current}}},\ms{gpk},\ms{upk},\ms{usk});\ms{Issue}(\ms{info}_{\tau_{\rm{current}}},\ms{msk},\ms{upk})\rangle$. This is an interactive algorithm run by a user and $\ms{GM}$. If it completes successfully, this user becomes a group member with an identifier $\ms{uid}$ and the $\ms{Join}$ algorithm stores secret group signing key~$\ms{gsk}[\ms{uid}]$ while $\ms{Issue}$ algorithm stores registration information in the table $\mathbf{reg}$ with index $\ms{uid}$.
    %register the user in $\mb{reg}$ with index $c$, i.e., $\mb{reg}[c]$.
    %$\rightarrow\langle \ms{gsk}_i\rangle$
    %$\rightarrow \ms{Info}_{\tau_1}$
    \smallskip
\item $\ms{GUpdate}(\ms{gpk},\ms{msk},\ms{info}_{\tau_{\rm{current}}},\mc{S},\mb{reg})\rightarrow \ms{info}_{\tau_{\rm{new}}}$. This is an algorithm run by $\ms{GM}$ to update group information while advancing the epoch. Given $\ms{gpk}$,$\ms{msk}$, $\ms{info}_{\tau_{\rm{current}}}$, registration table $\mb{reg}$, a set $\mc{S}$ of active users to be removed from the group, $\ms{GM}$ computes new group information $\ms{info}_{\tau_{\rm{new}}}$ and may update the table $\mb{reg}$. If there is no change to the group, $\ms{GM}$ outputs $\bot$. %This algorithm aborts if there exists a user in $\mc{S}$ who is not active at epoch~$\tau_{\rm{current}}$.
        %\item $\ms{GUpdate}(\ms{gpk},\ms{Info}_{\tau_0},\ms{reg},\ms{msk},\mc{S})\rightarrow \ms{Info}_{\tau_1}$: This is an algorithm run by $\ms{GM}$ to update group information. Given $\ms{gpk}$, group information $\ms{Info}_{\tau_0}$ at epoch $\tau_0$, registration table $\ms{reg}$, new users' identity set $\mc{S}_1$, and revoked users' identity set $\mc{S}_2$, $\ms{GM}$ then computes new group information, denoted as $\ms{Info}_{\tau_1}$ and makes it available to the public. If there is no changes to the group, $\ms{GM}$ outputs $\bot$. This algorithm aborts if there exists a user in $\mc{S}_1\cup\mc{S}_2$ not enrolled or a user in $\mc{S}_1\cap\mc{S}_2$ or a user in $\mc{S}_1$ but being active in the group or a user in $\mc{S}_2$ but being non-active in the group.
        %\rightarrow \Sigma
\item $\ms{Sign}(\ms{gpk},\ms{gsk}[\ms{uid}],\ms{info}_{\tau},M)\rightarrow \Sigma$.
%On inputs $\ms{gpk}$, $\ms{info}_{\tau}$, $\ms{gsk}[\ms{uid}]$ and message $M$,
This algorithm outputs a group signature $\Sigma$ on message $M$ by user~$\ms{uid}$. It outputs $\bot$ if this user is inactive at epoch~$\tau$.
        %If the user $i$ is not active at current group, then this algorithm outputs abort.
        %Denote $N_{\tau}$ to be total number of users at time $\tau$. In our construction, $N_{\tau}$ can be known from $\ms{Info}_{\tau}$. Let $l_{\tau}=\lceil \log N_{\tau}\rceil$.
        %\rightarrow 0/1
\item  $\ms{Verify}(\ms{gpk},\ms{info}_{\tau},M,\Sigma)\rightarrow 0/1$. This algorithm checks the validity of the signature $\Sigma$ on message $M$ at epoch~$\tau$.
%This is a deterministic algorithm that on inputs $\ms{gpk}$, $\ms{info}_{\tau}$ and a message-signature pair $(M,\Sigma)$ it checks whether this message-signature pair is valid at epoch $\tau$ or not.% and output $1$ if yes and $0$ otherwise.
    %\rightarrow(\mb{p},\pi_2)
\item $\ms{Trace}(\ms{gpk},\ms{tsk},\ms{info}_{\tau},\mb{reg},M,\Sigma)\rightarrow(\ms{uid},\Pi_{\ms{trace}})$. This is an algorithm run by~$\ms{TM}$.
%On inputs $\ms{gpk}$, $\ms{tsk}$, $\ms{info}_{\tau}$, $\mb{reg}$, and message-signature pair $(M,\Sigma)$,
Given the inputs, $\ms{TM}$ returns an identity $\ms{uid}$ of a group member who signed the message and a proof indicating this tracing result or $\bot$ if it fails to trace to a group member.
    %    If the algorithm cannot trace to a group member, $\ms{TM}$ outputs $(0,\bot)$.    This algorithm can be run by any party.
\item $\ms{Judge}(\ms{gpk},\ms{uid},\ms{info}_{\tau},\Pi_{\ms{trace}},M,\Sigma)\rightarrow 0/1$. This algorithm checks the validity of $\Pi_{\ms{trace}}$ outputted by the $\ms{Trace}$ algorithm.
%On inputs $\ms{gpk}$, $\ms{Info}_{\tau}$ at epoch $\tau$, message-signature pair $(M,\Sigma)$ and the tracing result $(\ms{uid},\Pi_{\ms{trace}})$, this algorithm checks the validity of $\Pi_{\ms{trace}}$ and outputs $1$ if it is valid and $0$ otherwise.
    %is a correct proof or not. This algorithm will deal with $(0,\bot)$ case differently since there is no proof here.
%\item $\ms{IsActive}(\ms{Info}_{\tau},\mb{reg},i)\rightarrow 0/1$: It outputs $1$ if user $i$ is active at time $\tau$ and $0$ otherwise. %Remember that $R_{\tau}$ is the  non-revoked user public key set.
 \end{description}

\medskip
\noindent{\bf{Correctness and security of fully dynamic group signatures.}}
%\subsubsection{Correctness and Security of Fully Dynamic Group Signatures}
As put forward by Bootle et al.~\cite{BCCGG16}, an $\ms{FDGS}$ scheme must satisfy \emph{correctness}, \emph{anonymity}, \emph{non-frameability}, \emph{traceability} and \emph{tracing soundness}.
\smallskip

\noindent
\emph{Correctness} demands that a signature generated by an honest and active user is always accepted by algorithm $\mathsf{Verify}$, and that algorithm $\ms{Trace}$ can always identify that user, as well as produces a proof accepted by algorithm $\ms{Judge}$.
%given the message-signature pair $(M,\Sigma)$ at epoch $\tau$, the $\ms{Trace}$ algorithm will return correct signer identity and proof $\Pi_{\ms{trace}}$ which will be accepted by the $\ms{Judge}$ algorithm.
\smallskip

\noindent\emph{Anonymity} requires that it is infeasible for any PPT adversary to distinguish signatures generated by two active users of its choice at the chosen epoch, even if it can corrupt any user, can choose the key of $\mathsf{GM}$, and is given access to the $\ms{Trace}$ oracle.
%any PPT adversary without tracing secret key $\ms{tsk}$ cannot recover the identity of a signature. It suffices that the adversary cannot distinguish signatures of a message from one of two users of his choice even though he may learns the user secret keys and private group signing keys of these two users.
\smallskip

\noindent{\emph{Non-Frameability} makes sure that the adversary cannot generate a valid signature that traces to an honest user even if it can corrupt all other users, and can choose keys of both managers.
\smallskip

\noindent\emph{Traceability} ensures that the adversary cannot produce a valid signature that cannot be traced to an active user at the chosen epoch, even if it can corrupt any user and can choose the key of $\mathsf{TM}$.

\smallskip

\noindent\emph{Tracing Soundness} requires that it is infeasible to produce a valid signature that traces to two different users, even if all group users and both managers are fully controlled by the adversary.
\smallskip

%Formal definitions of correctness and security requirements are deferred to Appendix \ref{appendix-security-fdgs-definition}.
We will define these requirements using a set of experiments, in which the adversary has access to a set of oracles.
The oracles use the following global variables: $\ms{HUL}$ is a list of users whose keys are generated honestly. $\ms{BUL}$ is a list of users whose secret signing keys are revealed to the adversary; $\ms{CUL}$ is a list of users whose public keys are chosen by the adversary; $\ms{SL}$ is a set of signatures that are generated by the oracle $\ms{Sign}$; $\ms{CL}$ is a set of signatures that are generated by the oracle $\ms{Chal}_b$. %The details of these oracles are given in Fig.~\ref{fig-oracle}.

\begin{description}
\item $\ms{AddU}(\ms{uid})$. This oracle adds an honest user $\ms{uid}$ to the group at current epoch. It returns $\ms{upk}[\ms{uid}]$ and adds $\ms{uid}$ to the list $\ms{HUL}$.
%This oracle runs the $\ms{UKgen}$ algorithm and adds the $upk$ to the list $\ms{HUL}$. It then runs the $<\ms{Join},\ms{Issue}>$ algorithm at current epoch. It returns user public key $\ms{upk}$ and identity $\ms{uid}$.
\item $\ms{CrptU}(\ms{uid},\ms{pk})$. This oracle allows the adversary to create a new user by choosing the key $\ms{upk}[\ms{uid}]$ as $\ms{pk}$. It then adds $\ms{uid}$ to the list $\ms{CUL}$. It returns~$\bot$ if $\ms{uid}\in \ms{HUL}\cup\ms{CUL}$. This actually prepares for calling the oracle $\ms{SndToGM}$.
%This oracle allows the adversary to add a user to the group by choosing the key $\ms{upk}$. It then adds $\ms{upk}$ to the list $\ms{CUL}$. It returns~$\bot$ if $\ms{upk}\in \ms{HUL}\cup\ms{CUL}$. %set $\mb{upk}[\ms{uid}]$ to be any user public key $\ms{pk}$ chosen by the adversary. This oracle also initializes the the state of the group manager in preparation for calling $\ms{SndToGM}$ oracle.
\item  $\ms{SndToGM}(\ms{uid},M_{\rm{in}})$. This oracle engages in the $\langle\ms{Join},\ms{Issue}\rangle$ protocol between the adversary, who corrupts the user $\ms{uid}$ in the list $\ms{CUL}$, and the honest $\ms{Issue}$-executing group manager.
% completely wrong the below description
%This oracle acts as the group manager at current epoch and interacts with the adversary who corrupts the user with $\ms{upk}$ in the list $\ms{CUL}$.
%This oracle acts as the honest group manager and interacts with the adversary who corrupts the user $\ms{upk}$ by running $\ms{Issue}$ algorithm at current epoch. This user with public key $\ms{upk}$ must be in the list $\ms{CUL}$.  %It returns a identity $\ms{uid}$ for this user.
%\item $\ms{UKgen}(\ms{pp})$. This oracle runs the $\ms{UKgen}$ algorithm and adds the $upk$ to the list $\ms{HUL}$. Then it returns $\ms{upk}$.
\item $\ms{SndToU}(\ms{uid},M_{\rm{in}})$. This oracle engages in the $\langle\ms{Join},\ms{Issue}\rangle$ protocol between the adversary, who corrupts the group manager, and the honest $\ms{Join}-$executing user $\ms{uid}$ in the list $\ms{HUL}$.

%This oracle returns $\bot$ if $\ms{upk}\notin \ms{HUL}$. Otherwise it acts as an honest user at current epoch and interacts with the adversary who corrupts the group manager. %by running the $\ms{Join}$ algorithm at current epoch.
\item $\ms{RReg}(\ms{uid})$. This oracle returns registration information  $\mb{reg}[\ms{uid}]$ of user $\ms{uid}$.% to the adversary.
\item $\ms{MReg}(\ms{uid},\rho)$. This oracle modifies $\mb{reg}[\ms{uid}]$ to any $\rho$ chosen by the adversary.
\item $\ms{RevealU}(\ms{uid})$. This oracle returns secret signing key $\ms{gsk}[\ms{uid}]$ to the adversary, and adds this user to the list $\ms{BUL}$.
%This oracle reveals user secret key $\mb{usk}$ and private group signing key $\ms{gsk}$ corresponding to this public key to the adversary and add this user to the list $\ms{BUL}$. This user must be in the list $\ms{HUL}$ but not $\ms{BUL}$ or $\ms{CUL}$.
\item $\ms{Sign}(\ms{uid},M,\tau)$. This oracle returns a signature on the message $M$ by the user $\ms{uid}$ at epoch $\tau$, and then add the signature to the list $\ms{SL}$.
%The oracle runs the $\ms{Sign}(\ms{gpk},\ms{gsk}[\ms{uid}],\ms{info}_\tau,M)$ algorithm if everything is well defined and user with this identity $\ms{uid}$ is in the list $\ms{HUL}$ and active at current epoch $\tau$. It returns the signature $\Sigma$ to the adversary and adds $\Sigma$ to the list $\ms{SL}$.
\item $\ms{Chal}_b(\ms{info}_{\tau},\ms{uid}_0,\ms{uid}_1,M)$. This oracle receives %$\ms{info}_\tau$, two user identities and message $M$
 inputs from the adversary, returns a signature on $M$ by the user $\ms{uid}_b$ at epoch $\tau$, and then adds the signature to the list $\ms{CL}$. It is required that the two challenged users are active at epoch $\tau$ and this oracle can only be called once.
%This oracle runs the $\ms{Sign}(\ms{gpk},\ms{gsk}[\ms{uid}_b],\ms{info}_\tau,M)$ algorithm if everything is well defined and both users with the identities $\ms{uid}_0,\ms{uid}_1$ are in the list $\ms{HUL}$ and active at current epoch $\tau$. It returns the signature $\Sigma$ to the adversary and adds $(M,\Sigma,\tau)$ to the list $\ms{CL}$. This oracle is designed for defining anonymity experiment and can only be called once.

\item $\ms{Trace}(M,\Sigma,\ms{Info}_{\tau})$. This oracle returns the signer of the signature together with a proof, with respect to the epoch $\tau$. We require that the signature is not in the list $\ms{CL}$.
%This oracle runs the $\ms{Trace}(\ms{gpk},\ms{tsk},\ms{info}_\tau,\mb{reg},M,\Sigma)$ algorithm and return the result to the adversary. It is require that $(M,\Sigma,\tau)$ is not in the list $\ms{CL}$ and $\ms{Verify}(\ms{gpk},\ms{info}_\tau,M,\Sigma)=1$.
\item $\ms{GUpdate}(\mc{S})$. This oracle allows the adversary to update the group at current epoch $\tau_{\rm{current}}$. It is required that $\mc{S}$ is a set of active users at current epoch.
%This oracle runs $\ms{GUpdate}(\ms{gpk},\ms{msk},\ms{info}_{\tau_{\rm{current}}},\mc{S},\mb{reg})$ algorithm to allow the adversary to remove some active users from the group.% at epoch $\tau$.
\end{description}
We also need the following polynomial-time algorithm in security experiments to ease composition.

$\ms{IsActive}(\ms{info}_{\tau},\mb{reg},\ms{uid})\rightarrow 0/1$: It outputs $1$ if this user is active at epoch $\tau$ and $0$ otherwise.

We refer the readers to~\cite{BCCGG16} for detailed descriptions of the above oracles.

\begin{definition}
For any security parameter $\lambda$ and any PPT adversary $\mc{A}$, we define correctness and security experiments in Fig.~\ref{fig-exp}.

For correctness, non-frameability, traceability and tracing soundness, the advantage of the adversary is defined as the probability of outputting~$1$ in the corresponding experiment. For anonymity, the advantage is defined as the absolute difference of probability of outputting~$1$ between experiment $\mb{Exp}_{\ms{FDGS},\mcA}^{\ms{anon-1}}$ and experiment $\mb{Exp}_{\ms{FDGS},\mcA}^{\ms{anon-0}}$.

A fully dynamic group signature scheme is said to be correct and secure (i.e., anonymous, non-frameable, traceable and tracing sound) if the advantages of the adversary in all considered experiments are negligible in~$\lambda$.
\end{definition}

\begin{figure}%[!htb]

\begin{center}
\begin{minipage}{11.6cm}
\underline{Experiment: $\mb{Exp}_{\ms{FDGS},\mcA}^{\ms{corr}}(\lambda)$}\\
     $\ms{pp}\leftarrow \ms{GSetup}(\lambda),\ms{HUL}:=\emptyset$.\\
    $\langle(\ms{info},\ms{mpk},\ms{msk});(\ms{tpk},\ms{tsk})\rangle\leftarrow \langle\ms{GKgen_{GM}}(\ms{pp}),\ms{GKgen_{TM}}(\ms{pp})\rangle$.\\
     Set $\ms{gpk}=(\ms{pp},\ms{mpk},\ms{tpk})$.
    $(\ms{uid},M,\tau)\leftarrow\mcA^{\ms{AddU,RReg,GUpdate}}(\ms{gpk},\ms{info})$.\\
     If $\ms{uid}\notin\ms{HUL}$ or $\ms{gsk}[\ms{uid}]=\bot$ or $\ms{info}_{\tau}=\bot$ or $\ms{IsActive}(\ms{info}_{\tau},\mb{reg},\ms{uid})=0$, return $0$.\\
    $\Sigma\leftarrow\ms{Sign}(\ms{gpk},\ms{gsk}[\ms{uid}],\ms{info}_{\tau},M)$,
    $(\ms{uid}^*,\Pi_{\ms{trace}})\leftarrow\ms{Trace}(\ms{gpk},\ms{tsk},\ms{info}_{\tau},\mb{reg},M,\Sigma)$.\\
    Return $1$ if $(\ms{Verify}(\ms{gpk},\ms{info}_{\tau},M,\Sigma)=0\vee \ms{Judge}(\ms{gpk},\ms{uid},\ms{info}_{\tau},\Pi_{\ms{trace}},m,\Sigma)=0\vee\ms{uid}\neq \ms{uid}^*  ).$\\
\underline{Experiment: $\mb{Exp}_{\ms{FDGS},\mcA}^{\ms{anon-b}}(\lambda)$}\\
     $\ms{pp}\leftarrow \ms{GSetup}(\lambda),\ms{HUL,CUL,BUL,CL,SL}:=\emptyset$.\\
     $(\ms{st}_{\ms{init}},\ms{info},\ms{mpk},\ms{msk})\leftarrow\mcA^{\langle\cdot,\ms{GKgen_{TM}\rangle}(\ms{pp})}(\ms{init}:\ms{pp})$.\\
     Return $0$ if $\ms{GKgen_{TM}}$ did not accept or $\mcA$'s output is not well-formed.\\
     Denote the output of $\ms{GKgen_{TM}}$ as $(\ms{tpk},\ms{tsk})$, and set $\ms{gpk}=(\ms{pp},\ms{mpk},\ms{tpk})$.\\
     $b^*\leftarrow\mcA^{\ms{AddU,CrptU,RevealU,SndToU,Trace,MReg,Chal_b}}(\ms{play}:\ms{st_{init}},\ms{gpk})$.
     Return $b^*$.\\
\underline{Experiment: $\mb{Exp}_{\ms{FDGS},\mcA}^{\ms{non-frame}}(\lambda)$}\\
     $\ms{pp}\leftarrow \ms{GSetup}(\lambda),\ms{HUL,CUL,BUL,SL}=\emptyset$.\\
    $(\ms{st}_{\ms{init}},\ms{info},\ms{mpk},\ms{msk},\ms{tpk},\ms{tsk})\leftarrow\mcA(\ms{init}:\ms{pp})$.\\
     Return $0$ if $\mcA$'s output is not well-formed, otherwise set $\ms{gpk}=(\ms{pp},\ms{mpk},\ms{tpk})$.\\
     $(M,\Sigma,\ms{uid},\Pi_{\ms{trace}},\ms{info}_{\tau})\leftarrow\mcA^{\ms{CrptU,RevealU,SndToU,MReg,Sign}}(\ms{play}:\ms{st_{init}},\ms{gpk})$.\\
     Return $1$ if
     $(\ms{Verify}(\ms{gpk},\ms{info}_{\tau},M,\Sigma)=1\wedge
     \ms{Judge}(\ms{gpk},\ms{uid},\ms{info}_{\tau},\Pi_{\ms{trace}},M,\Sigma)=1\wedge $\\
     $\ms{uid}\in\ms{HUL}\setminus\ms{BUL}\wedge(M,\Sigma,\tau)\notin \ms{SL})$.\\
\underline{Experiment: $\mb{Exp}_{\ms{FDGS},\mcA}^{\ms{trace}}(\lambda)$}\\
     $\ms{pp}\leftarrow \ms{GSetup}(\lambda),\ms{HUL,CUL,BUL,SL}=\emptyset$.\\
    $(\ms{st}_{\ms{init}},\ms{tpk},\ms{tsk})\leftarrow\mcA^{\langle\ms{GKgen_{GM}}(\ms{pp}),\cdot\rangle}(\ms{init}:\ms{pp})$.\\
     Return $0$ if $\ms{GKgen_{GM}}$ did not accept or $\mcA$'s output is not well-formed.\\
     Denote the output of $\ms{GKgen_{GM}}$ as $(\ms{mpk},\ms{msk},\ms{info})$, and set $\ms{gpk}=(\ms{pp},\ms{mpk},\ms{tpk})$.\\
     $(M,\Sigma,\tau)\leftarrow\mcA^{\ms{AddU,CrptU,SndToGM,RevealU,MReg,Sign,GUpdate}}(\ms{play}:\ms{st_{init}},\ms{gpk},\ms{info})$.\\
     If $\ms{Verify}(\ms{gpk},\ms{info}_{\tau},M,\Sigma)=0$, return $0$.\\
     $(\ms{uid},\Pi_{\ms{trace}})\leftarrow\ms{Trace}(\ms{gpk},\ms{tsk},\ms{info}_{\tau},\mb{reg},M,\Sigma)$.\\
     Return $1$ if
      $(\ms{IsActive}(\ms{info}_{\tau},\mb{reg},\ms{uid})=\bot \vee
     \ms{Judge}(\ms{gpk},\ms{uid},\ms{info}_{\tau},\Pi_{\ms{trace}},M,\Sigma)=0\vee\ms{uid}=0 )$.\\
\underline{Experiment: $\mb{Exp}_{\ms{FDGS},\mcA}^{\ms{trace-sound}}(\lambda)$}\\
    $\ms{pp}\leftarrow \ms{GSetup}(\lambda),\ms{CUL}=\emptyset$.
     $(\ms{st}_{\ms{init}},\ms{info},\ms{mpk},\ms{msk},\ms{tpk},\ms{tsk})\leftarrow\mcA(\ms{init}:\ms{pp})$.\\
     Return $0$ if $\mcA$'s output is not well-formed, otherwise set $\ms{gpk}=(\ms{pp},\ms{mpk},\ms{tpk})$.\\
    $(M,\Sigma,\ms{uid}_0,\Pi_{\ms{trace},0},\ms{uid}_1,\Pi_{\ms{trace},1},\ms{info}_{\tau})\leftarrow\mcA^{\ms{CrptU,MReg}}(\ms{play}:\ms{st_{init}},\ms{gpk})$.\\
    Return $1$ if
    $(\ms{Verify}(\ms{gpk},\ms{info}_{\tau},M,\Sigma)=1\wedge \ms{uid}_0~(\neq \bot)\neq\ms{uid}_1~(\neq \bot)\wedge \ms{Judge}(\ms{gpk},\ms{uid}_b,\ms{info}_{\tau},\Pi_{\ms{trace},b},M,\Sigma)=1 ~\text{for} ~b~\in\{0,1\})$.
     %$\ms{uid}_0=\ms{uid}_1$ or $\ms{uid}_0=\bot$ or $\ms{uid}_0=\bot$, then return $0$.\\
     %If $\exists~b\in\{0,1\}$, s.t. $\ms{Judge}(\ms{gpk},\ms{uid}_b,\ms{info}_{\tau},\Pi_{\ms{trace},b},M,\Sigma)=0$, then return $0$.\\ %\mb{upk}[\ms{uid}]
     %Return $1$.
  \end{minipage}
\end{center}
\caption{Experiments to define security requirements of $\ms{FDGS}$.}\label{fig-exp}
\end{figure}

\subsection{Background on Lattices}
We recall the average-case lattice problems \textsf{SIS} and \textsf{LWE}, together with their hardness results.
\begin{definition}[\cite{Ajtai96,GPV08}]
\emph{
The $\mathsf{SIS}^{\infty}_{n,m,q,\beta}$ problem is as follows: Given uniformly random matrix $\mathbf{A} \in \mathbb{Z}_q^{n \times m}$, find a non-zero vector $\mathbf{x} \in \mathbb{Z}^m$ such that~$\|\mathbf{x}\|_\infty \leq \beta$ and $\mathbf{A\cdot x=0} \bmod q.$
}
\end{definition}
If $m, \beta = \mathsf{poly}(n)$, and $q > \beta\cdot\widetilde{\mathcal{O}}(\sqrt{n})$, then the $\mathsf{SIS}^{\infty}_{n,m,q,\beta}$ problem  is at least as hard as worst-case lattice problem $\mathsf{SIVP}_\gamma$ for some $\gamma = \beta \cdot \widetilde{\mathcal{O}}(\sqrt{nm})$ (see, e.g.,~\cite{GPV08,MP13}).
Specifically, when $\beta=1$, $q = \widetilde{\mathcal{O}}(n)$, $m = 2n \lceil\log q \rceil$, the $\mathsf{SIS}^{\infty}_{n,m,q,1}$ problem is at least as hard as $\mathsf{SIVP}_{\gamma}$ with~$\gamma =\widetilde{\mathcal{O}}(n)$.

In the last decade, numerous \textsf{SIS}-based cryptographic primitives have been proposed. In this work, we will extensively employ $2$ such constructions:
\begin{itemize}%[leftmargin=0.25cm]
\item Our group signature scheme is based on the Merkle tree accumulator from~\cite{LLNW16}, which is built upon a specific family of collision-resistant hash functions.
\item Our zero-knowledge argument systems use the statistically hiding and computationally binding string commitment scheme from~\cite{KTX08}.
\end{itemize}
For appropriate setting of parameters, the security of the above two constructions can be based on the worst-case hardness of $\mathsf{SIVP}_{\widetilde{\mathcal{O}}(n)}$.

\smallskip

In the group signature in Section~\ref{section:main-scheme}, we will employ the multi-bit version of Regev's encryption scheme~\cite{Regev05}, presented in~\cite{KTX07}. The scheme is based on the hardness of the \textsf{LWE} problem.
\begin{definition}[\cite{Regev05}]
\emph{Let $n,m \geq 1$, $q \geq 2$, and let $\chi$ be a probability distribution on $\mathbb{Z}$. For $\mathbf{s} \in \mathbb{Z}_q^n$, let $\mathcal{A}_{\mathbf{s}, \chi}$ be the distribution obtained by sampling $\mathbf{a} \xleftarrow{\$} \mathbb{Z}_q^n$ and $e \hookleftarrow \chi$, and outputting $(\mathbf{a}, \mathbf{s}^\top\cdot\mathbf{a} + e) \in \mathbb{Z}_q^n \times \mathbb{Z}_q$. The $\textsf{LWE}_{n,q,\chi}$ problem asks to distinguish~$m$ samples chosen according to $\mathcal{A}_{\mathbf{s},\chi}$ (for $\mathbf{s} \xleftarrow{\$} \mathbb{Z}_q^n$) and $m$ samples chosen according to the uniform distribution over $\mathbb{Z}_q^n \times \mathbb{Z}_q$.}
\end{definition}
If $q$ is a prime power, $\chi$ is the discrete Gaussian distribution $D_{\mathbb{Z}, \alpha q}$, where $\alpha q \geq 2\sqrt{n}$, then $\mathsf{LWE}_{n,q,\chi}$ is as least as hard as $\mathsf{SIVP}_{\widetilde{\mathcal{O}}(n/\alpha)}$ (see~\cite{Regev05,MM11,MP13}).

\subsection{Stern-like Protocols for Lattice-Based Relations}\label{subsection:Stern-background}
The zero-knowledge (\textsf{ZK}) argument systems appearing in this paper operate in the framework of
Stern's protocol~\cite{Ste96}. Let us now recall some background. This protocol was originally proposed in the context of code-based cryptography, and was later adapted into the lattice setting by Kawachi et al.~\cite{KTX08}. Subsequently, it was empowered by Ling et al.~\cite{LNSW13} to handle the matrix-vector relations associated with the \textsf{SIS} and \textsf{LWE} problems, and further developed to design several lattice-based schemes: group signatures~\cite{LLNW14-PKC,LNW15,LLNW16,LMN16,LLMNW16-dgs}, policy-based signatures~\cite{ChengNW16} and group encryption~\cite{LLMNW16-ge}.

Stern-like protocols are quite useful in the context of lattice-based privacy-preserving systems, when one typically works with modular linear equations of the form $\sum_{i}\mathbf{M}_i\cdot \mathbf{x}_i = \mathbf{v} \bmod q$ - where $\{\mathbf{M}_i\}_i$, $\mathbf{v}$ are public, and one wants to prove in \textsf{ZK} that secret vectors $\{\mathbf{x}_i\}_i$ satisfy certain constraints, e.g., they have small norms and/or have coordinates arranged in a special way. The high-level ideas can be summarized as follows. If the given constraints are invariant under certain type of permutations of coordinates, then one readily uses uniformly random permutations to prove those constraints. Otherwise, one performs some pre-processings with $\{\mathbf{x}_i\}_i$ to reduce to the former case. Meanwhile, to prove that the modular linear equation holds, one makes use of a standard masking technique.

The basic protocol consists of $3$ moves: commitment, challenge, response.
If the statistically hiding and computationally binding string commitment scheme from~\cite{KTX08} is employed in the first move, then one obtains a statistical zero-knowledge argument of knowledge (\textsf{ZKAoK}) with perfect completeness, constant soundness error $2/3$, and communication cost $\mathcal{O}(|w|\cdot \log q)$, where $|w|$ denotes the total bit-size of the secret vectors. In many applications, the protocol is repeated $\kappa = \omega(\log \lambda)$ times, for security parameter $\lambda$, to achieve negligible soundness error, and then made non-interactive via the Fiat-Shamir heuristic~\cite{FS86}. In the random oracle model, this results in a non-interactive zero-knowledge argument of knowledge (\textsf{NIZKAoK}) with bit-size $\mathcal{O}(|w|\cdot \log q)\cdot \omega(\log \lambda)$.

\section{Updatable Lattice-Based Merkle Hash Trees}\label{section:Merkle-trees}
We first recall the lattice-based Merkle-tree accumulator from~\cite{LLNW16}, and then, we equip it with a simple updating algorithm which allows to change an accumulated value in time logarithmic in the size of the accumulated set. This updatable hash tree will serve as the building block of our construction in Section~\ref{section:main-scheme}.

\subsection{Cryptographic Accumulators}
An \emph{accumulator scheme} is a tuple of polynomial-time algorithms defined below. %$(\mathsf{TSetup}, \mathsf{TAcc}, \mathsf{TWitness}, \mathsf{TVerify})$ defined as follows:
\begin{description}
  \item[$\mathsf{TSetup}(\lambda)$] On input security parameter $\lambda$, output the public parameter~$\ms{pp}$. %The other algorithms take $pp$ as an implicit input.
  \item[$\mathsf{TAcc}_{\ms{pp}}$]
  On input a set $R = \{\mathbf{d}_0, \ldots, \mathbf{d}_{N-1}\}$ of $N$ data values, output an accumulator value $\mathbf{u}$.

  \item[$\mathsf{TWitness}_{\ms{pp}}$]
  On input a data set $R$ and a value $\mathbf{d}$, output $\bot$ if $\mathbf{d} \notin R$; otherwise output a witness~$w$ for the fact that~$\mathbf{d}$ is accumulated in~$\mathsf{TAcc}(R)$. (Typically, the size of ${w}$ should be short (\emph{e.g.}, constant or logarithmic in $N$) to be useful.)

  \item[$\mathsf{TVerify}_{\ms{pp}}$]
  On input accumulator value $\mathbf{u}$ and a value-witness pair $(\mathbf{d}, w)$, output~$1$ (which indicates that $(\mathbf{d}, w)$ is valid for the accumulator $\mathbf{u}$) or $0$.
%  \item $\ms{TUpdate}_{\ms{pp}}$ On input a new value $\mb{d}^*$ and $\ell$ bits $(j_1,j_2,\cdots,j_\ell)$, set $\mb{d}_{j}=\mb{d}^*$, where $j=\sum_{i=\ell-1}^{0}2^i\cdot j_{\ell-i}$. Update the accumulator value and some tree nodes values which will be specified later.
 \end{description}
An accumulator scheme is called correct if for all $\ms{pp}\leftarrow \mathsf{TSetup}(\lambda)$, we have  $\mathsf{TVerify}_{\ms{pp}}\big(\mathsf{TAcc}_{\ms{pp}}(R), \mathbf{d}, \mathsf{TWitness}_{\ms{pp}}(R, \mathbf{d})\big) = 1$ for all $\mathbf{d} \in R$.

A natural security requirement for accumulators, as considered in~\cite{BariP97,CL02a,LLNW16}, says that it is infeasible to prove that a value $\mathbf{d}^*$ was accumulated in a value $\mathbf{u}$ if it was not. This property is formalized as follows.
\begin{definition}[\cite{LLNW16}]\label{definition:security-of-acc}
{\emph
An accumulator scheme $(\mathsf{TSetup}, \mathsf{TAcc}, \mathsf{TWitness}, \mathsf{TVerify})$ is called secure if for all PPT adversaries $\mathcal{A}$:
\begin{eqnarray*}
\mathrm{Pr}\big[\ms{pp} \leftarrow \mathsf{TSetup}(\lambda); (R, \mathbf{d}^*, w^*) \leftarrow \mathcal{A}(\ms{pp}):~~~~~~~~~~~~~~~~~~~~~ \\
\mathbf{d}^* \not \in R \wedge \mathsf{TVerify}_{\ms{pp}}(\mathsf{TAcc}_{\ms{pp}}(R), \mathbf{d}^*, w^*)=1\big] = \ms{negl}(\lambda).
\end{eqnarray*}
}
\end{definition}

\subsection{The LLNW Merkle-tree Accumulator}\label{subsection:LLNW-tree}
{\sc Notations. } Hereunder we will use the notation $x \xleftarrow{\$} S$ to indicate that $x$ is chosen uniformly at random from finite set $S$. For bit $b \in \{0,1\}$, we let $\bar{b}= 1-b$.

\smallskip

\noindent
The Merkle-tree accumulator scheme from~\cite{LLNW16} works with parameters $n=\mathcal{O}(\lambda)$, $q=\widetilde{\mathcal{O}}(n^{1.5})$, $k=\lceil \log_2 q\rceil$, and $m=2nk$. The set $\mathbb{Z}_q$ is identified by $\{0,\ldots, q-1\}$. %Define ``powers-of-2'' matrix $\mb{G}=\mb{I}_n\otimes \mb{g}^t$, where $\mb{g}^t=(1,2,\cdots, 2^{k-1})$.
Define the ``powers-of-2'' matrix
\begin{eqnarray*}
\mathbf{G} = \left[\begin{array}{cccc}
               1~2~4~\ldots~2^{k-1} &  &  &  \\
                %& 1~2~4~\ldots~2^{k-1} &  &  \\
                &  & \ldots&  \\
                &  &  & 1~2~4~\ldots~2^{k-1}
             \end{array}\right] \in \mathbb{Z}_q^{n \times nk}.
\end{eqnarray*}
Note that for every $\mathbf{v} \in \mathbb{Z}_q^n$, we have $\mathbf{v} = \mathbf{G}\cdot \mathsf{bin}(\mathbf{v})$, where $\mathsf{bin}(\mathbf{v}) \in \{0,1\}^{nk}$ denotes the binary representation of $\mathbf{v}$. The scheme is built upon the following family of \textsf{SIS}-based collision-resistant hash functions.
\begin{definition}\label{definition:hash-family}
The function family $\mathcal{H}$ mapping $\{0,1\}^{nk} \times \{0,1\}^{nk}$ to $\{0,1\}^{nk}$ is defined as $\mathcal{H} = \{h_{\mathbf{A}} \hspace*{2.5pt} | \hspace*{2.5pt} \mathbf{A} \in \mathbb{Z}_q^{n \times m}\}$, where for $\mathbf{A} = [\mathbf{A}_0 | \mathbf{A}_1]$ with $\mathbf{A}_0, \mathbf{A}_1 \in \mathbb{Z}_q^{n \times nk}$, and for any $(\mathbf{u}_0, \mathbf{u}_1) \in \{0,1\}^{nk} \times \{0,1\}^{nk}$, we have:
\[
h_{\mathbf{A}}(\mathbf{u}_0, \mathbf{u}_1)= \mathsf{bin}\big(\mathbf{A}_0 \cdot \mathbf{u}_0 + \mathbf{A}_1\cdot \mathbf{u}_1 \bmod q\big) \in \{0,1\}^{nk}.
\]
\end{definition}
Note that $h_{\mathbf{A}}(\mathbf{u}_0, \mathbf{u}_1) = \mathbf{u} \Leftrightarrow \mathbf{A}_0\cdot \mathbf{u}_0  +  \mathbf{A}_1 \cdot \mathbf{u}_1 = \mathbf{G}\cdot \mathbf{u} \bmod q$.

%\begin{lemma}[\cite{LLNW16}]\label{lemma:collision-resistant-hash}
%The function family $\mathcal{H}$ is collision-resistant, assuming the hardness of the $\mathsf{SIS}_{n, m, q, 1}^\infty$ problem.
%\end{lemma}

\begin{comment}
\begin{proof}
Given $\mathbf{A} = [\mathbf{A}_0 | \mathbf{A}_1] \xleftarrow{\$} \mathbb{Z}_q^{n \times m}$, if one can find two \emph{distinct} pairs $(\mathbf{u}_0, \mathbf{u}_1) \in \big(\{0,1\}^{nk}\big)^2$ and $(\mathbf{v}_0, \mathbf{v}_1)\in \big(\{0,1\}^{nk}\big)^2$ such that $h_{\mathbf{A}}(\mathbf{u}_0, \mathbf{u}_1) = h_{\mathbf{A}}(\mathbf{v}_0, \mathbf{v}_1) \bmod q$, then one can obtain a \emph{non-zero} vector $\mathbf{z} =
\left(
  \begin{array}{c}
    \mathbf{u}_0 - \mathbf{v}_0 \\
    \mathbf{u}_1 - \mathbf{v}_1  \\
  \end{array}
\right)
\in \{-1,0,1\}^m$ such that $$\mathbf{A}\cdot\mathbf{z} = \mathbf{A}_0\cdot(\mathbf{u}_0 - \mathbf{v}_0) + \mathbf{A}_1\cdot (\mathbf{u}_1 - \mathbf{v}_1) = \mathbf{G}\cdot h_{\mathbf{A}}(\mathbf{u}_0, \mathbf{u}_1) - \mathbf{G}\cdot h_{\mathbf{A}}(\mathbf{v}_0, \mathbf{v}_1) = \mathbf{0} \bmod q.$$ In other words, $\mathbf{z}$ is a valid solution to the $\mathsf{SIS}_{n, m, q, 1}^\infty$ problem associated with matrix $\mathbf{A}$. The lemma then follows from the worst-case to average-case reduction from~$\mathsf{SIVP}_{\widetilde{\mathcal{O}}(n)}$.\qed
\end{proof}
\end{comment}
%\subsection{Our Merkle-Tree Accumulator}

\medskip
A Merkle tree with $N=2^\ell$ leaves, where $\ell$ is a positive integer, then can be constructed based on the function family $\mathcal{H}$ as follows.
\begin{description}
  \item[$\mathsf{TSetup}(\lambda).$] Sample $\mathbf{A} \xleftarrow{\$} \mathbb{Z}_q^{n \times m}$, and output $\ms{pp}= \mathbf{A}$.
      \smallskip

  \item[$\mathsf{TAcc}_{\mathbf{A}}(R = \{\mathbf{d}_0 \in \{0,1\}^{nk}, \ldots, \mathbf{d}_{N-1} \in \{0,1\}^{nk}\}).$]
  For every $j \in [0,N-1]$, let $\mathsf{bin}(j)= (j_1, \ldots, j_\ell) \in \{0,1\}^\ell$ be the binary representation of~$j$, and let  $\mathbf{d}_j=\mathbf{u}_{j_1, \ldots, j_\ell}$. Form the tree of depth~$\ell = \log N$ based on the $N$ leaves $\mathbf{u}_{0,0,\ldots, 0}, \ldots, \mathbf{u}_{1, 1,\ldots, 1}$ as follows:
    \begin{enumerate}
        \item At depth $i\in [\ell]$, the node $\mathbf{u}_{b_1,\ldots, b_i} \in \{0,1\}^{nk}$, for all $(b_1,\ldots, b_i) \in \{0,1\}^i$, is defined as $h_{\mathbf{A}}(\mathbf{u}_{b_1,\ldots, b_i,0}, \mathbf{u}_{b_1,\ldots, b_i, 1})$.\smallskip
        \item At depth $0$: The root $\mathbf{u} \in \{0,1\}^{nk}$ is defined as $h_{\mathbf{A}}(\mathbf{u}_0, \mathbf{u}_1)$.
    \end{enumerate}
  The algorithm outputs the accumulator value $\mathbf{u}$.\smallskip

  \item[$\mathsf{TWitness}_{\mathbf{A}}(R, \mathbf{d}).$] If $\mathbf{d} \not \in R$, return $\bot$. Otherwise, $\mathbf{d} = \mathbf{d}_j$ for some $j \in [0, N-1]$ with binary representation $(j_1,\ldots, j_\ell)$. Output the witness $w$ defined as:
       \begin{eqnarray*}%\label{equation:witness-definition}
       w =  \big((j_1,\ldots, j_\ell), (\mathbf{u}_{j_1,\ldots, j_{\ell-1},\bar{j_\ell}}, \ldots, \mathbf{u}_{j_1,\bar{j_2}}, \mathbf{u}_{\bar{j_1}}) \big) \in \{0,1\}^\ell \times \big(\{0,1\}^{nk}\big)^\ell,
       \end{eqnarray*}
  for $\mathbf{u}_{j_1,\ldots, j_{\ell-1},\bar{j_\ell}}, \ldots, \mathbf{u}_{j_1,\bar{j_2}}, \mathbf{u}_{\bar{j_1}}$ computed by algorithm $\mathsf{TAcc}_{\mathbf{A}}(R)$.
  \smallskip
  \item[$\mathsf{TVerify}_{\mathbf{A}}\big(\mathbf{u}, \mathbf{d}, w\big).$] Let the given witness $w$ be of the form:
  \[
  w = \big((j_1,\ldots, j_\ell),(\mathbf{w}_\ell, \ldots, \mathbf{w}_1)\big) \in \{0,1\}^\ell \times \big(\{0,1\}^{nk}\big)^\ell.
  \]
  The algorithm recursively computes the path $\mathbf{v}_\ell, \mathbf{v}_{\ell-1}, \ldots, \mathbf{v}_1, \mathbf{v}_0 \in \{0,1\}^{nk}$ as follows: $\mathbf{v}_\ell = \mathbf{d}$ and
  \begin{eqnarray}\label{equation:path-update}
  \forall \hspace*{1pt}i \in  \{\ell-1, \ldots, 1, 0\}: \hspace*{2pt}
    \mathbf{v}_i = \begin{cases}
                      h_{\mathbf{A}}(\mathbf{v}_{i+1}, \mathbf{w}_{i+1}), \text{ if } j_{i+1}=0; \\
                      h_{\mathbf{A}}(\mathbf{w}_{i+1}, \mathbf{v}_{i+1}), \text{ if } j_{i+1}=1.
                 \end{cases}
  \end{eqnarray}
  Then it returns $1$ if $\mathbf{v}_0 = \mathbf{u}$. Otherwise, it returns $0$.
\end{description}
The following lemma states the correctness and security of the above Merkle tree accumulator.
\begin{lemma}[\cite{LLNW16}]\label{acc-sec}
The given accumulator scheme is correct and is secure in the sense of Definition~\ref{definition:security-of-acc}, assuming the hardness of the $\mathsf{SIS}_{n,m,q,1}^\infty$ problem.
\end{lemma}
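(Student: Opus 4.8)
The plan is to prove correctness and security separately, with the security part reducing to collision-resistance of the hash family $\mathcal{H}$, which in turn rests on $\mathsf{SIS}_{n,m,q,1}^{\infty}$.

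\textbf{Correctness.} First I would argue that for any $R = \{\mathbf{d}_0,\ldots,\mathbf{d}_{N-1}\}$ and any $\mathbf{d} = \mathbf{d}_j \in R$ with $\mathsf{bin}(j) = (j_1,\ldots,j_\ell)$, running $\mathsf{TVerify}_{\mathbf{A}}$ on $\mathbf{u} = \mathsf{TAcc}_{\mathbf{A}}(R)$ and the pair $(\mathbf{d}, w)$ with $w = \mathsf{TWitness}_{\mathbf{A}}(R,\mathbf{d})$ returns $1$. The key observation is that the sequence $\mathbf{v}_\ell,\ldots,\mathbf{v}_0$ recomputed by $\mathsf{TVerify}$ coincides with the nodes $\mathbf{u}_{j_1,\ldots,j_\ell},\mathbf{u}_{j_1,\ldots,j_{\ell-1}},\ldots,\mathbf{u}_{j_1},\mathbf{u}$ along the path from leaf $j$ to the root in the tree built by $\mathsf{TAcc}$. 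I would show this by downward induction on $i$ from $\ell$ to $0$: the base case $\mathbf{v}_\ell = \mathbf{d} = \mathbf{u}_{j_1,\ldots,j_\ell}$ is immediate, and the inductive step is exactly the case distinction in~\eqref{equation:path-update}, since the witness component $\mathbf{w}_{i+1}$ was defined in $\mathsf{TWitness}$ to be the sibling $\mathbf{u}_{j_1,\ldots,j_i,\overline{j_{i+1}}}$, so $h_{\mathbf{A}}$ is applied to the two children of $\mathbf{u}_{j_1,\ldots,j_i}$ in the correct left/right order dictated by $j_{i+1}$. Hence $\mathbf{v}_0 = \mathbf{u}$ and $\mathsf{TVerify}$ outputs $1$.

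\textbf{Security.} For the security property of Definition~\ref{definition:security-of-acc}, I would first record the collision-resistance of $\mathcal{H}$: given $\mathbf{A} = [\mathbf{A}_0 \mid \mathbf{A}_1] \xleftarrow{\$} \mathbb{Z}_q^{n\times m}$, any two distinct pairs $(\mathbf{u}_0,\mathbf{u}_1), (\mathbf{v}_0,\mathbf{v}_1) \in (\{0,1\}^{nk})^2$ with $h_{\mathbf{A}}(\mathbf{u}_0,\mathbf{u}_1) = h_{\mathbf{A}}(\mathbf{v}_0,\mathbf{v}_1)$ yield, via the identity $h_{\mathbf{A}}(\mathbf{u}_0,\mathbf{u}_1) = \mathbf{u} \Leftrightarrow \mathbf{A}_0\mathbf{u}_0 + \mathbf{A}_1\mathbf{u}_1 = \mathbf{G}\mathbf{u} \bmod q$, a nonzero vector $\mathbf{z} = (\mathbf{u}_0 - \mathbf{v}_0 \,\|\, \mathbf{u}_1 - \mathbf{v}_1) \in \{-1,0,1\}^m$ with $\mathbf{A}\mathbf{z} = \mathbf{0} \bmod q$, i.e.\ a solution to $\mathsf{SIS}_{n,m,q,1}^{\infty}$. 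Then, given a PPT adversary $\mathcal{A}$ that outputs $(R, \mathbf{d}^*, w^*)$ with $\mathbf{d}^* \notin R$ but $\mathsf{TVerify}_{\mathbf{A}}(\mathsf{TAcc}_{\mathbf{A}}(R), \mathbf{d}^*, w^*) = 1$, I would extract a collision as follows: write $w^* = ((j_1,\ldots,j_\ell),(\mathbf{w}_\ell,\ldots,\mathbf{w}_1))$, let $\mathbf{v}_\ell = \mathbf{d}^*, \mathbf{v}_{\ell-1},\ldots,\mathbf{v}_0 = \mathbf{u}$ be the path recomputed by $\mathsf{TVerify}$, and let $\mathbf{u}_{j_1,\ldots,j_\ell},\ldots,\mathbf{u} $ be the genuine path from leaf $j = \sum j_i 2^{\ell-i}$ to the root in the tree for $R$. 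Since both paths end at the same root $\mathbf{u}$ but start at different leaves ($\mathbf{v}_\ell = \mathbf{d}^* \neq \mathbf{d}_j = \mathbf{u}_{j_1,\ldots,j_\ell}$, using $\mathbf{d}^* \notin R$), there is a largest depth $i$ at which $\mathbf{v}_i \neq \mathbf{u}_{j_1,\ldots,j_i}$ while $\mathbf{v}_{i-1} = \mathbf{u}_{j_1,\ldots,j_{i-1}}$; at that depth the two (ordered, using $j_i$) input pairs to $h_{\mathbf{A}}$ differ yet hash to the same value, giving a collision and hence an $\mathsf{SIS}$ solution. The reduction forwards its $\mathsf{SIS}$ challenge matrix $\mathbf{A}$ as $\ms{pp}$, so its success probability equals $\mathcal{A}$'s, which must therefore be negligible.

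\textbf{Main obstacle.} The bulk of the work — and the only genuinely delicate point — is the extraction step in the security proof: one must carefully locate the first depth (scanning from the leaves toward the root) at which the adversary's recomputed path diverges from the honest path, and verify that at that node the two ordered pairs of $h_{\mathbf{A}}$-preimages are genuinely distinct (so that a bona fide collision, not a trivial equality, is obtained). This is a standard Merkle-tree pigeonhole argument, but it requires being precise about the left/right ordering governed by the bits $j_i$ and about why $\mathbf{d}^* \notin R$ forces the leaves to differ. Everything else — the inductive correctness argument and the reduction bookkeeping — is routine.
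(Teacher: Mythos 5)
Your proof is correct and takes essentially the same route as the paper's (inherited from~\cite{LLNW16}): correctness by the straightforward induction showing $\mathsf{TVerify}$ recomputes the honest path, and security by reducing a forgery to a collision for $h_{\mathbf{A}}$ at the node where the forged path $\mathbf{v}_\ell,\ldots,\mathbf{v}_0$ first meets the genuine path from leaf $j^*$ to the root, which in turn yields a nonzero $\mathbf{z}=(\mathbf{u}_0-\mathbf{v}_0\,\|\,\mathbf{u}_1-\mathbf{v}_1)\in\{-1,0,1\}^m$ with $\mathbf{A}\cdot\mathbf{z}=\mathbf{0}\bmod q$, i.e.\ an $\mathsf{SIS}_{n,m,q,1}^{\infty}$ solution. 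The only cosmetic difference is that the paper picks the \emph{smallest} index $k$ with $\mathbf{v}_k\neq\mathbf{u}_{j_1,\ldots,j_k}$ (which automatically agrees at depth $k-1$), whereas you pick a divergence point by a slightly different but equivalent criterion; either choice yields the collision.
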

\subsection{An Efficient Updating Algorithm}\label{subsection:updating-trees}
Unlike the static group signature scheme from~\cite{LLNW16}, our fully dynamic construction of Section~\ref{section:main-scheme} requires to regularly edit the accumulated values without having to reconstruct the whole tree. To this end, we equip the Merkle tree accumulator from~\cite{LLNW16} with a simple, yet efficient, updating algorithm: to change the value at a given leaf, we simply modify all values in the path from that leaf to the root. The algorithm, which takes as input a bit string $\mathsf{bin}(j) = (j_1,j_2,\ldots,j_\ell)$ and a value $\mathbf{d}^* \in \{0,1\}^{nk}$, is formally described below.

Given the tree in Section~\ref{subsection:LLNW-tree}, algorithm $\ms{TUpdate}_{\mb{A}}((j_1,j_2,\ldots,j_\ell),\mb{d}^*)$ performs the following steps:
    \begin{enumerate}
        \item Let $\mathbf{d}_j$ be the current value at the leaf of position determined by $\mathsf{bin}(j)$, and let
        $((j_1,\ldots, j_\ell),(\mathbf{w}_{j,\ell}, \ldots, \mathbf{w}_{j,1}))$ be its associated witness. \smallskip
        \item Set $\mathbf{v}_{\ell}: = \mathbf{d}^*$ and recursively compute the path $\mb{v}_{\ell}, \mb{v}_{\ell-1},\ldots,\mb{v}_{1},\mb{v}_{0}\in\{0,1\}^{nk}$ as in~(\ref{equation:path-update}).
        \item Set $\mb{u}:=\mb{v}_{0}$; $\mb{u}_{j_1}:=\mb{v}_{1}; \ldots;  \mb{u}_{j_1,j_2,\ldots,j_{\ell-1}}:=\mb{v}_{\ell-1}$; $\mb{u}_{j_1,j_2,\ldots,j_{\ell}}:=\mb{v}_{\ell}=\mb{d}^*$.
    \end{enumerate}
It can be seen that the provided algorithm runs in time $\mathcal{O}(\ell)= \mathcal{O}(\log N)$. In Fig.~\ref{tree-illustration}, we give an illustrative example of a tree with $2^3=8$ leaves.
%\begin{comment}
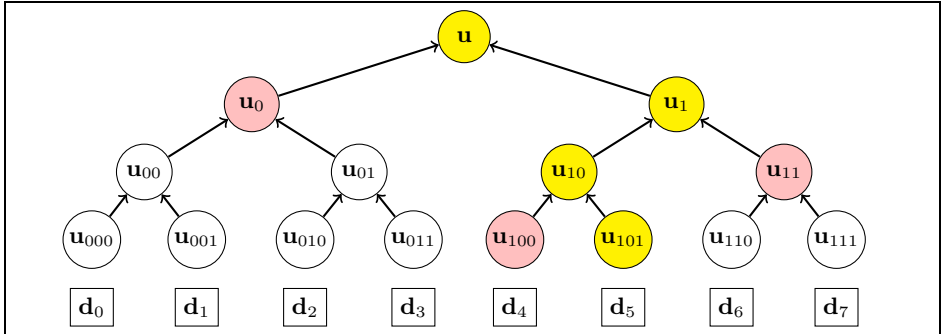
\begin{figure}
  \centering
  \begin{tikzpicture}[scale=0.9]
  %root
  \node[circle,draw, inner sep=4.5pt, fill=yellow] (root) at  (0,0) {$\mathbf{u}$};
  %\draw[line width =0.5] (0,0) circle [radius=10pt] node at (0,0) {$\mathbf{u}$};

  %leaves
  \node[circle,draw, inner sep=1.5pt] (000) at (-5.5,-3) {$\mathbf{u}_{000}$};
  \node[circle,draw, inner sep=1.5pt] (111) at (5.5,-3) {$\mathbf{u}_{111}$};

  \node[circle,draw, inner sep=1.5pt] (011) at (-0.75,-3) {$\mathbf{u}_{011}$};
  \node[circle,draw, inner sep=1.5pt, fill=pink] (100) at (0.75,-3) {$\mathbf{u}_{100}$};

  \node[circle,draw, inner sep=1.5pt] (010) at (-2.35,-3) {$\mathbf{u}_{010}$};
  \node[circle,draw, inner sep=1.5pt, fill=yellow] (101) at (2.35,-3) {$\mathbf{u}_{101}$};

  \node[circle,draw, inner sep=1.5pt] (001) at (-3.95,-3) {$\mathbf{u}_{001}$};
  \node[circle,draw, inner sep=1.5pt] (110) at (3.95,-3) {$\mathbf{u}_{110}$};

  %data
  \node[rectangle,draw, minimum size=0.5cm] (x000) at (-5.5,-4) {$\mathbf{d}_{0}$};
  \node[rectangle,draw, minimum size=0.5cm] (x111) at (5.5,-4) {$\mathbf{d}_{7}$};
  \node[rectangle,draw, minimum size=0.5cm] (x011) at (-0.75,-4) {$\mathbf{d}_{3}$};
  \node[rectangle,draw, minimum size=0.5cm] (x100) at (0.75,-4) {$\mathbf{d}_{4}$};

  \node[rectangle,draw, minimum size=0.5cm] (x010) at (-2.35,-4) {$\mathbf{d}_{2}$};
  \node[rectangle,draw, minimum size=0.5cm] (x101) at (2.35,-4) {$\mathbf{d}_{5}$};

  \node[rectangle,draw, minimum size=0.5cm] (x001) at (-3.95,-4) {$\mathbf{d}_{1}$};
  \node[rectangle,draw, minimum size=0.5cm] (x110) at (3.95,-4) {$\mathbf{d}_{6}$};

  %depth=2
  \node[circle,draw, inner sep=2.5pt] (00) at (-4.725,-2) {$\mathbf{u}_{00}$};
  \node[circle,draw, inner sep=2.5pt, fill=pink] (11) at (4.725,-2) {$\mathbf{u}_{11}$};

  \node[circle,draw, inner sep=2.5pt] (01) at (-1.55,-2) {$\mathbf{u}_{01}$};
  \node[circle,draw, inner sep=2.5pt, fill=yellow] (10) at (1.55,-2) {$\mathbf{u}_{10}$};

  %depth=1
  \node[circle,draw, inner sep=3.5pt, fill=pink] (0) at (-3.1375,-1) {$\mathbf{u}_{0}$};
  \node[circle,draw, inner sep=3.5pt, fill=yellow] (1) at (3.1375,-1) {$\mathbf{u}_{1}$};

  %\draw[line width =0.5] (-3.1375,-1) circle [radius=10pt] node at (-3.1375,-1) {$\mathbf{u}_{0}$};
  %\draw[line width =0.5] (3.1375,-1) circle [radius=10pt] node at (3.1375,-1) {$\mathbf{u}_{1}$};

  %edges
  \draw [->, thick] (0) edge (root);
  \draw [->, thick] (1) edge (root);

  \draw [->, thick] (00) edge (0);
  \draw [->, thick] (01) edge (0);
  \draw [->, thick] (10) edge (1);
  \draw [->, thick] (11) edge (1);

  \draw [->, thick] (000) edge (00);
  \draw [->, thick] (001) edge (00);
  \draw [->, thick] (010) edge (01);
  \draw [->, thick] (011) edge (01);
  \draw [->, thick] (100) edge (10);
  \draw [->, thick] (101) edge (10);
  \draw [->, thick] (110) edge (11);
  \draw [->, thick] (111) edge (11);

 % \draw [->] (x000) edge (000);
 % \draw [->] (x001) edge (001);
 % \draw [->] (x010) edge (010);
 % \draw [->] (x011) edge (011);
 % \draw [->] (x100) edge (100);
 % \draw [->] (x101) edge (101);
 % \draw [->] (x110) edge (110);
 % \draw [->] (x111) edge (111);

  %\draw [line width=0.5] (-7, 0.7) rectangle (-1.15, -3.5) node[scale=0.8] at (-5, -1.75) {$\mathbf{A\cdot x = u} \pmod q$} node[scale=0.8] at (-5, -2.25) {$\mathbf{x} \in \mathsf{Secret}_\beta(d)$} node[scale=0.8] at (-5, -2.75) {$d = (1\ldots 0) \in \{0,1\}^\ell$};
  \end{tikzpicture}
  \caption{A Merkle tree with $2^3= 8$ leaves, which accumulates the data blocks $\mathbf{d}_0, \ldots, \mathbf{d}_7$ into the value $\mathbf{u}$ at the root. The bit string $(101)$ and the pink nodes form a witness to the fact that $\mathbf{d}_5$ is accumulated in $\mathbf{u}$. If we replace $\mb{d}_5$ by a new value $\mb{d}^*$, we only need to update the yellow nodes.  }\label{tree-illustration}
\end{figure}

\section{Our Fully Dynamic Group Signatures from Lattices}\label{section:main-scheme}

In this section, we construct our lattice-based fully dynamic group signature and prove its security in Bootle et al.'s  model~\cite{BCCGG16}. %As such a scheme implies a secure scheme in the BMW model~\cite{BMW03}, a natural approach is to begin with a good lattice-based static group signature and investigate how to make it work in the fully dynamic setting.
We start with the LLNW scheme~\cite{LLNW16}, which works in the static setting.

While other constructions of lattice-based group signatures employ trapdoor-based ordinary signature schemes (e.g., \cite{Boy10,CHKP10}) to certify users, the LLNW scheme relies on a \textsf{SIS}-based Merkle tree accumulator which we recalled in Section~\ref{subsection:LLNW-tree}. The \textsf{GM}, who manages a group of $N= 2^\ell$ users, chooses uniformly random vectors $\mathbf{x}_0, \ldots, \mathbf{x}_{N-1} \in \{0,1\}^m$;  hashes them to $\mathbf{p}_0, \ldots, \mathbf{p}_{N-1} \in \{0,1\}^{nk}$, respectively; builds a tree on top of these hash values; and lets the tree root $\mathbf{u} \in \{0,1\}^{nk}$ be part of the group public key. The signing key of user $i$ consists of $\mathbf{x}_i$ and the witness for the fact that $\mathbf{p}_i$ was accumulated in $\mathbf{u}$. When generating group signatures, the user proves knowledge of a valid pair $(\mathbf{x}_i, \mathbf{p}_i)$ and of the tree path from $\mathbf{p}_i$ to $\mathbf{u}$. The user also has to encrypt the binary representation $\mathsf{bin}(i)$ of his identity~$i$, and prove that the ciphertext is well-formed. The encryption layer utilizes the Naor-Yung double-encryption paradigm~\cite{NY90} with Regev's \textsf{LWE}-based cryptosystem, and thus, it is also lattice-trapdoor-free.

To upgrade the LLNW scheme directly into a fully dynamic group signature, some tweaks and new ideas are needed. First, to enable the non-frameability feature, we let the user compute the pair $(\mathbf{x}_i, \mathbf{p}_i)$ on his own.
The second problem we have to consider is that Merkle hash trees seem to be a static primitive. To this end, we equip the accumulator with an efficient updating algorithm (see Section~\ref{subsection:updating-trees}). Now, the challenging question is how to handle user enrollment and revocation in a simple manner (i.e., without having to reset the whole tree).
To tackle these issues, we associate each of the $N$ potential users with a leaf in the tree, and then use the updating algorithm to set up the system so that:
\begin{enumerate}
%\item  Each of the $N$ potential users is associated with a leaf in the tree;
\item  If a user has not joined the group or has been revoked, the value at the leaf associated with him is set as~$\mathbf{0}$;
\item  When a user joins the group, that value is set as his public key $\mathbf{p}_i$.
\end{enumerate}
Our setup guarantees that only active users (i.e., who has joined and has not been revoked at the given epoch) have their \emph{non-zero} public keys accumulated into the updated root. This effectively gives us a method to separate active users who can sign from those who cannot: when signing messages, the user proceeds as in the LLNW scheme, and is asked to additionally prove in \textsf{ZK} that $\mathbf{p}_i \neq \mathbf{0}$. %In other words, the seemingly big gap between being fully static and being fully dynamic has been reduced to a small difference!

At this point, the arising question is how to prove the inequality $\mathbf{p}_i \neq \mathbf{0}$ in the framework of the Stern-like~\cite{Ste96} protocol from~\cite{LLNW16}. One would naturally hope that this extra job could be done without losing too much in terms of efficiency. Here, the surprising and somewhat unexpected fact is that we can actually do it while \emph{gaining} efficiency, thanks to a technique originally proposed in~\cite{LNSW13}.

To begin with, let $\mathsf{B}_t^L$ denote the set of all vectors in $\{0,1\}^L$ having Hamming weight exactly $t$. In Stern-like protocols (see Section~\ref{subsection:Stern-background}), a common technique for proving in \textsf{ZK} the possession of $\mathbf{p} \in \{0,1\}^{nk}$ consists of appending $nk$ ``dummy'' entries to it to obtain $\mathbf{p}^* \in \mathsf{B}_{nk}^{2nk}$, and demonstrating to the verifier that a random permutation of $\mathbf{p}^*$ belongs to the ``target set'' $\mathsf{B}_{nk}^{2nk}$. This suffices to convince the verifier that the original vector $\mathbf{p}$ belongs to $\{0,1\}^{nk}$, while the latter cannot learn any additional information about $\mathbf{p}$, thanks to the randomness of the permutation. This extending-then-permuting technique was first proposed in~\cite{LNSW13}, and was extensively used in the underlying protocol of the LLNW scheme. Now, to address our question, we will employ a modified version of this technique, which was also initially suggested in~\cite{LNSW13}. Let us think of another ``target set'', so that it is possible to extend  $\mathbf{p} \in \{0,1\}^{nk}$ to an element of that set if and only if $\mathbf{p}$ is non-zero. That set is $\mathsf{B}_{nk}^{2nk-1}$. Indeed, the extended vector $\mathbf{p}^*$ belongs to $\mathsf{B}_{nk}^{2nk-1}$ if and only if the original vector has Hamming weight at least $nk- (nk-1)=1$, which means that it cannot be a zero-vector. When combining with the permuting step, this modification allows us to additionally prove the given inequality while working with smaller dimension. As a result, our fully dynamic scheme produces slightly shorter signatures than the original static scheme.

Finally, we remark that the fully dynamic setting requires a proof of correct opening, which boils down to proving correct decryption for Regev's encryption scheme. It involves modular linear equations with bounded-norm secrets, and can be easily handled using Stern-like techniques from~\cite{LNSW13,LLMNW16-dgs}.

\subsection{Description of the Scheme}\label{subsection:scheme-description}
Our scheme is described as follows.
\begin{description}
  \item[$\ms{GSetup}(\lambda)$.] On input security parameter $\lambda$, this algorithm specifies the following:
    \begin{itemize}
        \item An expected number of potential users $N=2^{\ell}=\mathsf{poly}(\lambda)$. \smallskip
        \item Dimension $n = \mathcal{O}(\lambda)$, prime modulus $q = \widetilde{\mathcal{O}}(n^{1.5})$, and $k=\lceil \log_2 q\rceil$.  These parameters implicitly determine the ``powers-of-$2$'' matrix $\mathbf{G} \in \mathbb{Z}_q^{n \times nk}$, as defined in Section~\ref{section:Merkle-trees}. \smallskip
            %$$\mb{G}=\mb{I}_n\otimes (1, 2, \ldots, 2^{k-1}) \in \mathbb{Z}_q^{n \times nk}.$$
        \item Matrix dimensions $m = 2nk$ for the hashing layer, and $m_E = 2(n+\ell)k$ for the encryption layer. \smallskip

        \item An integer $\beta = \sqrt{n}\cdot \omega(\log n)$, and a $\beta$-bounded noise distribution $\chi$. \smallskip
        \item A hash function $\mathcal{H}_{\mathsf{FS}}: \{0,1\}^* \rightarrow \{1,2,3\}^\kappa$, where $\kappa = \omega(\log \lambda)$, to be modelled as a random oracle in the Fiat-Shamir transformations~\cite{FS86}.
           \item Let $\mathsf{COM}: \{0,1\}^* \times \{0,1\}^m \rightarrow \mathbb{Z}_q^n$ be the string commitment scheme from~\cite{KTX08}, to be used in our zero-knowledge argument systems. %(These parameters will be used in constructing our zero-knowledge argument systems and making them non-interactive via the Fiat-Shamir transformation~\cite{FS86}.)
            \smallskip
        \item Uniformly random matrix $\mathbf{A} \in \mathbb{Z}_q^{n \times m}$. \smallskip
    \end{itemize}
    The algorithm outputs public parameters $$\ms{pp}=\{\lambda, N, n,q,k, m, m_E, \ell,\beta, \chi, \kappa, \mathcal{H}_{\mathsf{FS}}, \mathsf{COM},  \mathbf{A}\}.$$ %\medskip
  \item[$\langle\ms{GKgen_{GM}(pp)},\ms{GKgen_{TM}(pp)}\rangle$.] The group manager \textsf{GM} and the tracing manager \textsf{TM} initialize their keys and the public group information as follows.
    \begin{itemize}
        \item $\mathsf{GKgen_{GM}(pp)}.$ This algorithm samples $\mathsf{msk} \xleftarrow{\$} \{0,1\}^m$ and computes $\mathsf{mpk} = \mathbf{A}\cdot \mathsf{msk} \bmod q$, and outputs $(\mathsf{mpk}, \mathsf{msk})$.
            Here, we consider $\mathsf{mpk}$ as an identifier of the group managed by \textsf{GM} who has $\mathsf{mpk}$ as his public key. Furthermore, as in~\cite[Sec.~3.3,~full~version]{BCCGG16}, we assume that the group information board is visible to everyone, but can only be edited by a party knowing $\mathsf{msk}$. \smallskip
        \item $\mathsf{GKgen_{TM}(pp)}.$ This algorithm initializes the Naor-Yung double-encryption mechanism with the  $\ell$-bit version Regev encryption scheme. It first chooses $\mathbf{B} \xleftarrow{\$} \mathbb{Z}_q^{n \times m_E}$. For each $i \in \{ 1,2\}$, it samples $\mathbf{S}_i\xleftarrow{\$} \chi^{n \times \ell}$, $\mathbf{E}_i \hookleftarrow \chi^{\ell \times m_E}$, and computes $\mathbf{P}_i = \mathbf{S}_i^\top \hspace*{-2.5pt}\cdot\hspace*{-1.5pt} \mathbf{B} + \mathbf{E}_i \in \mathbb{Z}_q^{\ell \times m_E}$. Then, \textsf{TM} sets $\mathsf{tsk} = (\mathbf{S}_1, \mathbf{E}_1)$, and $\mathsf{tpk} = (\mathbf{B}, \mathbf{P}_1, \mathbf{P}_2)$. \smallskip
        \item \textsf{TM} sends $\mathsf{tpk}$ to \textsf{GM} who initializes the following: \smallskip
            \begin{itemize}
                %\item Group information $\mathsf{info}_{0}: = \emptyset$;
                \item Table $\mathbf{reg}: = (\mathbf{reg}[0][1], \mathbf{reg}[0][2], \ldots, \mathbf{reg}[N\hspace*{-2pt}-\hspace*{-2pt}1][1], \mathbf{reg}[N\hspace*{-2pt}-\hspace*{-2pt}1][2])$, where for each $i \in [0, N\hspace*{-1.5pt}-\hspace*{-1.5pt}1]$: $\mathbf{reg}[i][1] = \mathbf{0}^{nk}$ and $\mathbf{reg}[i][2] = 0$. Looking ahead, $\mathbf{reg}[i][1]$ will be used to record the public key of a registered user, while $\mathbf{reg}[i][2]$ stores the epoch at which the user joins.
                    %all entries of which are zero-vectors $\mathbf{0}^{nk}$;
                \item The Merkle tree $\mathcal{T}$ built on top of $\mathbf{reg}[0][1], \ldots, \mathbf{reg}[N\hspace*{-1.5pt}-\hspace*{-1.5pt}1][1]$. (Note that $\mathcal{T}$ is an all-zero tree at this stage, but it will be modified when a new user joins the group, or when \textsf{GM} computes the updated group information.)
                \item Counter of registered users $c: = 0$.
            \end{itemize}
        Then, \textsf{GM} outputs $\mathsf{gpk} = (\mathsf{pp}, \mathsf{mpk}, \mathsf{tpk})$ and announces the initial group information $\mathsf{info}  = \emptyset$. He keeps $\mathcal{T}$ and $c$ for himself.

    \end{itemize}
    \item [$\ms{UKgen}(\ms{pp})$.] Each potential group user samples $\mathsf{usk} = \mb{x} \xleftarrow{\$} \{0,1\}^{m}$, and computes $\mathsf{upk} = \mb{p}=\text{bin}(\mb{A}\cdot\mb{x})\bmod q \in \{0,1\}^{nk}$.
        %\smallskip

        Without loss of generality, we assume that every honestly generated $\mathsf{upk}$ is a non-zero vector. (Otherwise, the user would either pick $\mathbf{x} = \mathbf{0}$ or accidentally find a solution to the $\mathsf{SIS}_{n,m,q,1}^\infty$ problem associated with matrix $\mathbf{A}$ - both happen with negligible probability.)
    \item [$\langle \mathsf{Join}, \mathsf{Issue}\rangle$.] If the user with key pair $(\mathsf{upk}, \mathsf{usk})= (\mathbf{p}, \mathbf{x})$ requests to join the group at epoch $\tau$, he sends $\mathbf{p}$ to \textsf{GM}. If the latter accepts the request, then the two parties proceed as follows.
        \begin{enumerate}
        \item \textsf{GM} issues a member identifier for the user as $\mathsf{uid} = \mathsf{bin}(c) \in \{0,1\}^\ell$. The user then sets his long-term signing key as $\mathsf{gsk}[c] = (\mathsf{bin}(c), \mathbf{p}, \mathbf{x})$.
            \item \textsf{GM} performs the following updates:
            \begin{itemize}
                \item Update $\mathcal{T}$ by running algorithm $\ms{TUpdate}_{\mathbf{A}}(\mathsf{bin}(c), \mathbf{p})$. \smallskip
                \item Register the user to table $\mathbf{reg}$ as  $\mathbf{reg}[c][1]:= \mathbf{p}$; \hspace*{2.5pt} $\mathbf{reg}[c][2]:= \tau$.
                \item Increase the counter $c: = c+1$. \smallskip
            \end{itemize}
        \end{enumerate}
    \item[$\mathsf{GUpdate}(\mathsf{gpk}, \mathsf{msk}, \mathsf{info}_{\tau_{\rm current}}, \mathcal{S}, \mathbf{reg})$.]
    This algorithm is run by \textsf{GM} to update the group information while also advancing the epoch. It works as follows.
    \begin{enumerate}
       % \item If $\mathsf{info}_{\tau_{\rm current}} = \mathsf{info}_0$, i.e., the algorithm is executed for the first time, then \textsf{GM} builds the initial Merkle tree on top of entries of table $\mathbf{reg}$, by running algorithm $\mathsf{TAcc}_{\mathbf{A}}(\mathbf{reg}[0], \ldots, \mathbf{reg}[N-1])$. The tree being built is denoted as $\mathcal{T}_{\tau_{\rm current}}$.
        %    \smallskip
        \item\label{step:update-2} Let the set $\mathcal{S}$ contain the public keys of registered users to be revoked. If $\mathcal{S} = \emptyset$, then go to Step~\ref{step:update-3}.

            Otherwise, $\mathcal{S} = \{\mathbf{reg}[i_1][1], \ldots, \mathbf{reg}[i_r][1]\}$, for some $r \in [1, N]$ and some $i_1, \ldots, i_r \in [0,N\hspace*{-1.5pt}-\hspace*{-1.5pt}1]$. Then, for all $t \in [r]$, \textsf{GM} runs $\ms{TUpdate}_{\mathbf{A}}(\mathsf{bin}(i_t), \mathbf{0}^{nk})$ to update the tree $\mathcal{T}$.

            \smallskip

       \item \label{step:update-3} At this point, by construction, each of the zero leaves in the tree $\mathcal{T}$ corresponds to either a revoked user or a potential user who has not yet registered. In other words,        only active users who are allowed to sign in the new epoch $\tau_{\rm new}$ have their non-zero public keys, denoted by $\{\mathbf{p}_j\}_j$, accumulated in the root $\mathbf{u}_{\tau_{\rm new}}$ of the updated tree.

           For each $j$, let $w_j \in \{0,1\}^\ell \times (\{0,1\}^{nk})^\ell$ be the witness for the fact that $\mathbf{p}_j$ is accumulated in $\mathbf{u}_{\tau_{\rm new}}$. Then \textsf{GM} publishes the group information of the new epoch as:
           \[
            \mathsf{info}_{\tau_{\rm new}} = \big(\mathbf{u}_{\tau_{\rm new}}, \{w_j\}_j\big).
           \]
    \end{enumerate}
    We remark that the $\mathsf{info}_{\tau}$ outputted at each epoch by \textsf{GM} is technically not part of the verification key. Indeed, as we will describe below, in order to verify signatures bound to epoch~$\tau$, the verifiers only need to download the first component $\mathbf{u}_\tau$ of size $\widetilde{\mathcal{O}}(\lambda)$ bits. Meanwhile, each active signer only has to download the respective witness of size $\widetilde{\mathcal{O}}(\lambda)\cdot \ell$. \medskip

  \item \hspace*{-2.5pt}$\mathsf{Sign}(\mathsf{gpk},\mathsf{gsk}[j], \mathsf{info}_\tau, M)$\textbf{.}~ To sign message $M$ using the group information $\mathsf{info}_\tau$ at epoch $\tau$, the user possessing $\mathsf{gsk}[j] = (\mathsf{bin}(j), \mathbf{p}, \mathbf{x})$ first checks if $\mathsf{info}_\tau$ includes a witness containing $\mathsf{bin}(j)$. If this is not the case, return $\bot$. Otherwise, the user downloads $\mathbf{u}_\tau$ and the witness of the form $\big(\mathsf{bin}(j), (\mathbf{w}_\ell, \ldots, \mathbf{w}_1)\big)$ from $\mathsf{info}_\tau$, and proceeds as follows.
      \begin{enumerate}
            \item Encrypt vector $\mathsf{bin}(j) \in \{0,1\}^\ell$ twice using Regev's encryption scheme. Namely, for each $i \in \{1,2\}$, sample $\mathbf{r}_i \xleftarrow{\$} \{0,1\}^{m_E}$ and compute \begin{eqnarray*}
			\mathbf{c}_i &=& (\mathbf{c}_{i,1}, \mathbf{c}_{i,2}) \\ &=& \Bigl( \mathbf{B}\cdot \mathbf{r}_i \bmod q, ~
				\mathbf{P}_i\cdot \mathbf{r}_i + \big\lfloor \frac{q}{2} \big\rfloor\cdot \mathsf{bin}(j) \bmod q \Bigr) \in \mathbb{Z}_q^{n} \times \mathbb{Z}_q^\ell.
				\end{eqnarray*}

            \item Generate a \textsf{NIZKAoK} $\Pi_{\mathsf{gs}}$ to demonstrate the possession of a valid tuple
            \begin{eqnarray}\label{equation:tuple-zeta}
            \zeta= (\mathbf{x}, \mathbf{p}, \mathsf{bin}(j), \mathbf{w}_\ell, \ldots, \mathbf{w}_1, \mathbf{r}_1, \mathbf{r}_2)
            \end{eqnarray}
            such that:

          \begin{enumerate}[(i)]
                \item \label{sign:condition-1} $\mathsf{TVerify}_{\mathbf{A}}\big( \mathbf{u}_\tau, \mathbf{p}, \big(\mathsf{bin}(j), (\mathbf{w}_\ell, \ldots, \mathbf{w}_1)\big) \big) =1$
                and  $\mathbf{A}\cdot \mathbf{x} = \mathbf{G}\cdot \mathbf{p} \bmod q$;
                \item \label{sign:condition-2}$\mathbf{c}_1$ and $\mathbf{c}_2$ are both correct encryptions of $\mathsf{bin}(j)$ with randomness $\mathbf{r}_1$ and $\mathbf{r}_2$, respectively; \smallskip
                 \item \label{sign:condition-3}$\mathbf{p} \neq \mathbf{0}^{nk}$.
          \end{enumerate}
           Note that statements \ref{sign:condition-1} and \ref{sign:condition-2} were covered by the LLNW protocol~\cite{LLNW16}. Meanwhile, statement \ref{sign:condition-3} is handled using the technique described at the beginning of this Section. We thus obtain a Stern-like interactive zero-knowledge argument system which is a slight modification of the one from~\cite{LLNW16}.
           %The detailed description of the protocol is presented in Appendix~\ref{appendix:ZK}.
           %This is done by running the protocol in Section~\ref{subsection:ZK-main}.
            The detailed description of the protocol is presented in Section~\ref{subsection:ZK-main}.
           The protocol is repeated $\kappa = \omega(\log \lambda)$ times to achieve negligible soundness error  and   made non-interactive via the Fiat-Shamir heuristic as a triple $\Pi_{\mathsf{gs}}= (\{\mathrm{CMT}_i\}_{i=1}^\kappa, \mathrm{CH}, \{\mathrm{RSP}\}_{i=1}^\kappa)$, where
          \[
            \mathrm{CH} = \mathcal{H}_{\mathsf{FS}}\big(M, (\{\mathrm{CMT}_i\}_{i=1}^\kappa, \mathbf{A}, \mathbf{u}_\tau, \mathbf{B}, \mathbf{P}_1, \mathbf{P}_2, \mathbf{c}_{1}, \mathbf{c}_{2}\big) \in \{1,2,3\}^\kappa.
          \]
        \item Output the group signature
        \begin{eqnarray}\label{equation:GS-output}
        \Sigma = (\Pi_{\mathsf{gs}}, \mathbf{c}_1, \mathbf{c}_2).
        \end{eqnarray}
      \end{enumerate}
   \item [$\ms{Verify}(\ms{gpk},\ms{info}_{\tau},M,\Sigma)$.] This algorithm proceeds as follows: \smallskip
  \begin{enumerate}
  \item Download $\mathbf{u}_\tau \in \{0,1\}^{nk}$ from $\mathsf{info}_\tau$.
  \item Parse $\Sigma$ as $\Sigma = \big(\{\mathrm{CMT}_i\}_{i=1}^\kappa, (Ch_1, \ldots, Ch_\kappa), \{\mathrm{RSP}\}_{i=1}^\kappa, \mathbf{c}_1, \mathbf{c}_2\big)$.

      If $(Ch_1, \ldots, Ch_\kappa) \neq \mathcal{H}_{\mathsf{FS}}\big(M, (\{\mathrm{CMT}_i\}_{i=1}^\kappa, \mathbf{A}, \mathbf{u}_\tau, \mathbf{B}, \mathbf{P}_1, \mathbf{P}_2, \mathbf{c}_{1}, \mathbf{c}_{2}\big)$, then return~$0$.
  \item For each $i \in [\kappa]$, run the verification phase of the protocol in Section~\ref{subsection:ZK-main} to check the validity of $\mathrm{RSP}_i$ with respect to $\mathrm{CMT}_i$ and $Ch_i$. If any of the conditions does not hold, then return~$0$.
      \item Return $1$. \smallskip
  \end{enumerate}
\item [$\mathsf{Trace}(\mathsf{gpk}, \mathsf{tsk}, \mathsf{info}_\tau, \mathbf{reg}, M, \Sigma$).]
This algorithm parses $\mathsf{tsk}$ as  $(\mathbf{S}_1, \mathbf{E}_1)$, parses $\Sigma$ as in~(\ref{equation:GS-output}), and performs the following steps.
    \begin{enumerate}
        \item Use $\mathbf{S}_1$ to decrypt $\mathbf{c}_1 = (\mathbf{c}_{1,1}, \mathbf{c}_{1,2})$ to
            obtain a string $\mathbf{b}' \in \{0,1\}^\ell$ (i.e., by computing $\big\lfloor\frac{(\mathbf{c}_{1,2} - \mathbf{S}_1^\top\cdot \mathbf{c}_{1,1})}{\lfloor q/2\rfloor}\big\rceil$). \smallskip
        \item If $\mathsf{info}_\tau$ does not include a witness containing $\mathbf{b}'$, then return $\bot$.
        \item Let $j' \in [0,N-1]$ be the integer having binary representation $\mathsf{b}'$. If the record $\mathbf{reg}[j'][1]$ in table $\mathbf{reg}$ is $\mathbf{0}^{nk}$, then return $\bot$.
        \item Generate a \textsf{NIZKAoK} $\Pi_{\mathsf{trace}}$ to demonstrate the possession of $\mathbf{S}_1 \in \mathbb{Z}^{n \times \ell}$, $\mathbf{E}_1 \in \mathbb{Z}^{\ell \times m_E}$, and $\mathbf{y} \in \mathbb{Z}^\ell$, such that:
            \begin{eqnarray}\label{equation:pi-trace}
            \begin{cases}
                \| \mathbf{S}_1 \|_\infty \leq \beta; \hspace*{2.8pt}  \| \mathbf{E}_1 \|_\infty \leq \beta; \hspace*{2.8pt} \| \mathbf{y}\|_\infty \leq  \lceil q/5 \rceil; \\
                \mathbf{S}_1^\top \cdot \mathbf{B} + \mathbf{E}_1 = \mathbf{P}_1 \bmod q; \\
                \mathbf{c}_{1,2} - \mathbf{S}_1^\top\cdot \mathbf{c}_{1,1} = \mathbf{y} + \lfloor q/2 \rfloor\cdot \mathbf{b}' \bmod q.
            \end{cases}
            \end{eqnarray}
      As the statement involves  modular linear equations with bounded-norm secrets, we can obtain a statistical zero-knowledge argument by employing the Stern-like interactive protocol from~\cite{LLMNW16-dgs}.
        %with public input $(\mathbf{B}, \mathbf{P}_1, \mathbf{c}_{1,1}, \mathbf{c}_{1,2}, \mathbf{b}')$ and prover's witness $(\mathbf{S}_1, \mathbf{E}_1, \mathbf{y})$.
        For completeness, the detailed description of the protocol is presented in Section~\ref{subsection:correct-decryption}.
       %This is done by running the protocol in Section~\ref{subsection:correct-decryption}.
        The protocol is repeated $\kappa = \omega(\log \lambda)$ times to achieve negligible soundness error  and   made non-interactive via the Fiat-Shamir heuristic as a triple $\Pi_{\mathsf{trace}}= (\{\mathrm{CMT}_i\}_{i=1}^\kappa, \mathrm{CH}, \{\mathrm{RSP}\}_{i=1}^\kappa)$, where
          \begin{eqnarray}\label{equation:pi-trace-FS}
            \mathrm{CH} = \mathcal{H}_{\mathsf{FS}}\big((\{\mathrm{CMT}_i\}_{i=1}^\kappa, \mathsf{gpk}, \mathsf{info}_\tau, M, \Sigma, \mathbf{b}'\big) \in \{1,2,3\}^\kappa.
          \end{eqnarray}
        \item Set $\mathsf{uid} = \mathbf{b}'$ and output $(\mathsf{uid}, \Pi_{\mathsf{trace}})$. \smallskip
    \end{enumerate}
\item [$\ms{Judge}(\ms{gpk}, \mathsf{uid}, \ms{info}_{\tau}, \Pi_{\mathsf{trace}}, M, \Sigma)$.] This algorithm consists of verifying the argument $\Pi_{\mathsf{trace}}$ w.r.t. common input $(\mathsf{gpk}, \mathsf{info}_\tau, M, \Sigma, \mathsf{uid})$, in a similar manner as in algorithm $\mathsf{Verify}$.

    If $\Pi_{\mathsf{trace}}$ does not verify, return $0$. Otherwise, return $1$.
\end{description}

\subsection{Analysis of the Scheme}\label{subsection:analysis-of-scheme}

\noindent
{\sc Efficiency. }
We first analyze the efficiency of the scheme described in Section~\ref{subsection:scheme-description}, with respect to security parameter $\lambda$ and parameter $\ell = \log N$.
\begin{itemize}
\item The public key $\mathsf{gpk}$ contains several matrices, and has bit-size $\widetilde{\mathcal{O}}(\lambda^2 + \lambda\cdot \ell)$.\smallskip
\item For each $j \in [0,N-1]$, the signing key $\mathsf{gsk}[j]$ has bit-size $\ell + nk + m = \widetilde{\mathcal{O}}(\lambda)+ \ell$.
\item At each epoch, the signature verifiers downloads $nk = \widetilde{\mathcal{O}}(\lambda)$ bits, while each active signer downloads $\widetilde{\mathcal{O}}(\lambda\hspace*{-2pt}\cdot\hspace*{-2pt}\ell)$ bits.\smallskip
\item The size of signature $\Sigma$ is dominated by that of the Stern-like \textsf{NIZKAoK} $\Pi_{\sf gs}$, which is $\mathcal{O}(|\zeta|\cdot \log q)\cdot \omega(\log \lambda)$, where $|\zeta|$ denotes the bit-size of the witness-tuple $\zeta$ in~(\ref{equation:tuple-zeta}). Overall, $\Sigma$ has bit-size $\widetilde{\mathcal{O}}(\lambda\hspace*{-2pt}\cdot\hspace*{-2pt}\ell)$.\smallskip
\item The Stern-like \textsf{NIZKAoK} $\Pi_{\sf trace}$ has bit-size $\widetilde{\mathcal{O}}(\ell^2 + \lambda\cdot \ell)$.
\end{itemize}

\smallskip

\noindent
{\sc Correctness. }
We now demonstrate that the scheme is correct with overwhelming probability, based on the perfect completeness of Stern-like protocols, and the correctness of Regev's encryption scheme.

First, note that a signature $\Sigma=(\Pi_{\ms{gs}},\mb{c}_1,\mb{c}_2)$ generated by an active and honest user $j$ is always accepted by algorithm $\ms{Verify}$. Indeed, such a user can always compute a tuple $\zeta=(\mb{x},\mb{p},\ms{bin}(j),\mb{w}_\ell,\ldots,\mb{w}_1,\mb{r}_1,\mb{r}_2)$ satisfying conditions \ref{sign:condition-1},\ref{sign:condition-2} and \ref{sign:condition-3} in the $\ms{Sign}$ algorithm. %Let the signature be $\Sigma=(\Pi_{\ms{gs}},\mb{c}_1,\mb{c}_2)$ and message be $M$ and user identity be $\ms{bin}(j)$. Since this user is honest and active at epoch $\tau$, then he possesses $\zeta=(\mb{x},\mb{p},\ms{bin}(j),\mb{w}_\ell,\ldots,\mb{w}_1,\mb{r}_1,\mb{r}_2)$ satisfying conditions \ref{sign:condition-1}, \ref{sign:condition-2},\ref{sign:condition-3} in the $\ms{Sign}$ algorithm.
%            \begin{enumerate}
%                \item $\mathsf{TVerify}_{\mathbf{A}}\big( \mathbf{u}_\tau, \mathbf{p}, \big(\mathsf{bin}(j), (\mathbf{w}_\ell, \ldots, \mathbf{w}_1)\big) \big) =1$
%                and  $\mathbf{A}\cdot \mathbf{x} = \mathbf{G}\cdot \mathbf{p} \bmod q$;  \smallskip
%                \item $\mathbf{p} \neq \mathbf{0}^{nk}$;        \smallskip
%                \item $\mathbf{c}_1$ and $\mathbf{c}_2$ are both correct encryptions of $\mathsf{bin}(j)$ with randomness $\mathbf{r}_1$ and $\mathbf{r}_2$, respectively.
%          \end{enumerate}
The completeness of the underlying argument system then guarantees that $\Sigma$ is always accepted by algorithm $\mathsf{Verify}$.

Next, we show that algorithm  $\ms{Trace}$ outputs $\ms{bin}(j)$ with overwhelming probability, and produces a proof $\Pi_{\ms{trace}}$ accepted by algorithm $\ms{Judge}$. Observe that, the decryption algorithm essentially computes
\begin{eqnarray*}
\mathbf{e}=\mathbf{c}_{1,2}-\mathbf{S}_1^T\mathbf{c}_{1,1}
%&=\mb{P}_1\cdot \mb{r}_1+\lceil q/2\rceil \mb{d}_i-\mb{S}_1^t\cdot \mb{B}\cdot \mb{r}_1\mod q\\
%&=(\mb{S}_1^t\cdot\mb{B}+\mb{E}_1)\cdot \mb{r}_1+\lceil q/2\rceil \mb{d}_i-\mb{S}_1^t\cdot \mb{B}\cdot \mb{r}_1\mod q\\
 =\mb{E}_1\cdot \mb{r}_1+\lfloor q/2\rfloor \cdot \mathsf{bin}(j)\mod q,
\end{eqnarray*} and sets the $j$-th bit of $\mb{b}'$ to be $0$ if $j$-th entry of $\mb{e}$ is closer to~$0$ than to $\lfloor q/2\rfloor$ and~$1$ otherwise. Note that our parameters are set so that $\|\mb{E}_1\cdot \mb{r}_1\|_{\infty}< q/5$, for $\mathbf{E}_1 \hookleftarrow \chi^{\ell \times m_E}$ and $\mathbf{r}_1 \xleftarrow{\$} \{0,1\}^{m_E}$. This ensures that $\mb{b'}=\ms{bin}(j)$ with overwhelming probability.

Further, as the user is active, $\ms{info}_{\tau}$ must contain $w=(\ms{bin}(j),\mb{w}_\ell,\ldots,\mb{w}_1)$ and $\mb{reg}[j][1]$ in table $\mb{reg}$ is not $ \mb{0}^{nk}$. Therefore, algorithm $\ms{Trace}$ will move to the $4$-th step, where it can always obtain the tuple $(\mb{S}_1,\mb{E}_1,\mb{y})$ satisfying the conditions in~(\ref{equation:pi-trace}).
By the completeness of the argument system, $\Pi_{\ms{trace}}$ will be accepted by the algorithm $\ms{Judge}$.

\smallskip

\noindent
{\sc Security. } In Theorem~\ref{theorem:main-scheme}, we prove that our scheme satisfies the security requirements of the Bootle et al.'s model~\cite{BCCGG16}.
\begin{theorem}\label{theorem:main-scheme}
Assume that the Stern-like argument systems used in Section~\ref{subsection:scheme-description} are simulation-sound. Then,
in the random oracle model, the given fully dynamic group signature satisfies the anonymity, traceability, non-frameability and tracing soundness requirements under the $\mathsf{LWE}_{n, q, \chi}$ and $\mathsf{SIS}_{n,m,q,1}^\infty$ assumptions.
\end{theorem}
In the random oracle model, the proof of Theorem~\ref{theorem:main-scheme} relies on the following facts:
\begin{enumerate}
\item The Stern-like zero-knowledge argument systems being used are simulation-sound;
\item The underlying encryption scheme, which is obtained from Regev cryptosystem~\cite{Regev05} via the Naor-Yung transformation~\cite{NY90}, is {IND}-{CCA2} secure;
\item The Merkle tree accumulator we employ is secure in the sense of Definition~\ref{definition:security-of-acc};
\item For a properly generated key-pair $(\mathbf{x}, \mathbf{p})$, it is infeasible to find $\mathbf{x}' \in \{0,1\}^m$ such that $\mathbf{x}' \neq \mathbf{x}$ and $\mathsf{bin}(\mathbf{A}\cdot \mathbf{x}' \bmod q) = \mathbf{p}$.
\end{enumerate}
%Details of the proof of Theorem~\ref{theorem:main-scheme} are deferred to Appendix~\ref{appendix:fdgs-security-proof}.
\begin{comment}
\section{Correctness and Security Requirements of Fully Dynamic Group Signatures}\label{appendix-security-fdgs-definition}
A fully dynamic group signature must satisfy \emph{correctness} and security requirements: \emph{anonymity}, \emph{non-frameability}, \emph{traceability} and \emph{tracing soundness}~\cite{BCCGG16}.

\input{detailed-definition-fdgs-model.tex}

\vspace{-0.3cm}
\end{comment}

%\section{Proof of Theorem~\ref{theorem:main-scheme} }\label{appendix:fdgs-security-proof}

The proof of Theorem~\ref{theorem:main-scheme} is established by Lemmas~\ref{lemma:anonymity}-\ref{lemma:tracing-soundness} given below.

\begin{lemma}\label{lemma:anonymity}
Assume that the $\mathsf{LWE}_{n, q, \chi}$ problem is hard. Then the given $\ms{FDGS}$ scheme provides anonymity in the random oracle model.
\end{lemma}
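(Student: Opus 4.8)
The plan is to prove anonymity via a sequence of hybrid games, starting from $\mb{Exp}_{\ms{FDGS},\mcA}^{\ms{anon-0}}$ and ending at $\mb{Exp}_{\ms{FDGS},\mcA}^{\ms{anon-1}}$, showing that consecutive games are indistinguishable. Recall that in the anonymity game the adversary $\mcA$ may choose $\ms{GM}$'s key maliciously, has access to $\ms{AddU}, \ms{CrptU}, \ms{RevealU}, \ms{SndToU}, \ms{Trace}, \ms{MReg}$ and one call to $\ms{Chal}_b$, which returns a signature on $M$ by $\ms{uid}_b$ at the chosen epoch $\tau$. The only place where the bit $b$ enters is the challenge signature $\Sigma^* = (\Pi_{\ms{gs}}^*, \mb{c}_1^*, \mb{c}_2^*)$, where $\mb{c}_1^*, \mb{c}_2^*$ are Regev encryptions of $\ms{bin}(\ms{uid}_b)$ and $\Pi_{\ms{gs}}^*$ is a Fiat--Shamir \textsf{NIZKAoK} for the corresponding statement. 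So the high-level strategy is the standard Naor--Yung argument: first replace $\Pi_{\ms{gs}}^*$ by a simulated proof (using the zero-knowledge simulator in the random oracle model, programming $\mathcal{H}_{\ms{FS}}$), then switch the plaintext underlying $\mb{c}_2^*$ from $\ms{bin}(\ms{uid}_0)$ to $\ms{bin}(\ms{uid}_1)$ using semantic security of the second Regev instance (which is safe because $\ms{tsk}$ only contains $(\mb{S}_1,\mb{E}_1)$, so the $\ms{Trace}$ oracle never needs $\mb{S}_2$), then switch $\mb{c}_1^*$ similarly, and finally undo the proof simulation.

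More concretely, I would structure the hybrids as follows. \textbf{Game $0$:} the real $\mb{Exp}^{\ms{anon-0}}$. \textbf{Game $1$:} same, but the challenger generates $\Pi_{\ms{gs}}^*$ using the \textsf{ZK} simulator, programming the random oracle $\mathcal{H}_{\ms{FS}}$ at the relevant point; this is statistically close to Game $0$ by the statistical zero-knowledge property of the Stern-like argument and the fact that a random-oracle collision occurs with negligible probability. \textbf{Game $2$:} answer $\ms{Trace}$ queries using $\mb{S}_1$ as before, but now note that correctness of answers no longer depends on $\mb{c}_2^*$ being a well-formed encryption of the same message as $\mb{c}_1^*$ — this is where simulation-soundness of $\Pi_{\ms{gs}}$ is used: for any valid signature the adversary submits to $\ms{Trace}$ (other than the challenge, which is disallowed), soundness guarantees $\mb{c}_1$ and $\mb{c}_2$ encrypt the same string, so decrypting with $\mb{S}_1$ gives the ``right'' answer; hence we may now safely change $\mb{c}_2^*$. \textbf{Game $3$:} replace $\mb{c}_2^*$, an encryption of $\ms{bin}(\ms{uid}_0)$ under $(\mb{B},\mb{P}_2)$, by an encryption of $\ms{bin}(\ms{uid}_1)$; indistinguishability follows from $\mathsf{LWE}_{n,q,\chi}$ (semantic security of Regev's scheme with key $(\mb{B},\mb{P}_2)$), and this reduction is legitimate precisely because the reduction does not possess $\mb{S}_2$ and does not need it to answer any oracle. \textbf{Game $4$:} symmetrically, using simulation-soundness again to ensure $\ms{Trace}$ answers stay consistent, switch the $\ms{Trace}$ oracle to decrypt with $\mb{S}_2$ instead of $\mb{S}_1$ (statistically identical by soundness), then replace $\mb{c}_1^*$ by an encryption of $\ms{bin}(\ms{uid}_1)$ under $(\mb{B},\mb{P}_1)$ via $\mathsf{LWE}$ again, then switch $\ms{Trace}$ back to $\mb{S}_1$. \textbf{Game $5$:} now both ciphertexts encrypt $\ms{bin}(\ms{uid}_1)$; reverse the proof simulation, i.e., generate $\Pi_{\ms{gs}}^*$ honestly from the real witness $\zeta$ (valid since $\mb{c}_1^*, \mb{c}_2^*$ are now honest encryptions of $\ms{bin}(\ms{uid}_1)$ and $\ms{uid}_1$ is active). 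Game $5$ is exactly $\mb{Exp}^{\ms{anon-1}}$, and the total advantage is bounded by a sum of: two statistical \textsf{ZK}/simulation terms, two $\mathsf{LWE}$ advantages, and negligible random-oracle terms.

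The main obstacle — and the step deserving the most care — is the use of \emph{simulation-soundness} to justify that the $\ms{Trace}$ oracle can be answered using only one of the two secret keys even after the challenge proof $\Pi_{\ms{gs}}^*$ has been simulated. Concretely, once we simulate $\Pi_{\ms{gs}}^*$ without a real witness, we must argue that any \emph{other} accepting signature $(\Pi_{\ms{gs}}, \mb{c}_1, \mb{c}_2)$ the adversary produces (on any message/epoch, with $(M,\Sigma,\tau) \notin \ms{CL}$) still has $\mb{c}_1$ and $\mb{c}_2$ encrypting the same $\ell$-bit string; otherwise decrypting $\mb{c}_1$ with $\mb{S}_1$ versus decrypting $\mb{c}_2$ with $\mb{S}_2$ could diverge and the hybrid switch would be detectable. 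This is exactly what simulation-soundness of the Fiat--Shamir \textsf{NIZKAoK} buys us: a valid proof — even in the presence of one simulated proof — implies the statement is true, in particular consistency of the two ciphertexts and that the decrypted identity indeed corresponds to an accumulated, registered leaf, so the $\ms{Trace}$ outputs match in both hybrids. One should also double-check the subtle point that $\ms{Trace}$ additionally re-runs its own \textsf{NIZKAoK} $\Pi_{\ms{trace}}$, which is honestly generated from $(\mb{S}_1, \mb{E}_1)$ throughout and is itself statistically \textsf{ZK}, so it leaks nothing about which key is ``really'' being used; this lets the $\mb{S}_1 \leftrightarrow \mb{S}_2$ swaps in Games $2$ and $4$ be purely internal and statistically invisible. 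Finally, I would note the mild assumption bookkeeping: the parameters already guarantee $\|\mb{E}_i \cdot \mb{r}_i\|_\infty < q/5$ so decryption is error-free, meaning the ``consistency'' argument is about exact equality of plaintexts rather than approximate, which keeps the hybrid transitions clean.
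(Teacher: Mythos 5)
Your overall strategy is the same Naor--Yung hybrid argument the paper uses: simulate the challenge proof $\Pi_{\ms{gs}}^*$ via the \textsf{ZK} simulator, invoke simulation-soundness so that the $\ms{Trace}$ oracle can be answered consistently with either secret key, switch each ciphertext in turn under $\mathsf{LWE}$ while the decryption oracle runs on the \emph{other} key, and finally undo the simulation. The ordering of which ciphertext is switched first is immaterial, and your identification of simulation-soundness as the crux is exactly right.

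There is, however, one step that fails as written: you assert that $\Pi_{\ms{trace}}$ ``is honestly generated from $(\mb{S}_1,\mb{E}_1)$ throughout.'' This cannot hold in the hybrid where $\mb{c}_1^*$ is switched under the $\mathsf{LWE}$ instance $(\mb{B},\mb{P}_1)$: in that reduction the challenger receives $(\mb{B},\mb{P}_1)$ from the $\mathsf{LWE}$ challenger and therefore does \emph{not} possess the witness $(\mb{S}_1,\mb{E}_1)$ needed to honestly produce $\Pi_{\ms{trace}}$, whose statement includes $\mb{S}_1^\top\cdot\mb{B}+\mb{E}_1=\mb{P}_1 \bmod q$. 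Decrypting with $\mb{S}_2$ handles the \emph{identification} part of $\ms{Trace}$, but not the \emph{proof} part. This is precisely why the paper's game sequence inserts an early hybrid (its Game~2) in which $\Pi_{\ms{trace}}$ is replaced by a simulated proof via programming $\mathcal{H}_{\ms{FS}}$ --- justified by statistical zero knowledge of that argument system --- and only restores the real proof at the very end (its Game~9). Your proof needs the same two extra hybrids; the ingredient (statistical \textsf{ZK} of $\Pi_{\ms{trace}}$) is one you already mention, but you deploy it only to argue non-leakage of the key in use, whereas the actual issue is that the reduction cannot generate the proof at all without simulating it.
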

\begin{proof}
%Recall $\mb{Exp}_{\ms{FDGS},\mc{A}}^{\ms{anon}-0}(\lambda)$ generates the signature using $\ms{gsk}[i_0]$. Specifically, it means $\mb{c}_1\mb{c}_2$ encrypt $\ms{bin}(i_0)$.
 Let $\mc{A},\mc{B}$ be PPT algorithms, act as the adversary and challenger in the games respectively. %and $\mathcal{B}$ simulates all the oracles accessible to $\mc{A}$. %Let $(\ms{info}_{\tau},\ms{bin}(i_0),\ms{bin}(i_1),m)$ be the inputs of the $\ms{Chal}$ oracle.
 We construct a sequence of indistinguishable games, in which the first game is the experiment $\mb{Exp}_{\ms{FDGS},\mc{A}}^{\ms{anon}-0}(\lambda)$ and the last one is the experiment $\mb{Exp}_{\ms{FDGS},\mc{A}}^{\ms{anon}-1}(\lambda)$.  We will prove the lemma by demonstrating that any two consecutive games are indistinguishable.
 For each $i$, we denote by $W_i$ the output of the adversary in game $i$.

Specifically, we consider the following games.
\begin{description}
\item {\bf{Game}} $0$: This is exactly the experiment $\mb{Exp}_{\ms{FDGS},\mc{A}}^{\ms{anon}-0}(\lambda)$. \smallskip
\item{\bf{Game}} $1$: This game is the same as Game $0$ with only one modification: add $(\mb{S}_2,\mb{E}_2)$ to $\ms{tsk}$. This will make no difference to the view of the adversary. So $\text{Pr}[W_1=1]=\text{Pr}[\mb{Exp}_{\ms{FDGS},\mc{A}}^{\ms{anon}-0}(\lambda)=1]$.
%Instead of discarding secret keys $(\mb{S}_2,\mb{E}_2)$ after generating the tracing public key $\ms{tpk}$, the challenger will keep it. This will make no difference to the view of the adversary. So $\text{Pr}[W_1=1]=\text{Pr}[\mb{Exp}_{\ms{FDGS},\mc{A}}^{\ms{anon}-0}(\lambda)=1]$.
\smallskip

\item {\bf{Game}} $2$: This game is identical to Game~$1$ except that it generates a simulated proof for the oracle $\ms{Trace}$ by programming the random oracle $\mc{H}_{\ms{FS}}$ even though the challenger $\mc{B}$ has correct witnesses to generate a real proof. The view of the adversary, however, is statistically indistinguishable between Game $1$ and Game $2$ by statistical zero-knowledge property of our argument system generating $\Pi_{\ms{trace}}$. Therefore we have $\text{Pr}[W_1=1]\approx\text{Pr}[W_2=1]$.

\smallskip

\item {\bf{Game}} $3$:  This game uses $\mb{S}_2$ instead of $\mb{S}_1$ to simulate the oracle $\ms{Trace}$. The view of $\mc{A}$ is the same as in Game~$2$ until event $F_1$, where $\mc{A}$ queries the $\ms{trace}$ oracle a valid signature $(M,\Pi_{\ms{gs}},\mb{c}_1,\mb{c}_2,\ms{info}_\tau)$ with $\mb{c}_1,\mb{c}_2$ encrypting distinct bit strings, happens. Since $F_1$ breaks the soundness of our argument generating $\Pi_{\ms{gs}}$, we have $|\text{Pr}[W_2=1]-\text{Pr}[W_3=1]|\leq \text{Pr}[F_1]\leq \ms{Adv}_{\Pi_{\ms{gs}}}^{\ms{sound}}(\lambda)=\ms{negl}(\lambda)$.

\smallskip

\item{\bf{Game}} $4$: This game generates a simulated proof for the oracle $\ms{Chal}$. The view is indistinguishable to $\mc{A}$ by statistical zero-knowledge property of our argument system generating $\Pi_{\ms{gs}}$. Therefore we have $\text{Pr}[W_3=1]\approx\text{Pr}[W_4=1]$.
%This game is identical to Game~$3$ with one difference. In the $\ms{Chal}$ oracle it generates a simulated proof by programming the random oracle $\mc{H}_{\ms{FS}}$ while in Game~$3$ it generates a real proof. We say the view is indistinguishable to adversary by the zero-knowledge property of our argument system for generating $\Pi_{\ms{gs}}$. Therefore we have $\text{Pr}[W_3=1]\approx\text{Pr}[W_4=1]$.

\smallskip

\item{\bf{Game}} $5$: This is the same as Game~$4$ except that $\mb{c}_1$ is now encryption of $\ms{bin}(i_1)$ while $\mb{c}_2$ is still encryption of $\ms{bin}(i_0)$ for the $\ms{Chal}$ query. By the semantic security of our encryption scheme for public key $(\mb{B},\mb{P}_1)$ (which relies on $\ms{LWE}$ assumption), this change is negligible to the adversary. Recall that we use $\mb{S}_2$ for the $\ms{Trace}$ queries. So the change of $\mb{c}_1$ makes no difference to the view of the adversary. Therefore we have $|\text{Pr}[W_4=1]-\text{Pr}[W_5=1]|=\ms{negl}(\lambda)$.

\smallskip

\item{\bf{Game}} $6$: This game follows Game~$5$ with one change: it switches back to use $\mb{S}_1$ for the $\ms{Trace}$ queries and discards $(\mb{S}_2,\mb{E}_2)$. %That is to say, in Game~$5$ and Game~$6$, the adversary receives simulated proofs $\Pi_{\ms{gs}}$ and $\Pi_{\ms{trace}}$ with $\mb{c}_1,\mb{c}_2$ encrypting different user identities in the $\ms{Chal}$ oracle. The only difference is that whether we use $\mb{S}_1$ or $\mb{S}_2$ to open in $\ms{Trace}$ oracle.
    This modification is indistinguishable to the adversary until event $F_2$, where $\mc{A}$ queries the $\ms{trace}$ oracle a valid signature $(M,\Pi_{\ms{gs}},\mb{c}_1,\mb{c}_2,\ms{info}_\tau)$ with $\mb{c}_1,\mb{c}_2$ encrypting different bit strings, happens. However event $F_2$ breaks the simulation soundness of the argument system generating $\Pi_{\ms{gs}}$. Therefore we have $|\text{Pr}[W_5=~1]-\text{Pr}[W_6=~1]|\leq \ms{Adv}_{\Pi_{\ms{gs}}}^{\ms{ss}}(\lambda)=\ms{negl}(\lambda)$.

\smallskip

\item{\bf{Game}} $7$: In this game, it changes $\mb{c}_2$ to be encryption of $\ms{bin}(i_1)$ for the $\ms{Chal}$ query while the remaining parts are the same as in Game $6$. By the semantic security of our encryption scheme for public key $(\mb{B},\mb{P}_2)$, this change is negligible to the adversary. Note that we now use $\mb{S}_1$ for the $\ms{Trace}$ queries. So changing $\mb{c}_2$ makes no difference to the view of the adversary. Therefore we have $|\text{Pr}[W_6=1]-\text{Pr}[W_7=1]|=\ms{negl}(\lambda)$.

\smallskip

\item{\bf{Game}} $8$: We now generate real proof for the $\ms{Chal}$ query instead of simulated proof. By the statistical zero-knowledge property of our argument system generating $\Pi_{\ms{gs}}$, the view of the adversary in Game~$7$ and Game~$8$ are statistically indistinguishable, i.e., we have $\text{Pr}[W_7=~1]\approx\text{Pr}[W_8=~1]$.

\smallskip

\item{\bf{Game}} $9$: This game produces real proof, for the $\ms{Trace}$ queries. By the statistical zero-knowledge property of our argument system generating $\Pi_{\ms{trace}}$, we have Game~$8$ and Game~$9$  are statistically indistinguishable to the adversary, i.e., we have $\text{Pr}[W_8=1]\approx\text{Pr}[W_9=1]$. This is actually the experiment $\mb{Exp}_{\ms{FDGS},\mc{A}}^{\ms{anon}-1}(\lambda)$. Hence, we have  $\text{Pr}[W_9=1]=\text{Pr}[\mb{Exp}_{\ms{FDGS},\mc{A}}^{\ms{anon}-1}(\lambda)=1]$.
\end{description}
\smallskip
As a result, we have that  $$|\text{Pr}[\mb{Exp}_{\ms{FDGS},\mc{A}}^{\ms{anon}-1}(\lambda)=1]-\text{Pr}[\mb{Exp}_{\ms{FDGS},\mc{A}}^{\ms{anon}-0}(\lambda)=~1]| = \mathsf{negl}(\lambda),$$
and hence our scheme is anonymous. \qed
\end{proof}

\begin{lemma}\label{lemma:non-frame}
Assume that the $\mathsf{SIS}_{n,m,q,1}^\infty$ problem is hard. Then the given $\ms{FDGS}$ scheme provides non-frameability in the random oracle model.
\end{lemma}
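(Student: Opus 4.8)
\medskip
\noindent\textbf{Proof approach for Lemma~\ref{lemma:non-frame}.}
The plan is to reduce non-frameability to the hardness of $\mathsf{SIS}_{n,m,q,1}^\infty$. Given a challenge matrix $\mb{A}\in\mbb{Z}_q^{n\times m}$, a reduction $\mc{B}$ would use it directly as the public matrix in $\ms{pp}$ and then run a purported non-frameability adversary $\mc{A}$ in $\mb{Exp}_{\ms{FDGS},\mc{A}}^{\ms{non-frame}}(\lambda)$. The key observation making the simulation easy is that in this experiment every honest user enters $\ms{HUL}$ via $\ms{SndToU}$, so $\mc{B}$ generates all honest key pairs $(\mb{x}_i,\mb{p}_i)$ with $\mb{p}_i=\ms{bin}(\mb{A}\cdot\mb{x}_i\bmod q)$ itself; hence $\mc{B}$ answers $\ms{CrptU},\ms{RevealU},\ms{SndToU},\ms{MReg}$ and $\ms{Sign}$ honestly, and simulates $\mc{H}_{\ms{FS}}$ by lazy sampling. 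Assume $\mc{A}$ outputs, with non-negligible probability, a winning tuple $(M^*,\Sigma^*=(\Pi_{\ms{gs}}^*,\mb{c}_1^*,\mb{c}_2^*),\ms{uid}^*,\Pi_{\ms{trace}}^*,\ms{info}_{\tau^*})$; let $i^*\in\ms{HUL}\setminus\ms{BUL}$ be the honest user of identity $\ms{uid}^*$, whose pair $(\mb{x}_{i^*},\mb{p}_{i^*})$ is known to $\mc{B}$.

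I would next extract witnesses from the two Fiat--Shamir arguments. Rewinding $\mc{A}$ at the random-oracle query that fixes the challenge of $\Pi_{\ms{gs}}^*$ and invoking the standard forking argument for the $\kappa$-fold repeated Stern protocol of Section~\ref{subsection:ZK-main} -- which is $3$-special sound, so three suitably related accepting transcripts suffice, the binding property of $\mathsf{COM}$ preventing equivocation -- yields, with probability polynomially related to $\mc{A}$'s advantage, a valid $\zeta^*=(\bar{\mb{x}},\bar{\mb{p}},\ms{bin}(\bar{j}),\bar{\mb{w}}_\ell,\ldots,\bar{\mb{w}}_1,\bar{\mb{r}}_1,\bar{\mb{r}}_2)$ meeting conditions \ref{sign:condition-1}, \ref{sign:condition-2}, \ref{sign:condition-3} of $\ms{Sign}$ relative to $\mb{u}_{\tau^*}$ and $(\mb{c}_1^*,\mb{c}_2^*)$. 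Doing the same with $\Pi_{\ms{trace}}^*$ (which $\ms{Judge}$ accepts) I would obtain $(\mb{S}_1,\mb{E}_1,\mb{y})$ satisfying~(\ref{equation:pi-trace}) with $\mb{b}'=\ms{uid}^*$. Then condition \ref{sign:condition-2} gives $\mb{c}_1^*=(\mb{B}\bar{\mb{r}}_1,\ \mb{P}_1\bar{\mb{r}}_1+\lfloor q/2\rfloor\cdot\ms{bin}(\bar{j}))\bmod q$, so with $\mb{P}_1=\mb{S}_1^\top\mb{B}+\mb{E}_1$ one computes $\mb{c}_{1,2}^*-\mb{S}_1^\top\mb{c}_{1,1}^*=\mb{E}_1\bar{\mb{r}}_1+\lfloor q/2\rfloor\cdot\ms{bin}(\bar{j})=\mb{y}+\lfloor q/2\rfloor\cdot\ms{uid}^*\bmod q$; the parameter bounds $\|\mb{E}_1\bar{\mb{r}}_1\|_\infty<q/5$ and $\|\mb{y}\|_\infty\le\lceil q/5\rceil$ force $\ms{bin}(\bar{j})=\ms{uid}^*$, i.e. $\bar{j}=i^*$.

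Finally I would produce the $\mathsf{SIS}$ solution, splitting on whether the extracted $\bar{\mb{p}}$ equals the honest key $\mb{p}_{i^*}$. If $\bar{\mb{p}}=\mb{p}_{i^*}$, then by condition \ref{sign:condition-1} we have $\mb{A}\bar{\mb{x}}=\mb{G}\bar{\mb{p}}=\mb{A}\mb{x}_{i^*}\bmod q$; since $\ms{uid}^*\notin\ms{BUL}$ and honest signatures are statistically zero-knowledge, $\mc{A}$'s view is independent of $\mb{x}_{i^*}$ given $\mb{p}_{i^*}$, and $\mb{p}_{i^*}$ has exponentially many binary preimages under $\mb{x}\mapsto\ms{bin}(\mb{A}\mb{x})$, whence $\bar{\mb{x}}\neq\mb{x}_{i^*}$ except with negligible probability and $\bar{\mb{x}}-\mb{x}_{i^*}\in\{-1,0,1\}^m\setminus\{\mb{0}^m\}$ is a solution to $\mathsf{SIS}_{n,m,q,1}^\infty$. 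If $\bar{\mb{p}}\neq\mb{p}_{i^*}$, then because $i^*$ is an active member at $\tau^*$ and the published group information is authentic, $\ms{info}_{\tau^*}$ carries a witness placing $\mb{p}_{i^*}$ at leaf $i^*$ of $\mb{u}_{\tau^*}$; comparing this path with the one reconstructed from $\zeta^*$ for $\bar{\mb{p}}$ at the same leaf exposes a collision of $h_{\mb{A}}$, hence again a non-zero $\mb{z}\in\{-1,0,1\}^m$ with $\mb{A}\mb{z}=\mb{0}\bmod q$, exactly as in the security proof of Lemma~\ref{acc-sec}. Either way $\mc{B}$ solves $\mathsf{SIS}_{n,m,q,1}^\infty$, so $\mc{A}$'s advantage must be negligible. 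The hard part is the extraction: one has to set up the rewinding of the Fiat--Shamir-compiled, parallel-repeated Stern protocols behind both $\Pi_{\ms{gs}}$ and $\Pi_{\ms{trace}}$ so that the $3$-special-soundness extractors go through simultaneously, and account carefully for the resulting polynomial loss and the binding error of $\mathsf{COM}$.
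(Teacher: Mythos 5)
Your proposal is correct and follows essentially the same route as the paper's proof: the same direct reduction embedding the $\mathsf{SIS}$ matrix $\mathbf{A}$ into $\ms{pp}$, forking-lemma extraction from the Fiat--Shamir-compiled Stern argument, the observation that the extracted identity must coincide with the framed user (via the extracted tracing witness and correctness of decryption), and a final dichotomy yielding either a Merkle-tree hash collision or two distinct binary preimages of the same syndrome, both of which solve $\mathsf{SIS}_{n,m,q,1}^\infty$. The only cosmetic difference is that you split on whether the extracted $\bar{\mathbf{p}}$ equals the honest key while the paper splits on whether the traced user is active, and your claim that $\bar{\mathbf{x}}\neq\mathbf{x}_{i^*}$ ``except with negligible probability'' is stronger than needed --- the standard (and sufficient) bound, which the paper uses, is probability at least $1/2$.
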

\begin{proof}
We prove non-frameability by contradiction. Suppose that $\mc{A}$ succeeds with non-negligible advantage $\epsilon$. Then we can build a PPT algorithm $\mc{B}$ that either breaks the security of our accumulator or solves $\ms{SIS}_{n,q,m,1}^{\infty}$ problem, which are featured by $\mb{A}$, with also non-negligible probability.

\smallskip

Given a matrix $\mb{A}$ from the environment that $\mc{B}$ is in, %from an instance of $\ms{SIS}_{n,m,q,1}$ problem,
it first generates all parameters $\ms{pp}$ as we do in $\ms{GSetup}$, then invokes $\mc{A}$ with $\ms{pp}$, and then proceeds as described in the experiment. Here $\mc{B}$ can consistently answer all the oracle queries made by $\mc{A}$.
%$\mc{B}$ interacts with $\mc{A}$ and plays the role of the challenger in the experiment $\mb{Exp}_{\ms{FDGS},\mc{A}}^{\ms{non-frame}}(\lambda)$. In this experiment, $\mc{B}$ learns the keys of group manager ${\ms{GM}}$ and tracing manager $\ms{TM}$ and adds honest users to the group. Therefore $\mc{B}$ can consistently answer all the oracle queries made by $\mc{A}$.
When $\mc{A}$ outputs $(M^*,\Sigma^*,\ms{bin}(j^*),\Pi_{\ms{trace}}^*,\ms{info}_\tau)$ and wins, $\mc{B}$ does following. %
\begin{comment}
Recall that $\mc{A}$ succeeds if the following holds:

\begin{eqnarray}\label{equation:non-frame-condition}
\begin{cases}
     \ms{bin}(j^*)\in\ms{HUL}\setminus\ms{BUL},(\ms{bin}(j^*),M^*,\Sigma^*,\tau)\notin \ms{SL}.\\
     \ms{Verify}(\ms{gpk},\ms{info}_{\tau},M^*,\Sigma^*)=1.\\
    \ms{Judge}(\ms{gpk},\ms{bin}(j^*),\ms{info}_{\tau},\Pi_{\ms{trace}}^*,M^*,\Sigma^*)=1.\\
    \end{cases}
\end{eqnarray}
\end{comment}

Parse $\Sigma^*$ as $(\Pi_{\ms{gs}}^*,\mb{c}_1^*,\mb{c}_2^*)$, where $\Pi_{\ms{gs}}^*~=~(\{\ms{CMT}_i^*\}_{i=1}^{\kappa},\ms{CH}^*,\{\ms{RSP}_i^*\}_{i=1}^{\kappa})$ and $\ms{RSP}_i^*$ is a valid response w.r.t $\ms{CMT}_i^*$ and $\ms{CH}_i^*$ for $i\in[\kappa]$ since $\mathcal{A}$ wins.
%We may assume $(\mb{A},\mb{u}_{\tau},\mb{B},\mb{P}_1,\mb{P}_2,\mb{c}_1^*,\mb{c}_2^*)$ has correct witness $\zeta$ and $(\ms{gpk},\ms{info}_{\tau},M^*,\Sigma^*,\ms{bin}(j^*))\in L_{R_{\ms{trace}}}$ by soundness of our argument systems for relations $R_{\ms{gs}},R_{\ms{trace}}$.
We claim that $\mc{A}$ queried the tuple $(M^*,\{\ms{CMT}_i^*\}_{i=1}^{k}, \mb{A},\mb{u}_{\tau},\mb{B},\mb{P}_1,\mb{P}_2,\mb{c}_1^*,\mb{c}_2^*)$, denoted as $\xi^*$, to the hash oracle $\mc{H}_{\ms{FS}}$ with overwhelming probability. Otherwise guessing correctly this value occurs only with probability $3^{-\kappa}$, which is negligible. Therefore, with probability $\epsilon'=\epsilon-3^{-\kappa}$, the tuple $\xi^*$ has been an input of one hash query, denoted as $t^*\in\{1,2\cdots,Q_H\}$, where $Q_H$ is the total number of hash queries made by $\mc{A}$.
To employ the Forking lemma of Brickell et al.~\cite{Brickell00}, $\mc{B}$ runs polynomial-number executions of $\mc{A}$ exactly the same as in the original run until the $t^*$ hash query, that is, from $t^*$ hash query on, $\mc{B}$ will answer $\mc{A}$ with fresh and independent values for hash queries on each new execution. Note that the input of $t^*$ hash query must be $\xi^*$ as in the original run.
%Then $\mc{B}$ runs $32\cdot Q_H/\epsilon'$ additional executions of the adversary $\mc{A}$ with the same random tapes and random inputs as in the original run. To employ Forking lemma, we assume the adversary receives exactly the same answers until the $t^*$ hash query. Namely the first $t^*-1$ hash queries receive the same value as in the initial run. Therefore, the input of $t^*$ hash query must be $(M^*,\{\ms{CMT}_i^*\}_{i=1}^{k}, \mb{A},\mb{u}_{\tau},\mb{B},\mb{P}_1,\mb{P}_2,\mb{c}_1^*,\mb{c}_2^*)$ as in the original run. From this moment on, the adversary will receive fresh and independent values for hash queries for each new execution.
By the Forking Lemma~\cite{Brickell00}, with probability $\geq \frac{1}{2}$, $\mc{B}$ obtains $3$-fork involving the same tuple  $\xi^*$ with pairwise distinct hash value $\ms{CH}_{t^*}^{(1)}$,$\ms{CH}_{t^*}^{(2)}$,$\ms{CH}_{t^*}^{(3)}\in\{1,2,3\}^{\kappa}$. We have $(\ms{CH}_{t^*,j}^{(1)},\ms{CH}_{t^*,j}^{(2)},\ms{CH}_{t^*,j}^{(3)})=(1,2,3)$ with probability $1-(\frac{7}{9})^{\kappa}$ for some $j\in\{1,2,\ldots,\kappa\}$.
Then by the argument of knowledge of the system generating $\Pi_{\ms{gs}}$, from the valid responses $(\ms{RSP}_{t^*,j}^{(1)},\ms{RSP}_{t^*,j}^{(2)},\ms{RSP}_{t^*,j}^{(3)})$, we can extract the
witnesses $\zeta'=
(\mb{x}',\mb{p}',w_{\tau}',\mb{r}_1',\mb{r}_2'),
$
where~$$w_{\tau}'~=~(\ms{bin}(j'),\mb{w'}_{\ell,\tau},\ldots,\mb{w'}_{1,\tau})\in\{0,1\}^{\ell}\times(\{0,1\}^{nk})^{\ell},$$ such that we have:
\begin{enumerate}[(i)]
\item $\ms{TVerify}_{\mb{A}}(\mb{u}_{\tau},\mb{p'},w_{\tau}')=1,\mb{p'}\neq0$; \smallskip
\item $\mb{A}\cdot\mb{x}'=\mb{G}\cdot\mb{p}'\mod q$; \smallskip
\item  $\mb{c}_1^*,\mb{c}_2^*$  are encryptions of $\ms{bin}(j')$ with randomness $\mb{r}_1'$ and $\mb{r}_2'$.
\end{enumerate}
By the correctness of our encryption scheme, $\mb{c}_{1}^*$ is decrypted to $\ms{bin}(j')$. % and $\ms{bin}(j')=\ms{bin}(j^*)$ with overwhelming probability.
The fact $\mc{A}$ wins implies algorithm $\ms{Judge}$ outputs $1$. By the soundness of our system generating $\Pi_{\rm{trace}}$, $\ms{bin}(j^*)$ is decrypted from $\mb{c}_1^*$. Hence we have $\ms{bin}(j')=\ms{bin}(j^*)$ with overwhelming probability.
%By $(\ms{gpk},\ms{info}_{\tau},M^*,\Sigma^*,\ms{bin}(j^*))\in L_{R_{\ms{trace}}}$, we have $\mb{c}_{1}^*$ decrypted to $\ms{bin}(j^*)$. By the correctness of our encryption scheme, we have $\ms{bin}(j')=\ms{bin}(j^*)$ with overwhelming probability.

%\smallskip
Now we consider the following two cases:

\smallskip
\noindent{\bf{Case}} $1$: $\ms{bin}(j')$ is a inactive user at epoch $\tau$, i.e., the value at leaf $\ms{bin}(j')$ is $\mb{0}^{nk}$. This violates the accumulator security since we have $\ms{TVerify}_{\mb{A}}(\mb{u}_{\tau},\mb{p'},w_{\tau}')=~1$ with $\mb{p'}\neq \mb{0}$ at leaf $\ms{bin}(j')$. % this means at tau, we have p=0^nk at node bin(j'); but the adversary find a witness such that p not = 0 at node bin(j'), this violate the security of the accumulator.  $\ms{info}_\tau$ contains a witness $w_{\tau}'$, i.e.,

\smallskip
\noindent{\bf{Case}} $2$: $\ms{bin}(j^*)$ is an active user at epoch $\tau$. The fact $ \ms{bin}(j^*)\in\ms{HUL}\setminus\ms{BUL}$ indicates $\mc{A}$ does not know $\ms{gsk}[j^*]=(\ms{bin}(j^*),\mb{p'},\mb{x}_{j^*})$, where $\mb{x}_{j^*}$ was initially chosen by $\mc{B}$ and satisfies $\mb{A}\cdot\mb{x}_{j^*}=\mb{G}\cdot\mb{p}'\mod q$. Following the same argument of \cite{LLNW16}, we claim that $\mb{x}_{j^*}\neq \mb{x'}$ with probability at least $1/2$. Then we find a nonzero vector $\mb{z}=\mb{x}_{j^*}-\mb{x'}$ satisfying $\mb{A}\mb{z}=0\mod q$. Recall $\mb{x}_{j^*},\mb{x'}\in\{0,1\}^{m}$.  Thus $\mc{B}$ solves a $\ms{SIS}_{n,q,m,1}^{\infty}$ problem with non-negligible probability if $\mc{A}$ breaks the non-frameability of our construction with non-negligible probability.
\qed
\end{proof}

\begin{lemma}\label{lemma:traceability}
Assume that the $\mathsf{SIS}_{n,m,q,1}^\infty$ problem is hard. Then the given $\ms{FDGS}$ scheme satisfies traceability in the random oracle model.
\end{lemma}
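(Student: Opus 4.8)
The plan is to turn any PPT adversary $\mathcal{A}$ that wins the traceability experiment with non-negligible advantage $\epsilon$ into a PPT algorithm $\mathcal{B}$ that breaks the security of the Merkle-tree accumulator of Section~\ref{subsection:LLNW-tree} (Lemma~\ref{acc-sec}), and hence solves $\mathsf{SIS}_{n,m,q,1}^\infty$. First I would have $\mathcal{B}$ receive the challenge matrix $\mathbf{A}\in\mathbb{Z}_q^{n\times m}$, embed it into the public parameters, and play the honest $\mathsf{GM}$-side of $\langle\mathsf{GKgen}_{\mathsf{GM}},\cdot\rangle$: $\mathcal{B}$ picks $\mathsf{msk}$ on its own, sets $\mathsf{mpk}=\mathbf{A}\cdot\mathsf{msk}$, and initializes the registration table $\mathbf{reg}$, the all-zero tree $\mathcal{T}$, the counter $c$, and $\mathsf{info}=\emptyset$. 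Since $\mathsf{GM}$ is honest in this game, $\mathcal{B}$ can answer every oracle call ($\mathsf{AddU}$, $\mathsf{CrptU}$, $\mathsf{SndToGM}$, $\mathsf{RevealU}$, $\mathsf{MReg}$, $\mathsf{Sign}$, $\mathsf{GUpdate}$) exactly as the real game prescribes, keeping $\mathcal{T}$ and each published $\mathsf{info}_\tau$ consistent; $\mathcal{B}$ also receives $\mathcal{A}$'s tracing keys $(\mathsf{tpk},\mathsf{tsk})$, which by the well-formedness check of the experiment satisfy $\mathbf{P}_i=\mathbf{S}_i^{\top}\mathbf{B}+\mathbf{E}_i$ with $\|\mathbf{S}_i\|_\infty,\|\mathbf{E}_i\|_\infty\le\beta$.

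When $\mathcal{A}$ outputs a winning triple $(M^{*},\Sigma^{*}=(\Pi_{\mathsf{gs}}^{*},\mathbf{c}_1^{*},\mathbf{c}_2^{*}),\tau^{*})$, the next step is to extract a witness from $\Pi_{\mathsf{gs}}^{*}$ exactly as in the proof of Lemma~\ref{lemma:non-frame}: with overwhelming probability $\mathcal{A}$ has queried $\mathcal{H}_{\mathsf{FS}}$ on the commitment tuple underlying $\Pi_{\mathsf{gs}}^{*}$, so $\mathcal{B}$ replays $\mathcal{A}$ polynomially many times from that query and, by the forking lemma of~\cite{Brickell00} together with the $3$-special soundness of the Stern-like protocol of Section~\ref{subsection:ZK-main}, recovers a tuple $\zeta'=(\mathbf{x}',\mathbf{p}',\mathsf{bin}(j'),\mathbf{w}_\ell',\ldots,\mathbf{w}_1',\mathbf{r}_1',\mathbf{r}_2')$ with $\mathbf{p}'\neq\mathbf{0}^{nk}$, $\mathsf{TVerify}_{\mathbf{A}}\big(\mathbf{u}_{\tau^{*}},\mathbf{p}',(\mathsf{bin}(j'),(\mathbf{w}_\ell',\ldots,\mathbf{w}_1'))\big)=1$, $\mathbf{A}\mathbf{x}'=\mathbf{G}\mathbf{p}'\bmod q$, and $\mathbf{c}_1^{*},\mathbf{c}_2^{*}$ both valid Regev encryptions of $\mathsf{bin}(j')$. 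By the choice of parameters $\|\mathbf{E}_1\mathbf{r}_1'\|_\infty\le m_E\beta<q/5$, so decrypting $\mathbf{c}_1^{*}$ with $\mathbf{S}_1$ recovers $\mathbf{b}'=\mathsf{bin}(j')$ and $\mathsf{Trace}$ runs with this $j'$ in its Step~3.

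Then I would split into two cases according to the value $\mathbf{d}^{\star}$ actually stored at leaf $j'$ of $\mathcal{B}$'s honest tree at epoch $\tau^{*}$ (which $\mathcal{B}$ knows, having maintained the tree). If $\mathbf{d}^{\star}\neq\mathbf{p}'$, then $\mathcal{B}$ holds two distinct authentication paths from leaf $j'$ to the same root $\mathbf{u}_{\tau^{*}}$ — the honest one starting at $\mathbf{d}^{\star}$ and the extracted one starting at $\mathbf{p}'$ — and comparing them bottom-up yields a collision for $h_{\mathbf{A}}$, i.e.\ a non-zero $\mathbf{z}\in\{-1,0,1\}^m$ with $\mathbf{A}\mathbf{z}=\mathbf{0}\bmod q$, a valid $\mathsf{SIS}_{n,m,q,1}^\infty$ solution (this case subsumes the situations where $j'$ is inactive at $\tau^{*}$, so $\mathbf{d}^{\star}=\mathbf{0}$, and where $\mathsf{info}_{\tau^{*}}$ contains no witness for index $j'$). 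If instead $\mathbf{d}^{\star}=\mathbf{p}'\neq\mathbf{0}$, then the invariant maintained by $\langle\mathsf{Join},\mathsf{Issue}\rangle$ and $\mathsf{GUpdate}$ forces user $j'$ to have joined at some epoch $\le\tau^{*}$ and not been revoked, i.e.\ to be active, and to have been registered with $\mathbf{reg}[j'][1]=\mathbf{p}'\neq\mathbf{0}$; hence $\mathsf{Trace}$ outputs $\mathsf{uid}=\mathsf{bin}(j')\neq 0$ and, since $\mathcal{B}$ holds a valid $\mathsf{tsk}$, a proof $\Pi_{\mathsf{trace}}$ that $\mathsf{Judge}$ accepts by completeness of the protocol of Section~\ref{subsection:correct-decryption}, so that $\mathsf{IsActive}=1$, $\mathsf{Judge}=1$ and $\mathsf{uid}\neq 0$ — contradicting that $\mathcal{A}$ won. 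Thus a winning $\mathcal{A}$ must fall into the first case; accounting for the $3^{-\kappa}$ guessing loss, the $1/2$ forking-lemma success probability, and the $(7/9)^{\kappa}$ loss in obtaining a useful $3$-fork, $\mathcal{B}$'s advantage against $\mathsf{SIS}_{n,m,q,1}^\infty$ is non-negligible, a contradiction.

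The hard part, as in the companion proofs, will be the rewinding/forking argument that extracts $\zeta'$ cleanly from the non-interactive Stern-like proof in the random-oracle model, and then — because the tracing key is adversarially chosen (yet well-formed) — carefully arguing that $\mathbf{c}_1^{*}$ must decrypt to $\mathsf{bin}(j')$ and that any failure of $\mathsf{Trace}$ is traceable to the accumulator witness inside $\Pi_{\mathsf{gs}}^{*}$ certifying a leaf value differing from the honest tree, i.e.\ to a genuine $h_{\mathbf{A}}$-collision. (The degenerate possibility that $\mathcal{A}$ uses $\mathsf{MReg}$ to overwrite $\mathbf{reg}[j'][1]$ with $\mathbf{0}^{nk}$ for an active $j'$ is ruled out by the conventions of the traceability experiment.)
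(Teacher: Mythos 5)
Your proposal is correct and follows essentially the same route as the paper's proof: extract the witness tuple from $\Pi_{\mathsf{gs}}$ via the forking lemma exactly as in the non-frameability proof, argue that the traced identity coincides with the extracted $\mathsf{bin}(j')$ by correctness of decryption, and then dispose of the two winning cases by reducing the inactive-user case to the security of the accumulator (hence $\mathsf{SIS}_{n,m,q,1}^{\infty}$) and ruling out the Judge-rejection case by completeness of the tracing argument. Your reorganization of the case split around the honest leaf value $\mathbf{d}^{\star}$ versus the extracted $\mathbf{p}'$, with the explicit path comparison yielding an $h_{\mathbf{A}}$-collision, merely unpacks the paper's black-box appeal to Lemma~\ref{acc-sec} and does not change the argument.
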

\begin{proof}
Recall that adversary $\mc{A}$ wins traceability experiment if it generates a valid signature that either: (i) traces to an inactive user;  or (ii) traces to an active user but we cannot generate a proof accepted by the algorithm $\ms{Judge}$. We will prove that both cases occur with negligible probability. % and hence prove that $\mc{A}$ wins with negligible probability.

%Parse $\Sigma^*$ as $(\Pi_{\ms{gs}}^*,\mb{c}_1^*,\mb{c}_2^*)$, where $\Pi_{\ms{gs}}^*~=~(\{\ms{CMT}_i^*\}_{i=1}^{\kappa},\ms{CH}^*,\{\ms{RSP}_i^*\}_{i=1}^{\kappa})$ and $\ms{RSP}_i^*$ is a valid response w.r.t $\ms{CMT}_i^*$ and $\ms{CH}^*[i]$ for $i\in[\kappa]$ since $\mathcal{A}$ wins.

Let $(\ms{info}_{\tau},M,\Sigma)$ be the output of $\mc{A}$ in the experiment $\mb{Exp}_{\ms{FDGS},\mc{A}}^{\ms{trace}}(\lambda)$ and we compute $(\ms{bin}(j),\Pi_{\ms{trace}})$ by running the algorithm $\ms{Trace}$ . %(\ms{gpk},\ms{tsk},\ms{info}_{\tau},\mb{reg},M,\Sigma)$.
Parse $\Sigma$ as $(\Pi_{\ms{gs}},\mb{c}_1,\mb{c}_2)$, where $\Pi_{\ms{gs}}=(\{\ms{CMT}_i\}_{i=1}^{\kappa},\ms{CH},\{\ms{RSP}_i\}_{i=1}^{\kappa})$, and $\ms{RSP}_i$ is a valid response w.r.t. $\ms{CMT}_i$ and $\ms{CH}_i$ for $i\in[\kappa]$ since $(\ms{info}_{\tau},M,\Sigma)$ is a valid signature outputted by $\mathcal{A}$. Now we are in the same situation as in Lemma \ref{lemma:non-frame}. By the same technique, we can extract witnesses $\zeta=(\mb{x},\mb{p},w_{\tau},\mb{r}_1,\mb{r}_2)$, where $w_{\tau}=(\ms{bin}(j'),\mb{w}_{\ell,\tau},\ldots,\mb{w}_{1,\tau})\in\{0,1\}^{\ell}\times(\{0,1\}^{nk})^{\ell}$
satisfying $\ms{TVerify}_{\mb{A}}(\mb{u}_{\tau},\mb{p},w_{\tau})=1$, $\mb{p}\neq \mb{0}^{nk}$, and $\mb{c}_1,\mb{c}_2$ are encryptions of $\ms{bin}(j')$ using randomness $\mb{r}_1$ and $\mb{r}_2$. %Therefore case (i) happens with negligible probability.
Correctness of our encryption scheme implies $\mb{c}_{1}$ is decrypted to $\ms{bin}(j')$. Correct decryption of the $\ms{Trace}$ algorithm, which is run by the challenger, indicates that $\ms{bin}(j)$ is decrypted from $\mb{c}_1$. Hence we have $\ms{bin}(j')=\ms{bin}(j)$ with overwhelming probability.

Now, we note that case (i) considered above only happens with negligible probability. In fact, if $\ms{bin}(j)$ is not active at epoch $\tau$,  and $\ms{TVerify}_{\mb{A}}(\mb{u}_{\tau},\mb{p},w_{\tau})=1$ with $\mb{p}\neq\mb{0}$ at leaf $\ms{bin}(j)$, then the  security of our accumulator is violated.
Furthermore,  case (ii) also occurs with negligible probability since the challenger possesses valid witnesses to generate the proof $\Pi_{\rm{trace}}$, which will be accepted by the $\ms{Judge}$ algorithm, thanks to the completeness of the underlying argument system. It then follows that our scheme satisfies the traceability requirement.
\qed
\end{proof}

\begin{lemma}\label{lemma:tracing-soundness}
The given $\ms{FDGS}$ scheme satisfies tracing soundness in the random oracle model.
\end{lemma}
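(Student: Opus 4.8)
The plan is to show that every PPT adversary $\mcA$ wins $\mb{Exp}_{\ms{FDGS},\mcA}^{\ms{trace-sound}}(\lambda)$ with only negligible probability. Recall that a successful $\mcA$ produces $(M,\Sigma,\ms{uid}_0,\Pi_{\ms{trace},0},\ms{uid}_1,\Pi_{\ms{trace},1},\ms{info}_\tau)$ with $\ms{uid}_0\neq\ms{uid}_1$ (both $\neq\bot$), with $\ms{Verify}(\ms{gpk},\ms{info}_\tau,M,\Sigma)=1$, and with $\ms{Judge}$ accepting both $\Pi_{\ms{trace},b}$. Since $\Sigma=(\Pi_{\ms{gs}},\mb{c}_1,\mb{c}_2)$ is valid, the embedded argument $\Pi_{\ms{gs}}$ verifies; and by definition of $\ms{Judge}$, each $\Pi_{\ms{trace},b}$ is a verifying Fiat--Shamir argument for the relation~(\ref{equation:pi-trace}) \emph{relative to the very same ciphertext $\mb{c}_1$ and the very same $(\mb{B},\mb{P}_1)$} (read off from $\ms{gpk}$, which in this experiment are chosen by $\mcA$), with $\mb{b}'=\ms{uid}_b$. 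The idea is that a well-formed Regev ciphertext has a unique decryption under \emph{any} low-norm secret key consistent with $\mb{P}_1$, so $\ms{uid}_0$ and $\ms{uid}_1$ cannot differ; I would make this rigorous by extracting witnesses from all three arguments and combining them.

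First I would invoke the knowledge extractor — rewinding $\mcA$ and applying the Forking Lemma of~\cite{Brickell00} at the relevant $\mathcal{H}_{\ms{FS}}$-queries, exactly as in the proof of Lemma~\ref{lemma:non-frame}, now nested so as to recover all three witnesses while keeping the running time polynomial. From $\Pi_{\ms{gs}}$ this yields $\mb{r}_1\in\{0,1\}^{m_E}$ and $\mb{b}^{\circ}=\ms{bin}(j)\in\{0,1\}^{\ell}$ with $\mb{c}_{1,1}=\mb{B}\cdot\mb{r}_1\bmod q$ and $\mb{c}_{1,2}=\mb{P}_1\cdot\mb{r}_1+\lfloor q/2\rfloor\cdot\mb{b}^{\circ}\bmod q$. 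From $\Pi_{\ms{trace},b}$, for $b\in\{0,1\}$, it yields $\mb{S}_1^{(b)},\mb{E}_1^{(b)},\mb{y}^{(b)}$ meeting the norm bounds $\|\mb{S}_1^{(b)}\|_\infty,\|\mb{E}_1^{(b)}\|_\infty\leq\beta$, $\|\mb{y}^{(b)}\|_\infty\leq\lceil q/5\rceil$ and satisfying $\mb{S}_1^{(b)\top}\mb{B}+\mb{E}_1^{(b)}=\mb{P}_1\bmod q$ together with $\mb{c}_{1,2}-\mb{S}_1^{(b)\top}\mb{c}_{1,1}=\mb{y}^{(b)}+\lfloor q/2\rfloor\cdot\ms{uid}_b\bmod q$. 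Crucially, none of these identities requires $\mb{B},\mb{P}_1$ to be honestly generated, which is what lets the argument survive an adversary controlling the tracing manager.

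Then I would substitute. Using $\mb{c}_{1,1}=\mb{B}\mb{r}_1$ and then $\mb{S}_1^{(b)\top}\mb{B}=\mb{P}_1-\mb{E}_1^{(b)}\bmod q$ gives $\mb{S}_1^{(b)\top}\mb{c}_{1,1}=\mb{P}_1\mb{r}_1-\mb{E}_1^{(b)}\mb{r}_1\bmod q$; plugging this and $\mb{c}_{1,2}=\mb{P}_1\mb{r}_1+\lfloor q/2\rfloor\mb{b}^{\circ}$ into the decryption identity collapses the $\mb{P}_1\mb{r}_1$ terms and leaves
\[
\lfloor q/2\rfloor\cdot(\mb{b}^{\circ}-\ms{uid}_b) \;=\; \mb{y}^{(b)}-\mb{E}_1^{(b)}\cdot\mb{r}_1 \bmod q, \qquad b\in\{0,1\}.
\]
Since $\mb{b}^{\circ}-\ms{uid}_b\in\{-1,0,1\}^{\ell}$ and, as $q$ is an odd prime, every nonzero coordinate of the left side is $\pm(q-1)/2$ as a centered residue, whereas the right side has infinity norm at most $\lceil q/5\rceil+\beta\cdot m_E<(q-1)/2$ by the choice of parameters, the equation forces $\mb{b}^{\circ}=\ms{uid}_b$ for both $b=0$ and $b=1$. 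Hence $\ms{uid}_0=\mb{b}^{\circ}=\ms{uid}_1$, contradicting $\ms{uid}_0\neq\ms{uid}_1$; therefore $\mcA$'s advantage is negligible and the scheme is tracing sound.

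I expect the delicate points to be two. The conceptual one is making the argument work against adversarially generated $(\mb{B},\mb{P}_1)$: since the reduction cannot treat $\mb{B}$ as random, it cannot reduce to \textsf{LWE}/\textsf{SIS} over $\mb{B}$; instead it must pull the ciphertext-validity witness $\mb{r}_1$ out of $\Pi_{\ms{gs}}$ and then argue purely by small-norm bookkeeping, as above. The technical one is the nested forking needed to extract from three Fiat--Shamir arguments simultaneously while keeping the expected running time polynomial, together with verifying the clean parameter inequality $\lceil q/5\rceil+\beta m_E<(q-1)/2$; as both $\beta m_E$ and $q$ are $\widetilde{\mathcal{O}}(n^{1.5})$, it suffices to take the hidden polylogarithmic factor in $q=\widetilde{\mathcal{O}}(n^{1.5})$ large enough, which I would record explicitly in the parameter setup. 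The remaining ingredients — knowledge-soundness of the Stern-like arguments (already established in the preceding lemmas) and correctness of Regev decryption — are used verbatim.
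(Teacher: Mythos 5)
Your proof is correct, but it takes a genuinely different route from the paper's. The paper extracts witnesses only from the two tracing proofs $\Pi_{\ms{trace},0},\Pi_{\ms{trace},1}$, subtracts the two decryption relations to obtain $(\mathbf{S}_{1,0}^\top-\mathbf{S}_{1,1}^\top)\cdot\mathbf{c}_{1,1}=\mathbf{y}_1-\mathbf{y}_0+\lfloor q/2\rfloor\cdot(\ms{uid}_1-\ms{uid}_0)\bmod q$, concludes from the norm gap that $\mathbf{S}_{1,0}\neq\mathbf{S}_{1,1}$ whenever $\ms{uid}_0\neq\ms{uid}_1$, and then appeals to the uniqueness of low-norm solutions of $\mathbf{S}^\top\cdot\mathbf{B}+\mathbf{E}=\mathbf{P}_1\bmod q$ for the $\ms{LWE}$ samples $(\mathbf{B},\mathbf{P}_1)$. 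You instead perform a third extraction, pulling the encryption randomness $\mathbf{r}_1$ and plaintext $\mathbf{b}^{\circ}$ out of $\Pi_{\ms{gs}}$, substitute into each decryption relation, and finish by pure small-norm bookkeeping, obtaining $\ms{uid}_0=\mathbf{b}^{\circ}=\ms{uid}_1$ directly. What each buys: the paper's argument is shorter and needs only two forkings, but its final step rests on uniqueness of the $\ms{LWE}$ secret for $(\mathbf{B},\mathbf{P}_1)$, which is clear for honestly generated samples but is asserted without justification in this experiment, where the adversary chooses $\ms{tpk}$ itself (for degenerate choices such as $\mathbf{B}=\mathbf{0}$ the uniqueness fails); your route sidesteps this issue entirely and works for arbitrary adversarial $(\mathbf{B},\mathbf{P}_1)$, at the cost of a nested third forking and the explicit parameter condition $\lceil q/5\rceil+\beta m_E<\lfloor q/2\rfloor$ (consistent with the bound $\|\mathbf{E}_1\cdot\mathbf{r}_1\|_\infty<q/5$ already imposed in the correctness analysis). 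Both proofs share the same core mechanism, namely the $\lfloor q/2\rfloor$-versus-small-noise norm gap forcing the two traced identities to coincide.
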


\begin{proof}
Let $(M,\Sigma,\ms{bin}(j_0),\Pi_{\ms{trace},0},\ms{bin}(j_1),\Pi_{\ms{trace},1},\ms{info}_{\tau})$ be the output of the adversary in the experiment $\mb{Exp}_{\ms{FDGS},\mcA}^{\ms{trace-sound}}(\lambda)$. Recall that adversary wins if the following conditions hold:
\begin{enumerate}[(i)]
    \item $\ms{bin}(j_0)\neq\ms{bin}(j_1)$;  \smallskip
    \item $\ms{Verify}(\ms{gpk},\ms{info}_{\tau},M,\Sigma)=1$;\smallskip
    \item $\ms{Judge}(\ms{gpk},\ms{bin}(j_b),\ms{info}_{\tau},\Pi_{\ms{trace},b},M,\Sigma)=1$ for $b\in\{0,1\}$;\smallskip
    \item $\ms{bin}(j_b)\neq\bot$ for $b\in\{0,1\}$.
\end{enumerate}
% is implied by the fact that the $\ms{Judge}$ algorithm outputs~$1$.

%The fact that the $\ms{Judge}$ algorithm outputs $1$ implies the verification algorithm of our argument system generating $\Pi_{\rm{trace}}$ outputs $1$  with public inputs $(\ms{gpk},\ms{info}_{\tau},M,\Sigma,\ms{bin}(j_b))$ w.r.t. proof $\Pi_{\ms{trace},b}$ for $b\in\{0,1\}$.
Let us parse $\Pi_{\ms{trace},b}$ as $(\{\ms{CMT}_{i,b}\}_{i=1}^{\kappa},\ms{CH}_b,\{\ms{RSP}_{i,b}\}_{i=1}^{\kappa})$. The fact that the $\ms{Judge}$ algorithm outputs $1$ implies that $\ms{RSP}_{i,b}$ is a valid response w.r.t. $\ms{CMT}_{i,b}$ and $\ms{CH}_{i,b}$ for $i\in[\kappa]$ and $b\in\{0,1\}$. By using the extraction technique for argument system generating $\Pi_{\ms{trace}}$ as in Lemma \ref{lemma:non-frame},
%Let $(\ms{info}_{\tau},M,\Sigma)$ be the output of $\mc{A}$ in the experiment $\mb{Exp}_{\ms{FDGS},\mc{A}}^{\ms{trace}}(\lambda)$ and compute  $(\ms{bin}(j),\Pi_{\ms{trace}})$ by running the $\ms{Trace}$ algortihm. %(\ms{gpk},\ms{tsk},\ms{info}_{\tau},\mb{reg},M,\Sigma)$.
%Parse $\Sigma$ as $(\Pi_{\ms{gs}},\mb{c}_1,\mb{c}_2)$, where $\Pi_{\ms{gs}}=(\{\ms{CMT}_i\}_{i=1}^{\kappa},\ms{CH},\{\ms{RSP}_i\}_{i=1}^{\kappa})$ and $\ms{RSP}_i$ is a valid response w.r.t $\ms{CMT}_i$ and $\ms{CH}[i]$ for $i\in[\kappa]$ since $(\ms{info}_{\tau},M,\Sigma)$ is a valid signature outputed by $\mathcal{A}$. Now we are in the same situation as in Lemma \ref{lemma:non-frame}.
%Then by soundness of the argument system, we can assume $(\ms{gpk},\ms{info}_{\tau},M,\Sigma,\ms{bin}(j_b))\in L_{R_{\ms{trace}}}$ for $b\in\{0,1\}$.
%Following the same argument as in the lemma \ref{lemma:non-frame},
we can extract the witnesses $\mb{S}_{1,b},\mb{E}_{1,b},\mb{y}_{b}$ for $b\in\{0,1\}$ satisfying:
% equations in (\ref{equation:pi-trace}). %i.e.,
               \begin{eqnarray}\label{trace-sound-extract-witness}
               \begin{cases}
                \| \mathbf{S}_{1,b} \|_\infty \leq \beta; \hspace*{2.8pt}  \| \mathbf{E}_{1,b} \|_\infty \leq \beta; \hspace*{2.8pt} \| \mathbf{y}_b\|_\infty \leq  \lceil q/5 \rceil; \\
                \mathbf{S}_{1,b}^\top \cdot \mathbf{B} + \mathbf{E}_{1,b} = \mathbf{P}_1 \bmod q; \\
                \mathbf{c}_{1,2} - \mathbf{S}_{1,b}^\top\cdot \mathbf{c}_{1,1} = \mathbf{y}_b + \lfloor q/2 \rfloor\cdot \ms{bin}(j_b) \bmod q.
           \end{cases}
          \end{eqnarray}
%We show that $\ms{bin}(j_0)=\ms{bin}(j_1)$ with overwhelming probability. %Subtract last equation of (\ref{trace-sound-extract-witness}) for case $b=0$ from that for case $b=1$,
Note that, we have %\begin{equation}\label{trace-sound-compare}
$(\mathbf{S}_{1,0}^\top-\mathbf{S}_{1,1}^\top)\cdot \mathbf{c}_{1,1} = \mathbf{y}_1-\mb{y}_0 + \lfloor q/2 \rfloor\cdot (\ms{bin}(j_1)-\ms{bin}(j_0)) \bmod q.$
%\end{equation}
Suppose that  $\ms{bin}(j_0)\neq\ms{bin}(j_1)$, then $\|\lfloor q/2 \rfloor\cdot(\ms{bin}(j_1)-\ms{bin}(j_0))\|_{\infty}= \lfloor q/2 \rfloor$. On the other hand, we have $ \| \mathbf{y}_1-\mathbf{y}_0\|_\infty \leq  2\cdot\lceil  q/5 \rceil$.
%by the three inequalities of (\ref{trace-sound-extract-witness}).
It then follows that $$\|\mathbf{y}_1-\mb{y}_0 + \lfloor q/2 \rfloor\cdot (\ms{bin}(j_1)-\ms{bin}(j_0))\|_{\infty}> 0,$$
which implies that
$\mathbf{S}_{1,0}^\top\neq\mathbf{S}_{1,1}^\top$. In other words,
we can obtain two  solutions for the equation $\mathbf{S}^\top \cdot \mathbf{B} + \mathbf{E} = \mathbf{P}_1 \bmod q$, which contradicts the fact that there exists at most one solution for $\ms{LWE}$ samples $(\mb{B},\mb{P}_1)$.

By contradiction, we have $\ms{bin}(j_0)=\ms{bin}(j_1)$ with overwhelming probability. This implies that case (i) does not hold with overwhelming probability once cases (ii),(iii) and (iv) hold. As a result, the advantage of the adversary attacking the tracing soundness of our scheme is negligible (in $\lambda$). In other words, our construction satisfies tracing soundness.
\qed
\end{proof}

\subsection{Details of the Main Zero-Knowledge Argument System}\label{subsection:ZK-main}
Our protocol is a modification of the one from~\cite{LLNW16}, in which we additionally prove that $\mathbf{p} \neq \mathbf{0}^{nk}$ using the technique discussed at the beginning of Section~\ref{section:main-scheme}.
We will employ a new strategy for Stern-like protocols, suggested in~\cite{LLMNW16-dgs}, which consists of unifying all the equations to be proved into just one equation of the form $\mathbf{M}\cdot \mathbf{z} = \mathbf{u} \bmod q$, for some public matrix $\mathbf{M}$ and vector $\mathbf{u}$ over $\mathbb{Z}_q$. This strategy will help us to deliver a simpler presentation than in~\cite{LLNW16}.

Let us first examine the equations associated with the execution of algorithm $\mathsf{TVerify}_{\mathbf{A}}\big( \mathbf{u}_\tau, \mathbf{p}, ((j_1, \ldots, j_\ell), (\mathbf{w}_\ell, \ldots, \mathbf{w}_1)) \big)$. This algorithm computes the tree path $\mathbf{v}_\ell, \mathbf{v}_{\ell-1}, \ldots, \mathbf{v}_1, \mathbf{v}_0 \in \{0,1\}^{nk}$ as follows: $\mathbf{v}_\ell= \mathbf{p}$ and
\begin{eqnarray}\label{equation:main-protocol-tree-path}
  \forall \hspace*{1pt}i \in  \{\ell-1, \ldots, 1, 0\}: \hspace*{2pt}
    \mathbf{v}_i = \begin{cases}
                      h_{\mathbf{A}}(\mathbf{v}_{i+1}, \mathbf{w}_{i+1}), \text{ if } j_{i+1}=0; \\
                      h_{\mathbf{A}}(\mathbf{w}_{i+1}, \mathbf{v}_{i+1}), \text{ if } j_{i+1}=1.
                 \end{cases}
  \end{eqnarray}
The algorithm outputs $1$ if $\mathbf{v}_0 = \mathbf{u}_\tau$.
Observe that relation~(\ref{equation:main-protocol-tree-path}) can be equivalently rewritten as: $\forall \hspace*{1pt}i \in  \{\ell-1, \ldots, 1, 0\},$
\begin{eqnarray*}
 &&  \bar{j}_{i+1}\cdot h_{\mathbf{A}}(\mathbf{v}_{i+1}, \mathbf{w}_{i+1}) + j_{i+1} \cdot h_{\mathbf{A}}(\mathbf{w}_{i+1}, \mathbf{v}_{i+1}) = \mathbf{v}_i \\[2.5pt]
&\Leftrightarrow&\bar{j}_{i+1} \hspace*{-1.5pt}\cdot\hspace*{-1.5pt} \big(\mathbf{A}_0 \cdot \mathbf{v}_{i+1} + \mathbf{A}_1 \cdot \mathbf{w}_{i+1}\big) + j_{i+1}\hspace*{-1.5pt}\cdot\hspace*{-1.5pt} \big(\mathbf{A}_0 \cdot \mathbf{w}_{i+1} + \mathbf{A}_1\cdot \mathbf{v}_{i+1}\big) = \mathbf{G}\cdot \mathbf{v}_{i} \bmod q\\
&\Leftrightarrow& \mathbf{A}\cdot  \left(\begin{array}{c}
                               \bar{j}_{i+1} \cdot\mathbf{v}_{i+1} \\
                                j_{i+1} \cdot \mathbf{v}_{i+1}
                             \end{array}\right) + \mathbf{A} \cdot \left(\begin{array}{c}
                               {j}_{i+1}\cdot \mathbf{w}_{i+1} \\
                                \bar{j}_{i+1} \cdot\mathbf{w}_{i+1}
                             \end{array}\right) = \mathbf{G}\cdot\mathbf{v}_{i} \bmod q \\
                             &\Leftrightarrow& \mathbf{A}\cdot \mathsf{ext}(j_{i+1}, \mathbf{v}_{i+1}) + \mathbf{A}\cdot \mathsf{ext}(\bar{j}_{i+1}, \mathbf{w}_{i+1}) = \mathbf{G}\cdot\mathbf{v}_{i} \bmod q.
\end{eqnarray*}

In the above, we use the notation $\mathsf{ext}(b, \mathbf{v})$, for bit $b$ and vector $\mathbf{v}$, to denote the vector
$\left(
\begin{array}{c}
\bar{b}\cdot \mathbf{v} \\
b\cdot \mathbf{v} \\
\end{array}
\right)$. Later on, we will also use the notation $\mathsf{ext}_2(b)$, for bit $b$, to denote the $2$-dimensional vector  $(\bar{b}, b)^\top$. \smallskip

Given the above discussion, our protocol can be summarized as follows.
\begin{description}
  \item[Public input:] $\mathbf{A}, \hspace*{2.8pt}\mathbf{G}, \hspace*{2.8pt}\mathbf{u}_\tau, \hspace*{2.8pt}\mathbf{B}, \hspace*{2.8pt} \mathbf{P}_1, \hspace*{2.8pt} \mathbf{P}_2, \hspace*{2.8pt}\mathbf{c}_{1} = (\mathbf{c}_{1,1}, \mathbf{c}_{1,2}), \hspace*{2.8pt}\mathbf{c}_{2} = (\mathbf{c}_{2,1}, \mathbf{c}_{2,2})$. \medskip

  \item[Prover's goal:] Proving knowledge of secret $\mathbf{x} \in \{0,1\}^m$, \hspace*{2.8pt}$\mathbf{0} \neq \mathbf{p} \in \{0,1\}^{nk}$, \hspace*{2.8pt}$j_1, \ldots, j_\ell \in \{0,1\}$, \hspace*{2.8pt}$\mathbf{v}_{1}, \ldots, \mathbf{v}_{\ell-1}, \mathbf{w}_1, \ldots, \mathbf{w}_\ell \in \{0,1\}^{nk}$, \hspace*{2.8pt}$\mathbf{r}_1, \mathbf{r}_2 \in \{0,1\}^{m_E}$, such that the following equations hold:
\end{description}
\vspace*{-0.35cm}
\begin{eqnarray}\label{equation:ZK-main-protocol-original}
\begin{cases}
    \mathbf{A}\cdot \mathsf{ext}(j_{1}, \mathbf{v}_{1}) + \mathbf{A}\cdot \mathsf{ext}(\bar{j}_{1}, \mathbf{w}_{1}) = \mathbf{G}\cdot\mathbf{u}_{\tau} \bmod q; \\
    \mathbf{A}\cdot \mathsf{ext}(j_{2}, \mathbf{v}_{2}) + \mathbf{A}\cdot \mathsf{ext}(\bar{j}_{2}, \mathbf{w}_{2}) - \mathbf{G}\cdot\mathbf{v}_{1} = \mathbf{0} \bmod q; \\[1.5pt]
    \hspace*{40pt} \ldots \hspace*{5pt} \ldots \hspace*{5pt} \ldots \\[1.5pt]
    \mathbf{A}\cdot \mathsf{ext}(j_{\ell}, \mathbf{p}) + \mathbf{A}\cdot \mathsf{ext}(\bar{j}_{\ell}, \mathbf{w}_{\ell}) -  \mathbf{G}\cdot\mathbf{v}_{\ell-1} = \mathbf{0} \bmod q; \\
    \mathbf{A}\cdot \mathbf{x} - \mathbf{G}\cdot \mathbf{p} = \mathbf{0}\bmod q; \\
    \mathbf{B}\cdot \mathbf{r}_b = \mathbf{c}_{b,1} \bmod q,  \text{ for }b\in\{1,2\}; \\
\mathbf{P}_b \cdot \mathbf{r}_b + \big\lfloor \frac{q}{2}\big \rceil \cdot (j_1, \ldots, j_\ell)^\top = \mathbf{c}_{b,2} \bmod q, \mbox{ for } b\in\{1,2\}. \\
%\mathbf{B}\cdot \mathbf{r}_2 = \mathbf{c}_{2,1} \bmod q;  \\
 %\mathbf{P}_2 \cdot \mathbf{r}_2 + \big\lfloor \frac{q}{2}\big \rceil \cdot (j_1, \ldots, j_\ell)^\top = \mathbf{c}_{2,2}\bmod q.
\end{cases}
\end{eqnarray}

Now, we employ extending-then-permuting techniques for Stern-like protocols~\cite{Ste96,LNSW13,LLNW16,LLMNW16-dgs} to handle the secret objects, as follows. For any positive integer~$i$, we let~$\mathcal{S}_i$ denote the set of all permutations of~$i$ elements.

\begin{itemize}
    \item To prove that $\mathbf{p} \in \{0,1\}^{nk}$ and $\mathbf{p} \neq \mathsf{0}^{nk}$, we append $nk-1$ ``dummy'' entries to $\mathbf{p}$ to get $\mathbf{p}^* \in \mathsf{B}_{nk}^{2nk-1}$. Note that, for any $\pi_p \in \mathcal{S}_{2nk-1}$, we have: $$\mathbf{p}^* \in \mathsf{B}_{nk}^{2nk-1} \Leftrightarrow \pi_p(\mathbf{p}^*) \in \mathsf{B}_{nk}^{2nk-1}.\vspace*{-0.1cm}$$
    \item To prove that $\mathbf{x} \in \{0,1\}^m$, we append $m$ ``dummy'' entries to $\mathbf{x}$ to get $\mathbf{x}^* \in \mathsf{B}_{m}^{2m}$. Note that, for any $\pi_x \in \mathcal{S}_{2m}$, we have: $\mathbf{x}^* \in \mathsf{B}_m^{2m} \Leftrightarrow \pi_x(\mathbf{x}^*) \in \mathsf{B}_m^{2m}$.
    \item Similarly, we extend $\mathbf{v}_1, \ldots, \mathbf{v}_{\ell-1}, \mathbf{w}_1, \ldots, \mathbf{w}_\ell$ to $\mathbf{v}^*_1, \ldots, \mathbf{v}^*_{\ell-1}, \mathbf{w}^*_1, \ldots, \mathbf{w}^*_\ell \in \mathsf{B}_{nk}^{2nk}$, respectively. We also extend
      $\mathbf{r}_1, \mathbf{r}_2$ to $\mathbf{r}_1^*, \mathbf{r}_2^* \in \mathsf{B}_{m_E}^{2m_E}$, respectively.
    \item For each $i =1, \ldots, \ell$, to prove that the bit $j_i\in \{0,1\}$ is involved in both encryption layer and Merkle tree layer, we extend it to $\mathbf{j}_i = \mathsf{ext}_2(j_i)$. Then we use the following permuting technique.

        For bit $b \in \{0,1\}$, we let $T_b$ be the permutation that transforms vector $\mathbf{t} = (t_0, t_1)^\top \in \mathbb{Z}^2$ into vector $T_b(\mathbf{t}) = (t_b, t_{\overline{b}})$. Note that, for any $b_i \in \{0,1\}$, we have: $\mathbf{j}_i = \mathsf{ext}_2(j_i) \Leftrightarrow T_{b_i}(\mathbf{j}_i) = \mathsf{ext}_2(j_i \oplus b_i)$, where $\oplus$ denotes the addition modulo $2$. Then, to prove knowledge of $j_i \in \{0,1\}$, we instead prove knowledge of $\mathbf{j}_i = \mathsf{ext}_2(j_i)$, by sampling $b_i \xleftarrow{\$} \{0,1\}$ and showing the verifier that the right-hand side of the equivalence holds.
        Here, the crucial point is that $b_i$ acts as a ``one-time pad'' that perfectly hides $j_i$. Furthermore, to prove that $j_i$ appears in a different layer of the system, we will set up a similar equivalence in that layer, and use the same $b_i$ at both places.
    \item Now, we have to handle the extended vectors appearing in the context after the above extensions. They are
    $\widehat{\mathbf{v}}_1 = \mathsf{ext}(j_{1}, \mathbf{v}^*_{1}), \ldots, \widehat{\mathbf{v}}_{\ell-1}= \mathsf{ext}(j_{\ell-1}, \mathbf{v}^*_{\ell-1}) \in \{0,1\}^{4nk}$ and $\widehat{\mathbf{w}}_1= \mathsf{ext}(\bar{j}_{1}, \mathbf{w}^*_{1}), \ldots, \widehat{\mathbf{w}}_\ell= \mathsf{ext}(\bar{j}_{\ell}, \mathbf{w}^*_{\ell}) \in \{0,1\}^{4nk}$, as well as $\widehat{\mathbf{p}}= \mathsf{ext}(j_\ell, \mathbf{p}^*) \in \{0,1\}^{4nk-2}$. To prove that these vector are well-formed, we will use the following permuting techniques.

    Let $r \in \{2nk, 2nk-1\}$. For $b \in \{0,1\}$ and $\pi \in \mathcal{S}_{r}$, we define the permutation $F_{b, \pi}$ that transforms  $\mathbf{t} = \left(
                    \begin{array}{c}
                      \mathbf{t}_0 \\
                      \mathbf{t}_1 \\
                    \end{array}
                  \right)
     \in \mathbb{Z}^{2r}$ consisting of $2$ blocks of size $r$ into $F_{b, \pi}(\mathbf{t}) = \left(
                                                                                                         \begin{array}{c}
                                                                                                           \pi(\mathbf{t}_b) \\
                                                                                                           \pi(\mathbf{t}_{\bar{b}}) \\
                                                                                                         \end{array}
                                                                                                       \right)
     $. Namely, $F_{b,\pi}$ first rearranges the blocks of $\mathbf{t}$ according to $b$ (it keeps the arrangement of blocks if $b=0$, or swaps them if $b=1$), then it permutes each block according to $\pi$. Note that, we have the following, for all $b_1, \ldots, b_\ell \in \{0,1\}$, $\phi_{v,1}, \ldots, \phi_{v, \ell-1}, \phi_{w,1}, \ldots, \phi_{w, \ell} \in \mathcal{S}_{2nk}$, and $\pi_p \in \mathcal{S}_{2nk-1}$:
     \[
     \begin{cases}
     \forall i\in [\ell-1]:  \widehat{\mathbf{v}}_i = \mathsf{ext}(j_{i}, \mathbf{v}^*_{i}) \hspace*{5pt} \Longleftrightarrow \hspace*{5pt} F_{b_i,\phi_{v,i}}( \widehat{\mathbf{v}}_{i}) =  \mathsf{ext}(j_i \oplus b_i, \phi_{v,i}(\mathbf{v}^*_i)); \\
     \forall i \in [\ell]:  \widehat{\mathbf{w}}_i= \mathsf{ext}(\bar{j}_{i}, \mathbf{w}^*_{i})\hspace*{5pt}\Longleftrightarrow \hspace*{5pt} F_{b_i,\phi_{w,i}}(\widehat{\mathbf{w}}_i) =  \mathsf{ext}(\overline{j_i \oplus b_i}, \phi_{w,i}(\mathbf{w}^*_i)); \\
      \widehat{\mathbf{p}}= \mathsf{ext}(j_\ell, \mathbf{p}^*) \hspace*{5pt}\Longleftrightarrow \hspace*{5pt} F_{b_\ell,\pi_p}(\widehat{\mathbf{p}}) =   \mathsf{ext}(j_\ell \oplus b_\ell, \pi_p(\mathbf{p}^*)).
     \end{cases}
     \]
   Combining these permuting techniques with the ones described earlier, we can prove that the considered extended vectors are well-formed, with respect to the secret $j_1, \ldots, j_\ell, \mathbf{v}^*_1, \ldots, \mathbf{v}_{\ell-1}^*, \mathbf{w}_1^*, \ldots, \mathbf{w}_\ell^*, \mathbf{p}^*$.

\end{itemize}
Given the above transformations, we now unify all the secret objects into vector $\mathbf{z} \in \{0,1\}^{D}$, where $D=10nk\ell+2m+ 4m_E+ 2\ell-3$, of the following form (we abuse the notation of transposition):
\begin{eqnarray}\label{equation:main-protocol-vector-z}
%\nonumber&&
\hspace*{-12pt}\big(\hspace*{1pt}
\mathbf{v}_1^* \hspace*{1.5pt}\|\hspace*{1pt} \widehat{\mathbf{v}}_1\hspace*{1.5pt}\|\hspace*{1.5pt} \widehat{\mathbf{w}}_1 \hspace*{1.5pt}\| \ldots \|\hspace*{1pt} \mathbf{v}_{\ell-1}^* \hspace*{1pt}\|\hspace*{1.5pt} \widehat{\mathbf{v}}_{\ell-1} \hspace*{1pt}\|\hspace*{1pt} \widehat{\mathbf{w}}_{\ell-1} \hspace*{1.5pt}\|\hspace*{1.5pt}
 \mathbf{p}^* \hspace*{1.5pt}\|\hspace*{1.5pt} \widehat{\mathbf{p}} \hspace*{1.5pt}\| \hspace*{1pt}\widehat{\mathbf{w}}_{\ell} \hspace*{1pt}\| \hspace*{1.5pt}\mathbf{x}^* \hspace*{1.5pt}\| \hspace*{1pt}\mathbf{r}_1^* \hspace*{1.5pt}\|\hspace*{1pt} \mathbf{r}_2^* \hspace*{1.5pt}\|\hspace*{1pt} \mathbf{j}_1 \hspace*{1pt} \| \ldots \| \hspace*{1pt}\mathbf{j}_\ell
\big).
\end{eqnarray}
Then, we observe that, the equations in~(\ref{equation:ZK-main-protocol-original}) can be equivalently combined into one equation $\mathbf{M}\cdot \mathbf{z} = \mathbf{u} \bmod q$, where matrix $\mathbf{M}$ and vector $\mathbf{u}$ are built from the public input.

Next, to apply the permuting techniques we have discussed above, let us define $\mathsf{VALID}$ as the set of all vectors in $\{0,1\}^D$, that have the form~(\ref{equation:main-protocol-vector-z}), where $\mathbf{x}^* \in \mathsf{B}_m^{2m}$, $\mathbf{r}_1^*, \mathbf{r}_2^* \in \mathsf{B}_{m_E}^{2m_E}$, $\mathbf{p}^* \in \mathsf{B}_{nk}^{2nk-1}$, $\mathbf{v}^*_1, \ldots, \mathbf{v}^*_{\ell-1} \in \mathsf{B}_{nk}^{2nk}$, and there exist $j_1, \ldots, j_\ell \in \{0,1\}$, $\mathbf{w}^*_1, \ldots, \mathbf{w}^*_\ell \in \mathsf{B}_{nk}^{2nk}$, such that:
\[
\begin{cases}
\widehat{\mathbf{p}}= \mathsf{ext}(j_\ell, \mathbf{p}^*), \hspace*{6.8pt} \forall i \in [\ell]: \mathbf{j}_i = \mathsf{ext}_2(j_i), \\
\forall i \in [\ell-1]: \widehat{\mathbf{v}}_i = \mathsf{ext}(j_{i}, \mathbf{v}^*_{i}), \hspace*{6.8pt} \forall i \in [\ell]: \widehat{\mathbf{w}}_i= \mathsf{ext}(\bar{j}_{i}, \mathbf{w}^*_{i}).
\end{cases}
\]
It can be seen that our vector $\mathbf{z}$ belongs to this tailored set $\mathsf{VALID}$. To prove that $\mathbf{z} \in \mathsf{VALID}$ using random permutations, let us determine how to permute the coordinates of $\mathbf{z}$. To this end, we first define the set:
\[
\overline{\mathcal{S}} = \{0,1\}^\ell \times \mathcal{S}_{2m} \times \mathcal{S}_{2nk-1} \times (\mathcal{S}_{2m_E})^2 \times (\mathcal{S}_{2nk})^{2\ell-1}.
\]
Then, $\forall \hspace*{1pt}\eta \hspace*{-1.5pt}= \hspace*{-1.5pt}\big((b_1, \ldots, b_\ell), \pi_x, \pi_p, (\pi_{r,1}, \pi_{r,2}), (\phi_{v,1}, \ldots, \phi_{v,\ell-1}, \phi_{w,1}, \ldots, \phi_{w, \ell})\big) \hspace*{-1.5pt}\in\hspace*{-1.5pt} \overline{\mathcal{S}}$, we let $\Gamma_\eta$ be the permutation that, when applying to  vector $\mathbf{t} \in \mathbb{Z}^D$ whose blocks are as in~(\ref{equation:main-protocol-vector-z}), it transforms those blocks as follows:
\begin{itemize}
    \item For all $i \in [\ell-1]$: \hspace*{5pt} $\mathbf{v}_i^* \mapsto  \phi_{v,i}(\mathbf{v}_i^*)$; \hspace*{5pt} $\widehat{\mathbf{v}}_i \mapsto F_{b_i,\phi_{v,i}}( \widehat{\mathbf{v}}_{i})$.  \smallskip
    \item For all $i \in [\ell]$: \hspace*{5pt} $\widehat{\mathbf{w}}_i \mapsto F_{b_i,\phi_{w,i}}( \widehat{\mathbf{w}}_{i})$; \hspace*{5pt} $\mathbf{j}_i \mapsto T_{b_i}(\mathbf{j}_i)$. \smallskip
    \item $\mathbf{p}^* \mapsto \pi_p(\mathbf{p}^*)$; \hspace*{5pt} $\widehat{\mathbf{b}} \mapsto F_{b_\ell,\pi_p}(\widehat{\mathbf{p}})$. \smallskip
    \item $\mathbf{x}^* \mapsto \pi_x(\mathbf{x}^*)$; \hspace*{5pt} $\mathbf{r}_1^* \mapsto \pi_{r,1}(\mathbf{r}_1^*)$; \hspace*{5pt} $\mathbf{r}_2^* \mapsto \pi_{r,2}(\mathbf{r}_2^*)$.
\end{itemize}
It now can be checked that we have the desired equivalence: For all $\eta \in \overline{\mathcal{S}}$,
\[
\mathbf{z} \in \mathsf{VALID} \hspace*{5pt} \Longleftrightarrow \hspace*{5pt} \Gamma_\eta(\mathbf{z}) \in \mathsf{VALID}.
\]

At this point, we can now run a simple Stern-like protocol to prove knowledge of $\mathbf{z} \in \mathsf{VALID}$ such that $\mathbf{M}\cdot \mathbf{z} = \mathbf{u} \bmod q$. The common input is the pair $(\mathbf{M}, \mathbf{u})$, while the prover's secret input is $\mathbf{z}$. The interaction between prover $\mathcal{P}$ and verifier $\mathcal{V}$ is described in Fig.~\ref{Figure:Interactive-Protocol}. The protocol employs the string commitment scheme \textsf{COM} from~\cite{KTX08} that is statistically hiding and computationally binding if the $\mathsf{SIVP}_{\widetilde{\mathcal{O}}(n)}$ problem is hard.

\begin{figure}[!htbp]
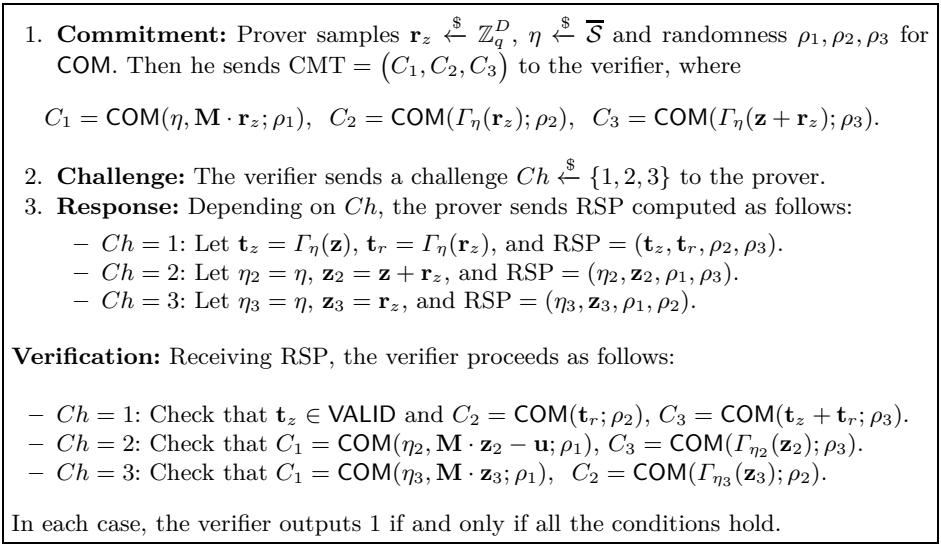


\begin{enumerate}
  \item \textbf{Commitment:} Prover samples $\mathbf{r}_z \xleftarrow{\$} \mathbb{Z}_q^D$, $\eta \xleftarrow{\$} \overline{\mathcal{S}}$ and randomness $\rho_1, \rho_2, \rho_3$ for $\mathsf{COM}$.
Then he sends $\mathrm{CMT}= \big(C_1, C_2, C_3\big)$ to the verifier, where
    \begin{eqnarray*}
    \hspace*{-17.5pt}
        C_1 =  \mathsf{COM}(\eta, \mathbf{M}\cdot \mathbf{r}_z; \rho_1), \hspace*{5pt}
        C_2 =  \mathsf{COM}(\Gamma_{\eta}(\mathbf{r}_z); \rho_2), \hspace*{5pt}
        C_3 =  \mathsf{COM}(\Gamma_{\eta}(\mathbf{z} + \mathbf{r}_z); \rho_3).
    \end{eqnarray*}

  \item \textbf{Challenge:} The verifier sends a challenge $Ch \xleftarrow{\$} \{1,2,3\}$ to the prover.
  \item \textbf{Response:} Depending on $Ch$, the prover sends $\mathrm{RSP}$ computed as follows:
  \smallskip
\begin{itemize}
\item $Ch = 1$: Let $\mathbf{t}_{z} = \Gamma_{\eta}(\mathbf{z})$, $\mathbf{t}_{r} = \Gamma_{\eta}(\mathbf{r}_z)$, and $\mathrm{RSP} = (\mathbf{t}_z, \mathbf{t}_r, \rho_2, \rho_3)$.

\item $Ch = 2$: Let $\eta_2 = \eta$, $\mathbf{z}_2 = \mathbf{z} + \mathbf{r}_z$, and
    $\mathrm{RSP} = (\eta_2, \mathbf{z}_2, \rho_1, \rho_3)$.
\item $Ch = 3$: Let $\eta_3 = \eta$, $\mathbf{z}_3 = \mathbf{r}_z$, and
 $\mathrm{RSP} = (\eta_3, \mathbf{z}_3, \rho_1, \rho_2)$.
\end{itemize}
\end{enumerate}
\textbf{Verification:}  Receiving $\mathrm{RSP}$, the verifier proceeds as follows:
\smallskip
%\vspace{-0.25cm}
          \begin{itemize}%[leftmargin=0.2cm,itemindent=.2cm,labelwidth=\itemindent,labelsep=0.2cm,align=left]
            \item $Ch = 1$: Check that $\mathbf{t}_z \in \mathsf{VALID}$ and $C_2 = \mathsf{COM}(\mathbf{t}_r; \rho_2)$, ${C}_3 = \mathsf{COM}(\mathbf{t}_z + \mathbf{t}_r; \rho_3)$.

             \item $Ch = 2$: Check that $C_1 = \mathsf{COM}(\eta_2, \mathbf{M}\cdot \mathbf{z}_2 - \mathbf{u}; \rho_1)$, ${C}_3 = \mathsf{COM}(\Gamma_{\eta_2}(\mathbf{z}_2); \rho_3)$.

            \item $Ch = 3$: Check that $C_1 =  \mathsf{COM}(\eta_3, \mathbf{M}\cdot \mathbf{z}_3; \rho_1), \hspace*{5pt}
        C_2 =  \mathsf{COM}(\Gamma_{\eta_3}(\mathbf{z}_3); \rho_2).$

          \end{itemize}
          In each case, the verifier outputs $1$ if and only if all the conditions hold.
%\rule{0pt}{3ex}
\caption{Our Stern-like zero-knowledge argument of knowledge.}
\label{Figure:Interactive-Protocol}
\end{figure}
%The properties of the given protocol are summarized in Theorem~\ref{Theorem:zk-protocol}.
\vspace*{-0.25cm}
\begin{theorem}\label{Theorem:zk-protocol}
Assume that the $\mathsf{SIVP}_{\widetilde{\mathcal{O}}(n)}$ problem is hard. Then the protocol in Fig.~\ref{Figure:Interactive-Protocol} is a statistical \emph{\textsf{ZKAoK}} with perfect completeness, soundness error~$2/3$, and communication cost~$\widetilde{\mathcal{O}}(D\log q)$. Namely:
\begin{itemize}
\item There exists a polynomial-time simulator that, on input $(\mathbf{M}, \mathbf{u})$, outputs an accepted transcript statistically close to that produced by the real prover.
\item There exists a polynomial-time knowledge extractor that, on input a commitment $\mathrm{CMT}$ and $3$ valid responses $(\mathrm{RSP}_1,\mathrm{RSP}_2,\mathrm{RSP}_3)$ to all $3$ possible values of the challenge $Ch$, outputs $\mathbf{z}' \in \mathsf{VALID}$ such that $\mathbf{M}\cdot \mathbf{z}' = \mathbf{u} \bmod q.$
\end{itemize}
\end{theorem}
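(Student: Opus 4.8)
The plan is to instantiate the standard template for Stern-like arguments with the particular set $\mathsf{VALID}$, the permutation family $\{\Gamma_\eta\}_{\eta\in\overline{\mathcal{S}}}$, and the linear relation $\mathbf{M}\cdot\mathbf{z} = \mathbf{u}\bmod q$ assembled above. Everything rests on two preliminary facts, which I would establish first. The first is the equivalence $\mathbf{z}\in\mathsf{VALID}\Leftrightarrow\Gamma_\eta(\mathbf{z})\in\mathsf{VALID}$ for every $\eta\in\overline{\mathcal{S}}$; this is checked block by block, using the identities $\pi(\mathsf{B}_t^L) = \mathsf{B}_t^L$, $T_b(\mathsf{ext}_2(j)) = \mathsf{ext}_2(j\oplus b)$ and $F_{b,\pi}(\mathsf{ext}(j,\mathbf{v})) = \mathsf{ext}(j\oplus b,\pi(\mathbf{v}))$ already recorded in the text, and the crucial bookkeeping point that the same one-time-pad bit $b_i$ is reused in every block where $j_i$ occurs (the two $\mathsf{ext}$-blocks $\widehat{\mathbf{v}}_i,\widehat{\mathbf{w}}_i$, the block $\mathbf{j}_i$, and, for $i=\ell$, the block $\widehat{\mathbf{p}}$), so that the constraints are simultaneously preserved. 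The second fact is that, for $\eta\xleftarrow{\$}\overline{\mathcal{S}}$ and $\mathbf{r}_z\xleftarrow{\$}\mathbb{Z}_q^D$, the distributions of $\Gamma_\eta(\mathbf{z})$, of $\mathbf{z}+\mathbf{r}_z$, of $\Gamma_\eta(\mathbf{z}+\mathbf{r}_z)$ and of $\Gamma_\eta(\mathbf{r}_z)$ do not depend on which witness $\mathbf{z}\in\mathsf{VALID}$ is used; this is immediate from the uniformity of $\eta$ and of $\mathbf{r}_z$.

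For perfect completeness I would note that the derivation preceding Fig.~\ref{Figure:Interactive-Protocol} shows the honestly formed vector $\mathbf{z}$ of the shape~(\ref{equation:main-protocol-vector-z}) lies in $\mathsf{VALID}$ and satisfies $\mathbf{M}\cdot\mathbf{z} = \mathbf{u}\bmod q$; then, for each $Ch\in\{1,2,3\}$, the three verification equalities follow directly from the linearity of $\mathbf{M}$ and of $\Gamma_\eta$ together with the equivalence of the previous paragraph. The communication cost is $\widetilde{\mathcal{O}}(D\log q)$: $\mathrm{CMT}$ is three $\mathsf{COM}$ outputs, each of size $\widetilde{\mathcal{O}}(n\log q)$ with $n\le D$, while a response contains at most one element of $\overline{\mathcal{S}}$ (encodable in $\widetilde{\mathcal{O}}(D)$ bits), one or two vectors over $\mathbb{Z}_q$ or $\{0,1\}$ of length $D$, and opening randomness.

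For statistical zero-knowledge I would build the usual simulator. On input $(\mathbf{M},\mathbf{u})$ it samples a guess $\overline{Ch}\xleftarrow{\$}\{1,2,3\}$ of the challenge it cannot answer and prepares $\mathrm{CMT}$ accordingly: if $\overline{Ch}=1$ it first computes, by linear algebra, some $\mathbf{z}_1\in\mathbb{Z}_q^D$ with $\mathbf{M}\mathbf{z}_1 = \mathbf{u}\bmod q$ and commits as the honest prover would with $\mathbf{z}_1$ in place of the witness; if $\overline{Ch}\in\{2,3\}$ it instead samples $\mathbf{z}'\xleftarrow{\$}\mathsf{VALID}$ and forms the commitments so that the two answerable challenges can both be met (for $\overline{Ch}=3$ it additionally sets $C_1$ to commit to $\mathbf{M}\mathbf{z}_2-\mathbf{u}$ for the $\mathbf{z}_2=\mathbf{z}'+\mathbf{r}_z$ that will be revealed under $Ch=2$). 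Upon receiving $Ch$: if $Ch=\overline{Ch}$ the simulator aborts, otherwise it answers exactly as in Fig.~\ref{Figure:Interactive-Protocol}. Since $\mathsf{COM}$ is statistically hiding, $\overline{Ch}$ is statistically independent of the verifier's choice, so the abort probability is negligibly close to $1/3$, and in each non-aborting branch the transcript is statistically close to a real one by the witness-independence fact and by the hiding property applied to the single commitment that is never opened.

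For soundness I would use witness-extended emulation via special soundness: given $\mathrm{CMT}=(C_1,C_2,C_3)$ and valid responses $\mathrm{RSP}_1,\mathrm{RSP}_2,\mathrm{RSP}_3$ to the three challenges, invoke the computational binding of $\mathsf{COM}$ (which holds under $\mathsf{SIVP}_{\widetilde{\mathcal{O}}(n)}$) so that the two openings of each $C_i$ must agree. From $C_1$ this yields $\eta_2=\eta_3=:\eta$ and $\mathbf{M}\mathbf{z}_2-\mathbf{u} = \mathbf{M}\mathbf{z}_3$; from $C_2$, $\mathbf{t}_r=\Gamma_\eta(\mathbf{z}_3)$; from $C_3$, $\mathbf{t}_z+\mathbf{t}_r=\Gamma_\eta(\mathbf{z}_2)$. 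Setting $\mathbf{z}':=\mathbf{z}_2-\mathbf{z}_3$ and using linearity of $\Gamma_\eta$ gives $\Gamma_\eta(\mathbf{z}') = \mathbf{t}_z\in\mathsf{VALID}$, hence $\mathbf{z}'\in\mathsf{VALID}$ by the equivalence, and $\mathbf{M}\mathbf{z}'=\mathbf{u}\bmod q$; reading the witness components $\mathbf{x},\mathbf{p},j_i,\mathbf{v}_i,\mathbf{w}_i,\mathbf{r}_1,\mathbf{r}_2$ off the $\mathsf{VALID}$-structure of $\mathbf{z}'$ is then routine. Should some $C_i$ admit two distinct valid openings, we would instead have a $\mathsf{COM}$ collision, contradicting the hardness assumption. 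I expect the only step demanding genuine care is the block-by-block verification of the equivalence $\mathbf{z}\in\mathsf{VALID}\Leftrightarrow\Gamma_\eta(\mathbf{z})\in\mathsf{VALID}$ for this rather elaborate $\mathsf{VALID}$ — in particular the consistency of reusing each $b_i$ across the Merkle-tree layer, the $\mathsf{ext}$-blocks and the encryption layer — while the remaining arguments are a direct instantiation of the known Stern framework.
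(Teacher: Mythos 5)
Your proposal is correct and follows essentially the same route as the paper's proof in Appendix~\ref{appdendix:proof-for-stern}: the same three-case guessing simulator relying on the statistical hiding of $\mathsf{COM}$, and the same extraction argument from three valid responses via the binding property (your definition $\mathbf{z}':=\mathbf{z}_2-\mathbf{z}_3$ coincides with the paper's $\mathbf{z}'=[\Gamma_{\eta_2}]^{-1}(\mathbf{t}_z)$ by linearity of the coordinate permutation $\Gamma_\eta$). The block-by-block verification of $\mathbf{z}\in\mathsf{VALID}\Leftrightarrow\Gamma_\eta(\mathbf{z})\in\mathsf{VALID}$ that you rightly flag as the delicate point is handled in the paper in Section~\ref{subsection:ZK-main} rather than inside the proof itself, but the content is the same.
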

Given vector $\mathbf{z}'$ outputted by the knowledge extractor, we can compute a tuple $\zeta'$ satisfying the conditions described at the beginning of this subsection, simply by ``backtracking'' the transformation steps we have done on the way. The protocol has communication cost  $\widetilde{\mathcal{O}}((10nk\ell+2m+ 4m_E+ 2\ell-3)\log q) = \widetilde{\mathcal{O}}(\lambda\cdot \ell)$ bits.

The proof of Theorem~\ref{Theorem:zk-protocol} employs standard simulation and extraction techniques for Stern-like protocols~\cite{KTX08,LNSW13,LNW15,LLMNW16-dgs} and is deferred to Appendix~\ref{appdendix:proof-for-stern}.

\subsection{Zero-Knowledge Argument of Correct Decryption}\label{subsection:correct-decryption}
We now present the underlying \textsf{ZKAoK} used by the tracing manager to generate $\Pi_{\sf trace}$. The protocol allows the prover to prove knowledge of decryption keys for the multi-bit version of Regev's encryption scheme, as well as prove the correctness of decryption. While the argument system in Section~\ref{subsection:ZK-main} deals only with $\{0,1\}$ witness vectors, here, we have to handle secret vectors/matrices of infinity norm larger than~$1$. To this end, we will employ the decomposition technique from~\cite{LNSW13,LLMNW16-dgs} to transform them to vectors with infinity norm~$1$. We thus first recall this technique.
\smallskip

\noindent
{\bf Decompositions.}
For any $B \in \mathbb{Z}_+$, define the number $\delta_B:=\lfloor \log_2 B\rfloor +1 = \lceil \log_2(B+1)\rceil$ and the sequence $B_1, \ldots, B_{\delta_B}$, where $B_j = \lfloor\frac{B + 2^{j-1}}{2^j} \rfloor$, $\forall j \in [1,\delta_B]$. As observed in~\cite{LNSW13}, the sequence satisfies $\sum_{j=1}^{\delta_B} B_j = B$ and
 any integer $v \in [-B, B]$ can be decomposed into vector $(v^{(1)}, \ldots, v^{(\delta_B)})^\top \hspace*{-2pt}\in \hspace*{-1pt}\{-1,0,1\}^{\delta_B}$ such that $\sum_{j=1}^{\delta_B}B_j \cdot v^{(j)} \hspace*{-1pt}=\hspace*{-1pt} v$.

Next, for any positive integers $\mathfrak{m}, B$, we define the decomposition matrix:
\begin{eqnarray*}
\mathbf{H}_{\mathfrak{m},B}: = \begin{bmatrix} B_1 \ldots  B_{\delta_B} &  & & & \\
  &  B_1 \ldots  B_{\delta_B} &   &  & \\
	  &   &  &  \ddots  &  \\
			  &   &  &    & B_1 \ldots  B_{\delta_B}  \\
\end{bmatrix} \in \mathbb{Z}^{\mathfrak{m} \times \mathfrak{m}\delta_B}.
\end{eqnarray*}
It then follows that, any vector $\mathbf{v} \in [-B,B]^{\mathfrak{m}}$ can be decomposed into vector $\overline{\mathbf{v}} \in \{-1,0,1\}^{\mathfrak{m}\delta_B}$ such that $\mathbf{H}_{\mathfrak{m},B} \cdot \overline{\mathbf{v}} = \mathbf{v}$.

 Once we have obtained a transformed witness vector $\overline{\mathbf{v}} \in \{-1,0,1\}^{\mathfrak{m}\delta_B}$, we then employ the usual extending-permuting technique for Stern-like protocols. For any positive integer $i$, we let $\mathsf{R}_{i}^3$ be the set of all vectors in $\{-1,0,1\}^{3i}$ that have exactly~$i$ coordinates equal to~$j$, for each $j \in \{-1,0,1\}$. Note that this set is ``closed'' under permutations, i.e., for any $\pi \in \mathcal{S}_{3i}$ and any vector $\mathbf{x}\in\mathsf{R}_{i}^3$, the following equivalence holds:
\begin{eqnarray}\label{eqn:permutation-for-3}
    \mathbf{x}\in\mathsf{R}_{i}^3\Leftrightarrow \pi(\mathbf{x})\in\mathsf{R}_{i}^3.
    \end{eqnarray}

\medskip

\noindent
The protocol is summarized as follows.
\begin{description}
  \item[Public input:] $\mathbf{B} \in \mathbb{Z}_q^{n \times m_E}$, $\mathbf{P}_1 \in \mathbb{Z}_q^{\ell \times m_E}$, $\mathbf{c}_{1,1} \in \mathbb{Z}_q^n, \mathbf{c}_{1,2} \in \mathbb{Z}_q^\ell$,  $\mathbf{b}' \in \{0,1\}^\ell$.
      \smallskip
  \item[Prover's goal:] Proving knowledge of  $\mathbf{S}_1 \in [-\beta, \beta]^{n \times \ell}$, $\mathbf{E}_1 \in [-\beta, \beta]^{\ell \times m_E}$, and $\mathbf{y} \in [-\lceil q/5 \rceil, \lceil q/5 \rceil]^\ell$, such that:
\end{description}
\vspace*{-0.25cm}
\begin{eqnarray}\label{equation:ZK-trace-goal-1}
\begin{cases}
                \mathbf{S}_1^\top \cdot \mathbf{B} + \mathbf{E}_1 = \mathbf{P}_1 \bmod q; \\
                \mathbf{c}_{1,2} - \mathbf{S}_1^\top\cdot \mathbf{c}_{1,1} = \mathbf{y} + \lfloor q/2 \rfloor\cdot \mathbf{b}' \bmod q.
\end{cases}
\end{eqnarray}

For each $j \in [\ell]$, let $\mathbf{s}_j, \mathbf{e}_j, \mathbf{p}_j$ be the $j$-th column of matrix $\mathbf{S}_1, \mathbf{E}_1^\top, \mathbf{P}_1^\top$, respectively; and let $y_j, c_j, b_j$ be the $j$-th entry of vector $\mathbf{y}, \mathbf{c}_{1,2}, \mathbf{b}'$, respectively.
Then observe that (\ref{equation:ZK-trace-goal-1}) can be re-written as:
\begin{eqnarray}\label{equation:ZK-trace-goal-2}
\forall j \in [\ell]:
\begin{cases}
\mathbf{B}^T \cdot \mathbf{s}_j + \mathbf{I}_{m_E} \cdot \mathbf{e}_j = \mathbf{p}_j \bmod q \\
\mathbf{c}_{1,1}^T \cdot \mathbf{s}_j + 1\cdot y_j = {c}_j- b_j\cdot  \lfloor q/2\rfloor\bmod q.
\end{cases}
\end{eqnarray}

Using basic algebra, we can manipulate the equations in~(\ref{equation:ZK-trace-goal-2})by rearranging the secret vectors and forming public matrices accordingly. As a result, we obtain a unifying equation of the form
\[\mathbf{M}_0\cdot \mathbf{z}_0=\mathbf{u}\bmod q,\] where  $\mathbf{M}_0,~\mathbf{u}$ are built from the public input while $\mathbf{z}_0=(\mathbf{z}_1\|\mathbf{z}_2)$, with
\begin{eqnarray*}
\begin{cases}
\mathbf{z}_1 = (\mathbf{s}_1  \| \cdots \| \mathbf{s}_\ell  \| \mathbf{e}_1  \| \cdots \| \mathbf{e}_\ell) \in [-\beta,\beta]^{(n+m_E)\ell},\\ %\mathbf{v}_i^{\mathrm{id}[i]}=\mathbf{v}_i,~ \mathbf{v}_i^{1-\mathrm{id}[i]}=\mathbf{0}^m ~\forall i\in[\ell],\\
\mathbf{z}_2=(y_1, \ldots, y_\ell)^\top \in \big[\lceil-q/5\rceil, \lceil q/5\rceil\big]^{\ell}.

\end{cases}
\end{eqnarray*}
Next, we can use the decomposition  and extension  techniques described above to handle our  secret vectors. First, decompose $\mathbf{z}_1$ into $\overline{\mathbf{z}}_1 \in \{-1,0,1\}^{D_1}$ with $D_1=(n+m_E)\ell\delta_\beta$ and  $\mathbf{z}_2$ into  $\overline{\mathbf{z}}_2 \in \{-1,0,1\}^{D_2}$ with $D_2=\ell\delta_{\lceil q/5\rceil}$, and form the secret vector $\overline{\mathbf{z}}_0=(\overline{\mathbf{z}}_1\|\overline{\mathbf{z}}_2)\in\{-1,0,1\}^{L}$ with $D'=D_1+D_2$. Then extend $\overline{\mathbf{z}}_0$ to $\mathbf{z} \in \mathsf{R}_{D'}^3$. Using some basic algebra, we can form public matrix $\mathbf{M}$ such that
 \[\mathbf{M}\cdot \mathbf{z} = \mathbf{M}_0\cdot \mathbf{z}_0 =\mathbf{u} \bmod q.\]

%Next, we form vector $\mathbf{w} = (\mathbf{s}_1  \| \cdots \| \mathbf{s}_\ell  \| \mathbf{e}_1  \| \cdots \| \mathbf{e}_\ell) \in [-\beta,\beta]^{(n+m_E)\ell}$, then decompose it into $\overline{\mathbf{w}} \in \{-1,0,1\}^{L_1}$ with $L_1=(n+m_E)\ell\delta_\beta$. %, and extend $\overline{\mathbf{w}}$ to $\mathbf{w}^* \in \mathsf{R}_{3(n+m_E)\ell\delta_\beta}$.
%At the same time, we decompose vector $\mathbf{y} = (y_1, \ldots, y_\ell)^T \in \big[\lceil-q/5\rceil, \lceil q/5\rceil\big]^{\ell}$ into $\overline{\mathbf{y}} \in \{-1,0,1\}^{L_2}$ with $L_2=\ell\delta_{\lceil q/5\rceil}$.%, and then extend $\overline{\mathbf{y}}$ to $\mathbf{y}^* \in \mathsf{R}_{3\ell\delta_{\lceil q/5\rceil}}$.

%We then combine the above $2$ vectors into $\overline{\mathbf{z}}=(\overline{\mathbf{w}}\|\overline{\mathbf{y}})$, and extend $\overline{\mathbf{z}}$ to $\mathbf{z} \in \mathsf{R}_{L}^3$ with $L=L_1+L_2$. Note that equations in~(\ref{eq:protocol-2-step-1}) can be equivalently unified to an equation:
%\[\mathbf{M}\cdot \mathbf{z} = \mathbf{u} \bmod q,\]
%where matrix $\mathbf{M}$ and vector $\mathbf{u}$ are built from the public input.
%Now, we let $D = 3(n+m_E)\ell\delta_\beta + 3\ell\delta_{\lceil q/5\rceil}$ and
Now, define the tailored set $\mathsf{VALID} = \mathsf{R}_{D'}^3$ and let $\overline{\mathcal{S}} = \mathcal{S}_{D}$ with $D=3D'$. For every $\eta \in \overline{\mathcal{S}}$,  define $\Gamma_\eta = \eta$. It is checked that our secret vector $\mathbf{z}\in \mathsf{VALID}$. Observe the equivalence in~(\ref{eqn:permutation-for-3}), for any  $\eta \in \overline{\mathcal{S}}$, the following equivalence also holds: \begin{eqnarray*}
    \mathbf{z}\in\mathsf{VALID}\Longleftrightarrow \Gamma_{\eta}(\mathbf{z})\in\mathsf{VALID}.
    \end{eqnarray*}

 At this point, we can see that the desired statistical $\mathsf{ZKAoK}$ protocol can be obtained by running the interactive protocol described in Fig.~\ref{Figure:Interactive-Protocol}.

%let $\mathbf{z} = ((\mathbf{w}^*)^T \| (\mathbf{y}^*)^T)^T \in \{-1,0,1\}^D$, where $D = 3(n+m_E)\ell\delta_\beta + 3\ell\delta_{\lceil q/5\rceil}$, then we obtain equation $\mathbf{M}\cdot \mathbf{z} = \mathbf{u} \bmod q$, for public matrix $\mathbf{M}$ and public vector $\mathbf{u}$.

%Using similar arguments as in Section~\ref{subsection:ZK-protocol-1}, we then can obtain the desired zero-knowledge argument system. The protocol has communication cost $D\log q \cdot \omega(\log \lambda)= \widetilde{\mathcal{O}}(\lambda)\cdot \mathcal{O}(t)$ bits.

\section{Achieving Deniability with Ease}\label{section:deniability}

This section presents how to make our fully dynamic group signature scheme deniable, in the sense of~\cite{IEHST16} and in a relatively simple manner. %in the sense of~\cite{IEHST16}. %In Section~\ref{subsection:background-denial}, we recall some background on deniability.
In Section~\ref{subsection:denial-model}, we first incorporate Ishida et al.'s notion of deniability into Bootle et al.'s model~\cite{BCCGG16}.  Then, we show in Section~\ref{subsection:denial-scheme} that a scheme satisfying this extended model can be easily constructed based on the one from Section~\ref{section:main-scheme} and a zero-knowledge argument of ``denial opening'', demonstrating that a given Regev ciphertext does not decrypt to a particular message. The zero-knowledge argument is described in detail in Section~\ref{subsection:ZK-denial-decryption}.

\begin{comment}
\subsection{Background of Deniability}\label{subsection:background-denial}
%In this section, we demonstrate that it is relatively simple to make our group signature scheme deniable, in the sense of~\cite{IEHST16}.
In a nutshell, deniable group signatures are group signatures with an additional functionality: it allows the group authority to provide a digital evidence that a given group user did \emph{not} generate a signature in question. In~\cite{IEHST16}, Ishida et al. discussed various situations in which such functionality helps to protect the privacy of users. For instance, the police wants to check whether a suspect was in a building at a specific time period, and the entrance and exit control of the building is implemented using a group signature. If the police asks the tracing manager to reveal the signers of all signatures generated during that period, then this will seriously violate the privacy of innocent users. In such situation, deniable group signatures make it possible to prove that
the signer of a given signature is not the suspect, while still keeping the signer's identity secret.
\end{comment}
\subsection{Fully Dynamic Group Signatures with Deniability}\label{subsection:denial-model}
We now extend Bootle et al.'s model of FDGS (in Section~\ref{subsection:FDGS-definitions}) to capture the notion of deniability. Our extension follows Ishida et al.'s approach~\cite{IEHST16} when they incorporate the deniability functionality into Sakai et al.'s model~\cite{SSEH+12}.

A fully dynamic group signature scheme with deniability ($\mathsf{FDGSwD}$) consists of all algorithms of an  $\mathsf{FDGS}$, as specified in Section~\ref{subsection:FDGS-definitions}, and two additional algorithms: $\mathsf{DTrace}$ and $\mathsf{DJudge}$.
\begin{description}
\item[$\ms{DTrace}(\ms{gpk},\ms{tsk},\ms{info}_{\tau},\mathbf{reg}, \mathsf{uid}',M,\Sigma)\rightarrow \Pi_{D(\ms{uid}')}$.] The denial opening algorithm, run by the $\ms{TM}$,  returns a proof $\Pi_{D(\ms{uid}')}$ showing that $\mathsf{uid}'$ did not compute the signature $\Sigma$ on the message $M$ at time $\tau$.
\item[$\ms{DJudge}(\ms{gpk},\ms{uid},\ms{info}_{\tau},\Pi_{D(\mathsf{uid}')},M,\Sigma)\rightarrow 0/1$.] The denial judgement algorithm checks the validity of $\Pi_{D(\mathsf{uid}')}$ outputted by the $\ms{DTrace}$ algorithm.
\end{description}

An $\mathsf{FDGSwD}$ scheme must satisfy \emph{correctness}, \emph{anonymity}, \emph{non-frameability}, \emph{traceability} and \emph{tracing soundness}.    %Again, this extension follows from the extension of Sakai et al.'s model~\cite{SSEH+12} to Ishida et al.'s model~\cite{IEHST16}.
To formalize these requirements, we first specify an additional oracle, called $\mathsf{DTrace}$.
\begin{description}
\item[$\mathsf{DTrace}(\mathsf{uid}',M,\Sigma,\mathsf{info}_{\tau})$.] This oracle returns a denial opening proof $\Pi_{D(\mathsf{uid}')}$. It is required that $\Sigma$ is not generated by the challenged oracle or the user $\mathsf{uid}'$ is not in the challenged user set to eliminate the cases where the adversary trivially breaks anonymity.
\end{description}

Then, the  correctness and security requirements of an $\mathsf{FDGSwD}$ scheme are defined by experiments $\mathbf{Exp}_{\mathsf{FDGSwD},\mathcal{A}}^{\ms{corr}}(\lambda)$ and
\[
\mathbf{Exp}_{\mathsf{FDGSwD},\mathcal{A}}^{\ms{anon-b}}(\lambda),
\mathbf{Exp}_{\mathsf{FDGSwD},\mathcal{A}}^{\ms{non-frame}}(\lambda),
\mathbf{Exp}_{\mathsf{FDGSwD},\mathcal{A}}^{\ms{trace}}(\lambda),
\mathbf{Exp}_{\mathsf{FDGSwD},\mathcal{A}}^{\ms{trace-sound}}(\lambda),
\]
described in \textbf{Fig.~\ref{fig:exp-denial-anon}}, \textbf{\ref{fig:exp-denial-corr-anon}}, and~\textbf{\ref{fig:exp-denial-trace-trace-sound}}. The differences between these requirements and the original ones of Bootle et al.'s model (Section~\ref{subsection:FDGS-definitions}) are highlighted in the below.

\smallskip

\noindent\emph{Correctness} %not only requires correctness as in the Bootle et al.'s model, as recalled in Section~\ref{subsection:FDGS-definitions}, it
additionally requires that a (denial) proof generated by algorithm $\mathsf{DTrace}$, with respect to a message-signature pair $(M,\Sigma)$ and an arbitrary user $\mathsf{uid}'$ who is not the real signer, is always accepted by algorithm $\mathsf{DJudge}$.
%Details are modelled in the experiment $\mathbf{Exp}_{\mathsf{FDGSwD},\mathcal{A}}^{\ms{corr}}(\lambda)$.
\vspace{-10pt}
\begin{figure}
\begin{center}
\begin{minipage}{12cm}
\underline{Experiment: $\mb{Exp}_{\ms{FDGSwD},\mcA}^{\ms{corr}}(\lambda)$}\\
     $\ms{pp}\leftarrow \ms{GSetup}(\lambda),\ms{HUL}:=\emptyset$.\\
    $\langle(\ms{info},\ms{mpk},\ms{msk});(\ms{tpk},\ms{tsk})\rangle\leftarrow \langle\ms{GKgen_{GM}}(\ms{pp}),\ms{GKgen_{TM}}(\ms{pp})\rangle$.\\
     Set $\ms{gpk}=(\ms{pp},\ms{mpk},\ms{tpk})$.
    $(\ms{uid},M,\tau)\leftarrow\mcA^{\ms{AddU,RReg,GUpdate}}(\ms{gpk},\ms{info})$.\\
     If $\ms{uid}\notin\ms{HUL}$ or $\ms{gsk}[\ms{uid}]=\bot$ or $\ms{info}_{\tau}=\bot$ or $\ms{IsActive}(\ms{info}_{\tau},\mb{reg},\ms{uid})=0$, return $0$.\\
    $\Sigma\leftarrow\ms{Sign}(\ms{gpk},\ms{gsk}[\ms{uid}],\ms{info}_{\tau},M)$,
    $(\ms{uid}^*,\Pi_{\mathsf{Trace}})\leftarrow\mathsf{Trace}(\ms{gpk},\ms{tsk},\ms{info}_{\tau},\mb{reg},M,\Sigma)$.\\
    $\mathsf{uid}'\leftarrow\mathcal{A}^{\mathsf{AddU,RReg,Gupdate}}(\Sigma,\mathsf{uid}^*,\Pi_{\ms{Trace}}).$\\
    If $\mathsf{uid}'=\mathsf{uid}\vee \mathsf{uid}'\notin \mathsf{HUL}$, then return $0$.\\
    $\Pi_{D(\mathsf{uid}')}\leftarrow\mathsf{DTrace}(\ms{gpk},\ms{tsk},\ms{info}_{\tau},\mb{reg},\mathsf{uid}',M,\Sigma).$\\
    Return $1$ if $\ms{Verify}(\ms{gpk},\ms{info}_{\tau},M,\Sigma)=0$ or
     $\ms{uid}\neq \ms{uid}^*$ or $\ms{Judge}(\ms{gpk},\ms{uid},\ms{info}_{\tau},\Pi_{\mathsf{Trace}},M,\Sigma)=0$ or $\mathsf{DJudge}(\mathsf{gpk},\mathsf{uid}',\mathsf{info}_\tau, \Pi_{D(\mathsf{uid}')} M,\Sigma)=0$. \\
     \vspace{-10pt}
\end{minipage}
\end{center}
\caption{Experiment to define correctness for an  $\mathsf{FDGSwD}$ scheme.} \label{fig:exp-denial-anon}
\end{figure}

\vspace{-10pt}

%Define  $\mathbf{Adv}_{\mathsf{FDGSwD},\mathcal{A}}^{\ms{corr}}(\lambda)$ of  adversary $\mathcal{A}$ against correctness of an $\mathsf{FDGSwD}$ scheme as $\text{Pr}[\mathbf{Exp}_{\mathsf{FDGSwD},\mathcal{A}}^{\mathsf{corr}}(\lambda)=1]$. An $\mathsf{FDGSwD}$ scheme is correct if the advantage of any $\mathrm{PPT}$ adversary $\mathcal{A}$ is negligible.

\smallskip
\noindent\emph{Anonymity} is defined with respect to an adversary who also has access to the additional oracle $\mathsf{DTrace}$ specified above. %Details are modelled in the experiment $\mathbf{Exp}_{\mathsf{FDGSwD},\mathcal{A}}^{\ms{anon-b}}(\lambda)$.

%Define  $\mathbf{Adv}_{\mathsf{FDGSwD},\mathcal{A}}^{\ms{anon}}(\lambda)$ of  adversary $\mathcal{A}$ against anonymity of an $\mathsf{FDGSwD}$ scheme as $|\text{Pr}[\mathbf{Exp}_{\mathsf{FDGSwD},\mathcal{A}}^{\mathsf{anon-1}}(\lambda)=1]-\text{Pr}[\mathbf{Exp}_{\mathsf{FDGSwD},\mathcal{A}}^{\mathsf{anon-0}}(\lambda)=1]|$. An $\mathsf{FDGSwD}$ scheme is anonymous if the advantage of any $\mathrm{PPT}$ adversary $\mathcal{A}$ is negligible.

\begin{figure}
\begin{center}
\begin{minipage}{12cm}
\underline{Experiment: $\mb{Exp}_{\ms{FDGSwD},\mcA}^{\ms{anon-b}}(\lambda)$}\\
     $\ms{pp}\leftarrow \ms{GSetup}(\lambda),\ms{HUL,CUL,BUL,CL,SL}:=\emptyset$.\\
     $(\ms{st}_{\ms{init}},\ms{info},\ms{mpk},\ms{msk})\leftarrow\mcA^{\langle\cdot,\ms{GKgen_{TM}\rangle}(\ms{pp})}(\ms{init}:\ms{pp})$.\\
     Return $0$ if $\ms{GKgen_{TM}}$ did not accept or $\mcA$'s output is not well-formed.\\
     Denote the output of $\ms{GKgen_{TM}}$ as $(\ms{tpk},\ms{tsk})$, and set $\ms{gpk}=(\ms{pp},\ms{mpk},\ms{tpk})$.\\
     $b^*\leftarrow\mcA^{\ms{AddU,CrptU,RevealU,SndToU,Trace,MReg,DTrace,Chal_b}}(\ms{play}:\ms{st_{init}},\ms{gpk})$.\\
     Return $b^*$.

\underline{Experiment: $\mb{Exp}_{\ms{FDGSwD},\mcA}^{\ms{non-frame}}(\lambda)$}\\
     $\ms{pp}\leftarrow \ms{GSetup}(\lambda),\ms{HUL,CUL,BUL,SL}=\emptyset$.\\
    $(\ms{st}_{\ms{init}},\ms{info},\ms{mpk},\ms{msk},\ms{tpk},\ms{tsk})\leftarrow\mcA(\ms{init}:\ms{pp})$.\\
     Return $0$ if $\mcA$'s output is not well-formed, otherwise set $\ms{gpk}=(\ms{pp},\ms{mpk},\ms{tpk})$.\\
     $(M,\Sigma,\ms{uid},\Pi_{\mathsf{Trace}},\ms{info}_{\tau})\leftarrow\mcA^{\ms{CrptU,RevealU,SndToU,MReg,Sign}}(\ms{play}:\ms{st_{init}},\ms{gpk})$.\\
     $\Pi_{D(\mathsf{uid})}\leftarrow\mathsf{DTrace}(\mathsf{gpk},\mathsf{tsk},\mathsf{info}_\tau,\mathbf{reg},\mathsf{uid},M,\Sigma).$\\
     If
     $\ms{Verify}(\ms{gpk},\ms{info}_{\tau},M,\Sigma)=0$ or $\ms{uid}\notin\ms{HUL}\setminus\ms{BUL}$ or $(\mathsf{uid},M,\Sigma,\tau)\in \ms{SL}$, return $0$. \\
     Return $1$ if
     $\ms{Judge}(\ms{gpk},\ms{uid},\ms{info}_{\tau},\Pi_{\mathsf{Trace}},M,\Sigma)=1$ or $\ms{DJudge}(\ms{gpk},\ms{uid},\ms{info}_{\tau},\Pi_{D(\mathsf{uid})},M,\Sigma)=0$.
     \vspace{-10pt}
\end{minipage}
\end{center}
\caption{Experiment to  anonymity and non-frameability of an  $\mathsf{FDGSwD}$ scheme.} \label{fig:exp-denial-corr-anon}
\end{figure}

\begin{comment}
\begin{figure}
\begin{center}
\begin{minipage}{12cm}

     \vspace{-10pt}
\end{minipage}
\end{center}
\caption{Experiment to define anonymity for an $\mathsf{FDGSwD}$ scheme.} \label{fig:exp-denial-anon}
\end{figure}

\vspace{-10pt}
\end{comment}

\smallskip
\noindent\emph{Non-frameability} additionally requires that it is infeasible for the adversary to generate a valid signature whose denial opening proof with respect to an honest user is not accepted by the $\mathsf{DJudge}$ algorithm. %Details are modelled in the experiment $\mathbf{Exp}_{\mathsf{FDGSwD},\mathcal{A}}^{\ms{non-frame}}(\lambda)$.
%Define  $\mathbf{Adv}_{\mathsf{FDGSwD},\mathcal{A}}^{\ms{non-frame}}(\lambda)$ of  adversary $\mathcal{A}$ against non-frameability of an $\mathsf{FDGSwD}$ scheme as $\text{Pr}[\mathbf{Exp}_{\mathsf{FDGSwD},\mathcal{A}}^{\mathsf{non-frame}}(\lambda)=1]$. An $\mathsf{FDGSwD}$ scheme is non-frameable if the advantage of any $\mathrm{PPT}$ adversary $\mathcal{A}$ is negligible.

\begin{comment}
\begin{figure}
\begin{center}
\begin{minipage}{12cm}

\end{minipage}
\end{center}
\caption{Experiment to define non-frameability for an $\mathsf{FDGSwD}$ scheme.} \label{fig:exp-denial-non-frame}
\end{figure}
\end{comment}

\smallskip
\noindent\emph{Traceability} additionally demands that the adversary is not able to produce a valid signature such that opening proof is accepted by the $\mathsf{Judge}$ algorithm, but the denial opening proof with respect to the same user is also accepted by the $\mathsf{DJudge}$ algorithm. %Details are modelled in the experiment $\mathbf{Exp}_{\mathsf{FDGSwD},\mathcal{A}}^{\ms{trace}}(\lambda)$.

%Define  $\mathbf{Adv}_{\mathsf{FDGSwD},\mathcal{A}}^{\ms{trace}}(\lambda)$ of  adversary $\mathcal{A}$ against traceability of an $\mathsf{FDGSwD}$ scheme as $\text{Pr}[\mathbf{Exp}_{\mathsf{FDGSwD},\mathcal{A}}^{\mathsf{trace}}(\lambda)=1]$. An $\mathsf{FDGSwD}$ scheme is traceable if the advantage of any $\mathrm{PPT}$ adversary $\mathcal{A}$ is negligible.

\smallskip
\noindent\emph{Tracing Soundness} also requires that the adversary should not be able to generate a valid signature such that the denial opening proof for the actual signer is accepted by the $\mathsf{DJudge}$ algorithm. %Details are modelled in the experiment $\mathbf{Exp}_{\mathsf{FDGSwD},\mathcal{A}}^{\ms{trace-sound}}(\lambda)$.

%Define  $\mathbf{Adv}_{\mathsf{FDGSwD},\mathcal{A}}^{\ms{trace-sound}}(\lambda)$ of  adversary $\mathcal{A}$ against tracing soundness of an $\mathsf{FDGSwD}$ scheme as $\text{Pr}[\mathbf{Exp}_{\mathsf{FDGSwD},\mathcal{A}}^{\mathsf{trace}}(\lambda)=1]$. An $\mathsf{FDGSwD}$ scheme is tracing sound if the advantage of any $\mathrm{PPT}$ adversary $\mathcal{A}$ is negligible.

\begin{figure}
\begin{center}
\begin{minipage}{12cm}

\underline{Experiment: $\mb{Exp}_{\ms{FDGSwD},\mcA}^{\mathsf{trace}}(\lambda)$}\\
     $\ms{pp}\leftarrow \ms{GSetup}(\lambda),\ms{HUL,CUL,BUL,SL}=\emptyset$.\\
    $(\ms{st}_{\ms{init}},\ms{tpk},\ms{tsk})\leftarrow\mcA^{\langle\ms{GKgen_{GM}}(\ms{pp}),\cdot\rangle}(\ms{init}:\ms{pp})$.\\
     Return $0$ if $\ms{GKgen_{GM}}$ did not accept or $\mcA$'s output is not well-formed.\\
     Denote the output of $\ms{GKgen_{GM}}$ as $(\ms{mpk},\ms{msk},\ms{info})$, and set $\ms{gpk}=(\ms{pp},\ms{mpk},\ms{tpk})$.\\
     $(M,\Sigma,\tau)\leftarrow\mcA^{\ms{AddU,CrptU,SndToGM,RevealU,MReg,Sign,GUpdate}}(\ms{play}:\ms{st_{init}},\ms{gpk},\ms{info})$.\\
     If $\ms{Verify}(\ms{gpk},\ms{info}_{\tau},M,\Sigma)=0$, return $0$.\\
     $(\ms{uid},\Pi_{\mathsf{Trace}})\leftarrow\mathsf{Trace}(\ms{gpk},\ms{tsk},\ms{info}_{\tau},\mb{reg},M,\Sigma)$.\\
     $\Pi_{D(\mathsf{uid})}\leftarrow\mathcal{A}^{\mathsf{AddU,CrptU,SndToGM,RevealU,MReg,Sign,GUpdate}}(\mathsf{play}:\mathsf{st_{init}},\mathsf{uid},\Pi_{\mathsf{Trace}})$.\\
     Return $1$ if
      $\ms{IsActive}(\ms{info}_{\tau},\mb{reg},\ms{uid})=0$ or $ \ms{uid}=\bot $ or
     $\ms{Judge}(\ms{gpk},\ms{uid},\ms{info}_{\tau},\Pi_{\mathsf{Trace}},M,\Sigma)=0$ or
     $\mathsf{DJudge}(\ms{gpk},\ms{uid},\ms{info}_{\tau},\Pi_{D(\mathsf{uid})},M,\Sigma)=1$.\\
    \underline{Experiment: $\mb{Exp}_{\ms{FDGSwD},\mcA}^{\ms{trace-sound}}(\lambda)$}\\
    $\ms{pp}\leftarrow \ms{GSetup}(\lambda),\ms{CUL}=\emptyset$.
     $(\ms{st}_{\ms{init}},\ms{info},\ms{mpk},\ms{msk},\ms{tpk},\ms{tsk})\leftarrow\mcA(\ms{init}:\ms{pp})$.\\
     Return $0$ if $\mcA$'s output is not well-formed, otherwise set $\ms{gpk}=(\ms{pp},\ms{mpk},\ms{tpk})$.\\
    $(M,\Sigma,\ms{uid}_0,\Pi_{\mathsf{Trace},0},\Pi_{D(\mathsf{uid}_0)},\ms{uid}_1,\Pi_{\mathsf{Trace},1},\ms{info}_{\tau})\leftarrow\mcA^{\ms{CrptU,MReg}}(\ms{play}:\ms{st_{init}},\ms{gpk})$.\\
    If $\ms{Verify}(\ms{gpk},\ms{info}_{\tau},M,\Sigma)=0\vee\ms{Judge}(\ms{gpk},\ms{uid}_0,\ms{info}_{\tau},\Pi_{\mathsf{Trace},0},M,\Sigma)=0$, return $0$.\\
    Return $1$ if
    $\{\ms{uid}_0~(\neq \bot)\neq\ms{uid}_1~(\neq \bot)\wedge \ms{Judge}(\ms{gpk},\ms{uid}_1,\ms{info}_{\tau},\Pi_{\mathsf{Trace},1},M,\Sigma)=1\} \vee \\ \ms{DJudge}(\ms{gpk},\ms{uid}_0,\ms{info}_{\tau},\Pi_{D(\mathsf{uid}_0)},M,\Sigma)=1.$
\vspace{-10pt}
\end{minipage}
\end{center}
\caption{Experiments to define  traceability and tracing soundness  of an  $\mathsf{FDGSwD}$ scheme.}\label{fig:exp-denial-trace-trace-sound}
\end{figure}

\begin{comment}

\begin{figure}%[!htb]

\begin{center}
\begin{minipage}{12cm}

  \vspace{-10pt}
  \end{minipage}
\end{center}
\caption{Experiment to define  tracing soundness of an $\mathsf{FDGSwD}$ scheme.}\label{fig:exp-trace-sound}
\end{figure}
\end{comment}

\begin{definition}
Let $\lambda$ be any security parameter and $\mc{A}$ be any PPT adversary. %correctness and security experiments of a fully dynamic group signature scheme with deniability is formally defined using the above experiments.
For correctness, non-frameability, traceability and tracing soundness, the advantage of the adversary is defined as the probability of outputting~$1$ in the corresponding experiment. For anonymity, the advantage is defined as the absolute difference of probability of outputting~$1$ between experiment $\mb{Exp}_{\ms{FDGSwD},\mcA}^{\ms{anon-1}}$ and experiment $\mb{Exp}_{\ms{FDGSwD},\mcA}^{\ms{anon-0}}$.

A fully dynamic group signature scheme with deniability is said to be correct and secure (i.e., anonymous, non-frameable, traceable and tracing sound) if the advantages of the adversary in all considered experiments are negligible in~$\lambda$.
\end{definition}

%\subsection{Fully Dynamic Group Signature Scheme with Deniability}

\subsection{Our Construction}\label{subsection:denial-scheme}
 As shown by Ishida et al.~\cite{IEHST16}, the main difficulty towards realizing the deniability functionality is to prove in zero-knowledge that a given ciphertext does not decrypt to a particular message. Such a mechanism is non-trivial to realize in general, but it can be done quite easily for our scheme  by using the Stern-like technique for proving inequality.
Let us look at the equation
\[
\mathbf{c}_{1,2} - \mathbf{S}_1^\top\cdot \mathbf{c}_{1,1} = \mathbf{y} + \lfloor q/2 \rfloor\cdot \mathbf{b}' \bmod q,
\]
that appears in the argument of correct decryption from Section~\ref{subsection:correct-decryption}. To realize deniability, the plaintext $\mathbf{b}' \in \{0,1\}^\ell$ should be hidden, and we have to prove that $\mathbf{b}' \neq \mathsf{uid}'$, for some given $\mathsf{uid}'\in \{0,1\}^\ell$. Subtracting both sides of the equation by $\lfloor q/2 \rfloor\cdot \mathsf{uid}'$, we get the following:
\[
\mathbf{c}_{1,2} - \mathbf{S}_1^\top\cdot \mathbf{c}_{1,1} - \lfloor q/2 \rfloor\cdot \mathsf{uid}' = \mathbf{y} + \lfloor q/2 \rfloor\cdot \mathbf{b} \bmod q,
\]
where  $\mathbf{b}=\mathbf{b'}-\mathsf{uid}'\in\{-1,0,1\}^{\ell}$ is non-zero. Therefore, the problem towards realizing the deniability is reduced to proving knowledge of a non-zero vector $\mathbf{b}\in\{-1,0,1\}^{\ell}$. %By using the Stern-like trick for proving inequality suggested in~\cite{LNSW13}, it  can be done quite easily for our scheme.
This can be done by just appending $2\ell-1$ ``dummy'' entries to $\mathbf{b}$ and randomly permuting the extended vector.
We defer the detailed description to Section~\ref{subsection:ZK-denial-decryption}.  %this is done just by extending $\mathbf{b}^*$ to a vector in $\mathsf{B}_\ell^{2\ell-1}$ and randomly permuting the extended vector!
%Let $ \widetilde{\mathsf{R}}_{\ell}^{3}$ be the set of all vectors in $\{-1,0,1\}^{3\ell-1}$ that has $\ell$ coordinates equal to $-1$ and $1$, and $\ell-1$ coordinates equal to $0$.
%We will described this in detail in Section~\ref{subsection:ZK-denial-decryption}.

Now, let us describe and analyze our fully dynamic group signature scheme with deniability. Besides all the algorithms specified in Section~\ref{subsection:scheme-description}, the scheme contains two additional algorithms, $\mathsf{DTrace}$ and $\mathsf{DJudge}$, as follows.

\begin{description}
\item [$\mathsf{DTrace}(\mathsf{gpk}, \mathsf{tsk}, \mathsf{info}_\tau, \mathbf{reg}, \mathsf{uid}', M, \Sigma$).] Let $\mathbf{b}'$ be the decryption result of the $\mathsf{Trace}$ algorithm, denote $\mathbf{b}=\mathbf{b}'-\mathsf{uid}'$. This algorithm generates a $\mathsf{NIZKAoK}$ to demonstrate the possession of $\mathbf{S}_1\in\mathbb{Z}^{n\times \ell}, \mathbf{E}_1\in\mathbb{Z}^{\ell\times m_E}, \mathbf{y}\in \mathbb{Z}^\ell$, and non-zero vector $\mathbf{b}\in\{-1,0,1\}^\ell$ such that
            \begin{eqnarray}\label{equation:pi-denial}
            \begin{cases}
                \| \mathbf{S}_1 \|_\infty \leq \beta; \hspace*{2.8pt}  \| \mathbf{E}_1 \|_\infty \leq \beta; \hspace*{2.8pt} \| \mathbf{y}\|_\infty \leq  \lceil q/5 \rceil; \\
                \mathbf{S}_1^\top \cdot \mathbf{B} + \mathbf{E}_1 = \mathbf{P}_1 \bmod q; \\
                \mathbf{c}_{1,2} - \mathbf{S}_1^\top\cdot \mathbf{c}_{1,1}-\lfloor q/2 \rfloor\cdot \mathsf{uid}'  = \mathbf{y} + \lfloor q/2 \rfloor\cdot \mathbf{b} \bmod q.
            \end{cases}
            \end{eqnarray}
            The detailed description of the protocol is presented in Section~\ref{subsection:ZK-denial-decryption}.
        The protocol is repeated $\kappa = \omega(\log \lambda)$ times to achieve negligible soundness error  and   made non-interactive via the Fiat-Shamir heuristic as a triple $\Pi_{D(\mathsf{uid}')}= (\{\mathrm{CMT}_i\}_{i=1}^\kappa, \mathrm{CH}, \{\mathrm{RSP}\}_{i=1}^\kappa)$, where
          \begin{eqnarray}\label{equation:pi-denial-FS}
            \mathrm{CH} = \mathcal{H}_{\mathsf{FS}}\big(\{\mathrm{CMT}_i\}_{i=1}^\kappa, \mathsf{gpk}, \mathsf{info}_\tau, M, \Sigma, \mathsf{uid}'\big) \in \{1,2,3\}^\kappa.
          \end{eqnarray}
  \item [$\ms{DJudge}(\ms{gpk}, \mathsf{uid}', \ms{info}_{\tau}, \Pi_{D(\mathsf{uid}')}, M, \Sigma)$.] If $\mathsf{Verify}$ algorithm outputs $0$, then this algorithm returns $0$. Otherwise, this algorithm then verifies the argument $\Pi_{D(\mathsf{uid}')}$ with respect to common input $(\mathsf{gpk}, \mathsf{info}_\tau, M, \Sigma, \mathsf{uid}')$, in a similar manner as in algorithm $\mathsf{Verify}$.

    If $\Pi_{D(\mathsf{uid}')}$ does not verify, return $0$. Otherwise, return $1$.
\end{description}
\smallskip

\noindent
{\sc{Efficiency.}} In comparison with the \textsf{FDGS} scheme analyzed in Section~\ref{subsection:analysis-of-scheme}, the extended scheme additionally features the $\mathsf{NIZKAoK}$ $\Pi_{D(\mathsf{uid}')}$, which is of bit-size $\widetilde{\mathcal{O}}(\ell^2+\lambda\cdot\ell)$.  (Recall that $\lambda$ is the security parameter while $\ell=\log N$.)
\smallskip

\noindent{\sc{Correctness.}} The correctness of the scheme is analyzed as in Section~\ref{subsection:analysis-of-scheme}, but here we additionally need to prove that algorithm $\mathsf{DJudge}$ will accept the proof $\Pi_{D(\mathsf{uid}')}$ for $\mathsf{uid}'$ that is not the real signer. Since the tracing manager can correctly identify the signer, he possesses a valid tuple $(\mathbf{S}_1,\mathbf{E}_1,\mathbf{y},\mathbf{b})$ such that the conditions in~(\ref{equation:pi-denial}) hold. Correctness then follows from the perfect completeness of our argument system generating $\Pi_{D(\mathsf{uid}')}$.
\smallskip

\noindent{\sc{Anonymity.}} It is analyzed as in Lemma~\ref{lemma:anonymity}, except that the challenger additionally needs to answer the $\mathsf{DTrace}$ oracle queries. However, this can be done in the same way it answers the $\mathsf{Trace}$ oracle queries. In other words, when the challenger replies with real proof $\Pi_{\ms{trace}}$, it also replies with real proof $\Pi_{D(\cdot)}$; when the challenger replies with simulated proof $\Pi_{\ms{trace}}$, it also replies with simulated proof $\Pi_{D(\cdot)}$. This minor change does not affect anonymity, since the argument system generating $\Pi_{D(\cdot)}$ is statistically zero-knowledge.%Therefore, our fully dynamic group signature scheme with deniability is still anonymous.  %Then, by the statistical zero-knowledge of the argument system generating $\Pi_{D(\cdot)}$, anonymity is  proved.
\smallskip

\noindent{\sc{Non-frameability.}}
We prove non-frameability by contradiction. Suppose that the non-frameability adversary~$\mc{A}$ succeeds with non-negligible advantage $\epsilon$. Then we build a PPT algorithm~$\mc{B}$ that, with non-negligible probability, either breaks the security of our accumulator or solves $\ms{SIS}_{n,q,m,1}^{\infty}$ problem associated with matrix $\mb{A}$.

Given a matrix $\mb{A}$ from the environment that $\mc{B}$ is in,
it first generates all parameters $\ms{pp}$ as we do in $\ms{GSetup}$, then invokes $\mc{A}$ with $\ms{pp}$, and then proceeds as described in the experiment $\mb{Exp}_{\ms{FDGSwD},\mcA}^{\mathsf{non-frame}}(\lambda)$. Here $\mc{B}$ can consistently answer all the oracle queries made by $\mc{A}$. When $\mc{A}$ outputs $(M^*,\Sigma^*,\ms{bin}(j^*),\Pi_{\ms{trace}}^*,\ms{info}_\tau)$ and wins, $\mc{B}$ then proceeds as in Lemma~\ref{lemma:non-frame} until $\mathcal{B}$ extracts  the witness $\zeta'=(\cdots, \mathsf{bin}(j'),\cdots)$. By  the correctness of our encryption scheme, $\mathbf{c}_1^*$ is decrypted to $\mathsf{bin}(j')$, meaning $\mathsf{bin}(j')$ is the actual signer of  $\Sigma^*$. The fact that $\mathcal{A}$ wins implies either case~(i) $\mathsf{Judge}(\mathsf{gpk},\mathsf{bin}(j^*),\mathsf{info}_{\tau}, \Pi_{\ms{trace}}^*, M^*, \Sigma^*)=1$  or case~(ii) $$\mathsf{DJudge}(\mathsf{gpk},\mathsf{bin}(j^*),\mathsf{info}_{\tau}, \Pi_{D(\ms{bin}(j^*))}, M^*, \Sigma^*)=0.$$
Recall that $\mathsf{bin}(j^*)$ is the framed user outputted by the adversary.
We claim that
%\begin{eqnarray}\label{eqn:no-name}
$\mathsf{bin}(j')=\mathsf{bin}(j^*)$
%\end{eqnarray}
in both case (i) and case (ii). Lemma~\ref{lemma:non-frame} proves that the equality holds in case~(i). Now we prove it also holds in case~(ii). Otherwise, $\mathsf{bin}(j')\neq\mathsf{bin}(j^*)$, i.e., $\Pi_{D(\mathsf{bin}(j^*))}$ is a denial proof for user $\mathsf{bin}(j^*)$, who is not the actual signer $\mathsf{bin}(j')$ of  $\Sigma^*$. Therefore, by the completeness of the underlying argument system, $\Pi_{D(\mathsf{bin}(j^*))}$ will be accepted by the $\mathsf{DJudge}$ algorithm, a contradiction of case~(ii). Hence, the equality holds in case~(ii). To this point, $\mathcal{B}$ proceeds again as in Lemma~\ref{lemma:non-frame}. In the end, $\mathcal{B}$ either breaks the security of our accumulator or solves a $\mathsf{SIS}_{n,m,q,1}^{\infty}$ instance. Therefore, our $\mathsf{FDGSwD}$ scheme satisfies non-frameability.
\smallskip

\noindent{\sc{Traceability.}} Except for the analysis of Lemma~\ref{lemma:traceability}, we need to consider an additional case~(iii) in which $\mathcal{A}$ wins and prove  that this case happens with negligible probability. Case~(iii) is the event that $\mathcal{A}$ %generates a valid signature that is traced to an active user $\mathsf{bin}(j')$ together with a proof $\Pi_{\ms{trace}}$ accepted by the $\mathsf{Judge}$ algorithm, but the denial opening proof $\Pi_{D(\ms{bin}(j'))}$ is also accepted by $\mathsf{DJudge}$ algorithm.
outputs $(\mathsf{info}_{\tau}, M,\Sigma)$, where $\Sigma=(\Pi_{\ms{gs}},\mathbf{c}_1,\mathbf{c}_2)$ is a valid signature;  the (honestly generated) opening result is $(\mathsf{bin}(j),\Pi_{\ms{trace}})$, where $\mathsf{bin}(j)$ is an active user and $\Pi_{\ms{trace}}$ is accepted by the $\mathsf{Judge}$ algorithm; and then $\mathcal{A}$ outputs a denial proof $\Pi_{D(\mathsf{bin}(j))}$, which is accepted by the $\mathsf{DJudge}$ algorithm, after seeing the opening result.

To prove this case happens with negligible probability, we proceed as in Lemma~\ref{lemma:traceability} and obtain that $\mathsf{bin}(j')=\mathsf{bin}(j)$ with overwhelming probability, where $\mathsf{bin}(j')$ is part of the extracted witness, meaning $\mathbf{c}_1,\mathbf{c}_2$ are encryptions of $\mathsf{bin}(j')$.  %It then follows  that user $\mathsf{bin}(j)$ is an active user and $\Pi_{\ms{trace}}$ will be accepted by the $\mathsf{Judge}$ algorithm by the security of our accumulator and completeness of the underlying argument system generating $\Pi_{\ms{trace}}$.
The assumption that $\Pi_{D(\mathsf{bin}(j))}$ will be accepted by the $\mathsf{DJudge}$ algorithm  implies that $\mathsf{bin}(j')-\mathsf{bin}(j)\neq \mathbf{0}$ with overwhelming probability, by the soundness of the underlying argument system generating $\Pi_{D(\cdot)}$. This results in a contradiction. Therefore, case~(iii) only happens with negligible probability, which implies that our $\mathsf{FDGSwD}$ scheme is traceable.
\smallskip

\noindent{\sc{Tracing soundness.}} Except for the analysis of Lemma~\ref{lemma:tracing-soundness}, we need to additionally show that it is infeasible for the adversary  to generate a valid signature together with a denial proof (for the actual signer) that is accepted by the $\mathsf{DJudge}$ algorithm. In other words,   we need to prove the event, when $\mathcal{A}$ outputs $(M,\Sigma,\ms{uid}_0,\Pi_{\mathsf{trace},0},\Pi_{D(\mathsf{uid}_0)},\ms{uid}_1,\Pi_{\mathsf{trace},1},\ms{info}_{\tau})$, both $\Pi_{\ms{trace},0}$ and $\Pi_{D(\mathsf{uid}_0)}$ are accepted, happens with negligible probability. Parse $\Sigma$ as $(\Pi_{\ms{gs}},(\mathbf{c}_{1,1},\mathbf{c}_{1,2}),\mathbf{c}_2)$. Since $\Pi_{\ms{trace},0}$ is a valid opening proof, as in Lemma~\ref{lemma:tracing-soundness}, we can extract witness $\mathbf{S}_1,\mathbf{E}_1,\mathbf{y}$ such that $\mathbf{S}_1^\top \cdot \mathbf{B} + \mathbf{E}_1 = \mathbf{P}_1 \bmod q$ and
\begin{eqnarray}\label{equation:no-name-again}
\mathbf{c}_{1,2} - \mathbf{S}_1^\top\cdot \mathbf{c}_{1,1} = \mathbf{y} + \lfloor q/2 \rfloor\cdot \mathsf{uid}_0 \bmod q.
\end{eqnarray}
The fact that $\Pi_{D(\mathsf{uid}_0)}$ is accepted by the $\mathsf{DJudge}$ algorithm implies decryption of $(\mathbf{c}_{1,1},\mathbf{c}_{1,2})$ is not equal to $\mathsf{uid}_{0}$, due to the soundness of the underlying argument system. However, from equation~(\ref{equation:no-name-again}), we know decryption of $(\mathbf{c}_{1,1},\mathbf{c}_{1,2})$ is $\mathsf{uid}_{0}$, by correctness of our encryption scheme. This results in a contradiction. Therefore, the considered event occurs only with negligible probability, and the \textsf{FDGSwD} scheme is tracing sound.

\subsection{Zero-Knowledge Argument of ``Denial Opening''}\label{subsection:ZK-denial-decryption}

\begin{comment}
As shown by Ishida et al.~\cite{IEHST16}, the main technical challenge towards realizing the deniability functionality consists of constructing a \textsf{ZK} proof/argument that a given ciphertext does not decrypt to a particular message. Such a mechanism is non-trivial to realize in general, but as we mentioned earlier, by using the Stern-like trick for proving inequality, it can be done quite easily for our scheme.
Let us look at the equation
\[
\mathbf{c}_{1,2} - \mathbf{S}_1^\top\cdot \mathbf{c}_{1,1} = \mathbf{y} + \lfloor q/2 \rceil\cdot \mathbf{b}' \bmod q,
\]
that appears in the argument of correct decryption from Section~\ref{subsection:correct-decryption}. In the context we are considering, the plaintext $\mathbf{b}' \in \{0,1\}^\ell$ should be hidden, and we have to prove that $\mathbf{b}' \neq \mathbf{b}$, for some given $\mathbf{b} \in \{0,1\}^\ell$. Subtracting both sides of the equation by $\lfloor q/2 \rceil\cdot \mathbf{b}$, we get the equation:
\[
\mathbf{c}_{1,2} - \mathbf{S}_1^\top\cdot \mathbf{c}_{1,1} - \lfloor q/2 \rceil\cdot \mathbf{b} = \mathbf{y} + \lfloor q/2 \rceil\cdot \mathbf{b}^* \bmod q,
\]
where we have to prove that $\mathbf{b}^* \neq \mathbf{0}^\ell$. This is done just by extending $\mathbf{b}^*$ to a vector in $\mathsf{B}_\ell^{2\ell-1}$ and randomly permuting the extended vector!

Moreover, the technique is also applicable for several other existing lattice-based group signatures~\cite{LNW15,LLNW16,LLMNW16-dgs} which employ similar tracing mechanisms as ours.

\end{comment}
We now present the underlying $\mathsf{ZKAoK}$ used by the tracing manager to generate $\Pi_{D(\cdot)}$. The protocol allows the prover to show that a given ciphertext does not decrypt to a particular plaintext. The only difference of this protocol with the one described in Section~\ref{subsection:correct-decryption} is that we need to additionally prove knowledge of a \emph{non-zero} vector $\mathbf{b}\in\{-1,0,1\}^{\ell}$. To this end, we will use the following extending-permuting  technique for Stern-like protocols, which is developed based on Ling et al.'s idea~\cite{LNSW13}.

For any integer $\mathfrak{m}\in\mathbb{Z}_{+}$, define $\widetilde{\mathsf{R}}_{\mathfrak{m}}^{3}$ to be the set of all vectors in $\{-1,0,1\}^{3\mathfrak{m}-1}$ that has $\mathfrak{m}$ coordinates equal to $-1$ and $1$, and $\mathfrak{m}-1$ coordinates equal to $0$.  To prove that $\mathbf{b}\in\{-1,0,1\}^{\ell}$ and $\mathbf{b}\neq \mathbf{0}^{\ell}$, we append $2\ell-1$ ``dummy'' entries to $\mathbf{b}$ to obtain $\mathbf{b}^*\in \widetilde{\mathsf{R}}_{\ell}^3$.
Observe that $\mathbf{b}\in\{-1,0,1\}^\ell$ is non-zero if and only if  the extended vector $\mathbf{b}^*\in\widetilde{\mathsf{R}}_{\ell}^3$. Furthermore, for any $\pi_{b}\in\mathcal{S}_{3\ell-1}$, the following equivalence holds:  \[\mathbf{b}^*\in \widetilde{\mathsf{R}}_{\ell}^3\Leftrightarrow \pi_{b}(\mathbf{b}^*)\in \widetilde{\mathsf{R}}_{\ell}^3.\]

Now, by using the same technique described  in Section~\ref{subsection:correct-decryption}, we can unify equations in~(\ref{equation:pi-denial}) into \[\widetilde{\mathbf{M}}\cdot \widetilde{\mathbf{z}}=\widetilde{\mathbf{u}}\bmod q,\] where $\widetilde{\mathbf{M}}$ and $\widetilde{\mathbf{u}}$ are built from public input while $\widetilde{\mathbf{z}}=(\mathbf{z}\|\mathbf{b}^*)\in\{-1,0,1\}^{\widetilde{D}}$, where $\mathbf{z}\in \mathsf{R}_{D'}^3$ as defined in Section~\ref{subsection:correct-decryption}, $\mathbf{b}^*\in\widetilde{\mathsf{R}}_{\ell}^{3}$, and $\widetilde{D}=D+3\ell-1$.

Define $\widetilde{\mathsf{VALID}}=\mathsf{R}_{D'}^3\times \widetilde{\mathsf{R}}_{\ell}^{3}$ and $\widetilde{\mathcal{S}}=\mathcal{S}_{D}\times \mathcal{S}_{3\ell-1}$. For every $\widetilde{\eta}=(\eta,\pi_{b})\in \widetilde{\mathcal{S}}$, define $\Gamma_{\widetilde{\eta}}=\widetilde{\eta}$. It can be seen that our secret vector $\widetilde{\mb{z}}$ belongs to this tailed set $\widetilde{\ms{VALID}}$ and the following equivalence holds:
\[\widetilde{\mb{z}}\in\widetilde{\ms{VALID}}\Longleftrightarrow\Gamma_{\widetilde{\eta}}(\widetilde{\mb{z}})\in\widetilde{\ms{VALID}},\]
for any $\widetilde{\eta}\in\widetilde{\mathcal{S}}$.
At this point, we can see that the desired statistical $\mathsf{ZKAoK}$  protocol can be obtained from the one in Fig.~\ref{Figure:Interactive-Protocol}.

We remark that the above technique for obtaining zero-knowledge argument of ``denial opening'' is also applicable for several other existing lattice-based group signatures~\cite{LNW15,LLNW16,LLMNW16-dgs} which employ similar tracing mechanisms as ours.

\medskip

\noindent
{\sc Acknowledgements. } The authors would like to thank Beno\^{\i}t Libert for helpful comments and discussions.
The research is supported by Singapore Ministry of Education under Research Grant
MOE2016-T2-2-014(S).

\appendix

\section{Proof of Theorem~\ref{Theorem:zk-protocol}}\label{appdendix:proof-for-stern}

We first restate Theorem~\ref{Theorem:zk-protocol}.
\begin{theorem}%\label{Theorem:zk-protocol}
The protocol in Fig.~\ref{Figure:Interactive-Protocol} is a statistical \emph{\textsf{ZKAoK}} with perfect completeness, soundness error~$2/3$, and communication cost~$\widetilde{\mathcal{O}}(D\log q)$. Namely:
\begin{itemize}
\item There exists a polynomial-time simulator that, on input $(\mathbf{M}, \mathbf{u})$, outputs an accepted transcript statistically close to that produced by the real prover.
\item There exists a polynomial-time knowledge extractor that, on input a commitment $\mathrm{CMT}$ and $3$ valid responses $(\mathrm{RSP}_1,\mathrm{RSP}_2,\mathrm{RSP}_3)$ to all $3$ possible values of the challenge $Ch$, outputs $\mathbf{z}' \in \mathsf{VALID}$ such that $\mathbf{M}\cdot \mathbf{z}' = \mathbf{u} \bmod q.$
\end{itemize}
\end{theorem}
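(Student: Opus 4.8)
# Proof Proposal for Theorem~\ref{Theorem:zk-protocol}

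The plan is to follow the standard template for proving that a Stern-like three-move protocol is a statistical \textsf{ZKAoK}, adapted to the particular set $\mathsf{VALID}$, the permutation family $\overline{\mathcal{S}}$, and the permutation action $\Gamma_\eta$ defined in Section~\ref{subsection:ZK-main}. The three properties to establish are perfect completeness, statistical zero-knowledge, and the argument-of-knowledge (special soundness) property with extraction of $\mathbf{z}' \in \mathsf{VALID}$ satisfying $\mathbf{M}\cdot\mathbf{z}' = \mathbf{u} \bmod q$; the communication cost bound $\widetilde{\mathcal{O}}(D\log q)$ follows by inspection, since each $\mathrm{RSP}$ consists of a constant number of vectors in $\mathbb{Z}_q^D$ (or permutations, which have description size $\widetilde{\mathcal{O}}(D\log D)$) plus commitment randomness.

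For \textbf{completeness}, I would check each of the three challenge branches: if $\mathbf{z}\in\mathsf{VALID}$ and $\mathbf{M}\cdot\mathbf{z}=\mathbf{u}\bmod q$, then in branch $Ch=1$ we use $\Gamma_\eta(\mathbf{z})\in\mathsf{VALID}$ (the key equivalence proved in Section~\ref{subsection:ZK-main}) and the linearity $\Gamma_\eta(\mathbf{z}+\mathbf{r}_z)=\Gamma_\eta(\mathbf{z})+\Gamma_\eta(\mathbf{r}_z)$; in branch $Ch=2$ we use $\mathbf{M}\cdot(\mathbf{z}+\mathbf{r}_z)-\mathbf{u}=\mathbf{M}\cdot\mathbf{r}_z\bmod q$; in branch $Ch=3$ everything is a direct recomputation. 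For \textbf{zero-knowledge}, I would construct a PPT simulator $\mathsf{SIM}$ that picks a uniformly random predicted ``bad challenge'' $\overline{Ch}\in\{1,2,3\}$ and aborts if the verifier's actual challenge equals $\overline{Ch}$. In the case $\overline{Ch}=1$, $\mathsf{SIM}$ solves $\mathbf{M}\cdot\mathbf{z}'=\mathbf{u}\bmod q$ for some (not necessarily valid) $\mathbf{z}'\in\mathbb{Z}_q^D$ by linear algebra; in the cases $\overline{Ch}=2,3$ it samples $\mathbf{z}'\xleftarrow{\$}\mathsf{VALID}$. In all cases it picks $\mathbf{r}_z\xleftarrow{\$}\mathbb{Z}_q^D$, $\eta\xleftarrow{\$}\overline{\mathcal{S}}$, forms the commitments as in the real protocol (with the appropriate one ``faked'' in the case $\overline{Ch}=3$ using $\mathbf{M}\cdot(\mathbf{z}'+\mathbf{r}_z)-\mathbf{u}$), and answers any challenge $\neq\overline{Ch}$ honestly. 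Since $\mathsf{COM}$ is statistically hiding, the distribution of $\mathrm{CMT}$ is statistically close to the real one regardless of the hidden values; and because $\mathbf{r}_z$ is uniform over $\mathbb{Z}_q^D$ and $\eta$ is uniform over $\overline{\mathcal{S}}$, the revealed values $\Gamma_\eta(\mathbf{r}_z)$, $\mathbf{z}'+\mathbf{r}_z$, $\Gamma_\eta(\mathbf{z}'+\mathbf{r}_z)$ etc.\ are distributed exactly (or statistically close) to those of the real prover. Hence the simulator aborts with probability negligibly close to $1/3$ and otherwise produces a transcript statistically close to a real one.

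For the \textbf{argument of knowledge}, given a commitment $\mathrm{CMT}=(C_1,C_2,C_3)$ together with three valid responses $\mathrm{RSP}_1=(\mathbf{t}_z,\mathbf{t}_r,\rho_2,\rho_3)$, $\mathrm{RSP}_2=(\eta_2,\mathbf{z}_2,\rho_1,\rho_3)$, $\mathrm{RSP}_3=(\eta_3,\mathbf{z}_3,\rho_1,\rho_2)$ to the three distinct challenges, I would invoke the computational binding property of $\mathsf{COM}$ (which holds under $\mathsf{SIVP}_{\widetilde{\mathcal{O}}(n)}$) to conclude that the openings are consistent: $\mathbf{t}_z\in\mathsf{VALID}$, $\eta_2=\eta_3$, $\mathbf{t}_r=\Gamma_{\eta_3}(\mathbf{z}_3)$, $\mathbf{t}_z+\mathbf{t}_r=\Gamma_{\eta_2}(\mathbf{z}_2)\bmod q$, and $\mathbf{M}\cdot\mathbf{z}_2-\mathbf{u}=\mathbf{M}\cdot\mathbf{z}_3\bmod q$. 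Setting $\mathbf{z}'=\Gamma_{\eta_2}^{-1}(\mathbf{t}_z)$ gives $\mathbf{z}'\in\mathsf{VALID}$ by the equivalence (since $\Gamma_{\eta_2}$ is a bijection preserving $\mathsf{VALID}$), and from $\Gamma_{\eta_2}(\mathbf{z}')+\Gamma_{\eta_2}(\mathbf{z}_3)=\Gamma_{\eta_2}(\mathbf{z}_2)$ together with the linearity of $\Gamma_{\eta_2}$ we get $\mathbf{z}'+\mathbf{z}_3=\mathbf{z}_2\bmod q$, hence $\mathbf{M}\cdot\mathbf{z}'=\mathbf{M}\cdot\mathbf{z}_2-\mathbf{M}\cdot\mathbf{z}_3=\mathbf{u}\bmod q$. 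Formally, the extractor either outputs such a $\mathbf{z}'$ or it has found a collision for $\mathsf{COM}$, contradicting the binding property; standard rewinding (as in~\cite{KTX08,LNSW13}) yields a probabilistic polynomial-time extractor from any prover succeeding with probability noticeably above $2/3$. The main obstacle — and the only place requiring genuine care rather than routine bookkeeping — is verifying that the tailored $\mathsf{VALID}$, $\overline{\mathcal{S}}$, and $\Gamma_\eta$ from Section~\ref{subsection:ZK-main} indeed satisfy the three abstract conditions needed by this template: that $\Gamma_\eta$ acts linearly on $\mathbb{Z}_q^D$, that $\Gamma_\eta$ maps $\mathsf{VALID}$ onto $\mathsf{VALID}$ for every $\eta\in\overline{\mathcal{S}}$, and that for $\mathbf{z}\in\mathsf{VALID}$ the pair $(\eta,\Gamma_\eta(\mathbf{z}))$ with $\eta\xleftarrow{\$}\overline{\mathcal{S}}$ is distributed so that $\Gamma_\eta(\mathbf{z})$ is uniform over $\mathsf{VALID}$ and independent of $\eta$ — but these were precisely the equivalences and ``one-time pad'' arguments already established when the set and permutations were defined, so the proof here only needs to cite them.
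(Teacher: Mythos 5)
Your proposal is correct and follows essentially the same route as the paper's proof in Appendix~\ref{appdendix:proof-for-stern}: the same three-case simulator that predicts the challenge the verifier will not pick (solving $\mathbf{M}\cdot\mathbf{z}'=\mathbf{u}\bmod q$ by linear algebra when $\overline{Ch}=1$ and sampling $\mathbf{z}'\xleftarrow{\$}\mathsf{VALID}$ otherwise), and the same extractor that uses the binding of $\mathsf{COM}$ to force consistent openings and sets $\mathbf{z}'=[\Gamma_{\eta_2}]^{-1}(\mathbf{t}_z)$. Your closing remark correctly isolates the only non-routine ingredient --- that $\Gamma_\eta$ is linear, preserves $\mathsf{VALID}$, and randomizes over it --- which is exactly what the equivalences established in Section~\ref{subsection:ZK-main} provide.
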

%\end{comment}
%\begin{comment}
\begin{proof}
It can be checked that the protocol has perfect completeness: If an honest prover follows the protocol, then he always gets accepted by the verifier. It is also easy to see that the communication cost is bounded by $\widetilde{\mathcal{O}}(D \log q)$.

%We  now prove that the protocol is a statistical zero-knowledge argument of knowledge. \smallskip

\smallskip
\noindent
{\bf Zero-Knowledge Property. } We construct a \textsf{PPT} simulator $\mathsf{SIM}$ interacting with a (possibly dishonest) verifier $\widehat{\mathcal{V}}$, such that, given only the public input, $\mathsf{SIM}$ outputs with probability negligibly close to $2/3$ a simulated transcript that is statistically close to the one produced by the honest prover in the real interaction.

The simulator first chooses a random $\overline{Ch} \in \{1,2,3\}$ as a prediction of the challenge value that $\widehat{\mathcal{V}}$ will \emph{not} choose.

\noindent
\textbf{Case }$\overline{Ch}=1$: Using basic linear algebra over $\mathbb{Z}_q$, $\mathsf{SIM}$ computes a vector $\mathbf{z}' \in \mathbb{Z}_q^D$ such that $\mathbf{M}\cdot \mathbf{z}' = \mathbf{u} \bmod q.$
Next, it samples $\mathbf{r}_z \xleftarrow{\$} \mathbb{Z}_q^D$, $\eta \xleftarrow{\$} \overline{\mathcal{S}}$, and randomness $\rho_1, \rho_2, \rho_3$ for $\mathsf{COM}$. Then, it sends $\mathrm{CMT}= \big(C'_1, C'_2, C'_3\big)$ to $\widehat{\mathcal{V}}$, where
    \begin{gather*}
        C'_1 =  \mathsf{COM}(\eta, \mathbf{M}\cdot \mathbf{r}_z; \rho_1), \\
        C'_2 =  \mathsf{COM}(\Gamma_{\eta}(\mathbf{r}_z); \rho_2), \quad
        C'_3 =  \mathsf{COM}(\Gamma_{\eta}(\mathbf{z}' + \mathbf{r}_z); \rho_3).
    \end{gather*}
Receiving a challenge $Ch$ from $\widehat{\mathcal{V}}$, the simulator responds as follows:
\begin{itemize}
\item If $Ch=1$: Output $\bot$ and abort.
\item If $Ch=2$: Send $\mathrm{RSP} = \big(\eta, \mathbf{z}' + \mathbf{r}_z, \rho_1, \rho_3 \big)$.
\item If $Ch=3$: Send $\mathrm{RSP} = \big(\eta, \mathbf{r}_z, \rho_1, \rho_2\big)$.
\end{itemize}

\noindent
\textbf{Case }$\overline{Ch}=2$: $\mathsf{SIM}$ samples $\mathbf{z}' \xleftarrow{\$} \mathsf{VALID}$, $\mathbf{r}_z \xleftarrow{\$} \mathbb{Z}_q^D$, $\eta \xleftarrow{\$}  \overline{\mathcal{S}}$, and randomness $\rho_1, \rho_2, \rho_3$ for $\mathsf{COM}$. Then it sends $\mathrm{CMT}= \big(C'_1, C'_2, C'_3\big)$ to $\widehat{\mathcal{V}}$, where
    \begin{gather*}
        C'_1 =  \mathsf{COM}(\eta, \mathbf{M}\cdot \mathbf{r}_z; \rho_1), \\
        C'_2 =  \mathsf{COM}(\Gamma_{\eta}(\mathbf{r}_z); \rho_2), \quad
        C'_3 =  \mathsf{COM}(\Gamma_{\eta}(\mathbf{z}' + \mathbf{r}_z); \rho_3).
    \end{gather*}
Receiving a challenge $Ch$ from $\widehat{\mathcal{V}}$, the simulator responds as follows:
\begin{itemize}
\item If $Ch=1$: Send $\mathrm{RSP} = \big(\Gamma_\eta(\mathbf{z}'), \Gamma_\eta(\mathbf{r}_z), \rho_2, \rho_3\big)$.
\item If $Ch=2$: Output $\bot$ and abort.
\item If $Ch=3$: Send $\mathrm{RSP} = \big(\eta, \mathbf{r}_z, \rho_1, \rho_2\big)$.
\end{itemize}

\smallskip

\noindent
\textbf{Case }$\overline{Ch}=3$: $\mathsf{SIM}$ samples $\mathbf{z}' \xleftarrow{\$} \mathsf{VALID}$, $\mathbf{r}_z \xleftarrow{\$} \mathbb{Z}_q^D$, $\eta \xleftarrow{\$}  \overline{\mathcal{S}}$, and randomness $\rho_1, \rho_2, \rho_3$ for $\mathsf{COM}$. Then it sends $\mathrm{CMT}= \big(C'_1, C'_2, C'_3\big)$ to $\widehat{\mathcal{V}}$, where
$C'_2 =  \mathsf{COM}(\Gamma_{\eta}(\mathbf{r}_z); \rho_2)$, $C'_3 =  \mathsf{COM}(\Gamma_{\eta}(\mathbf{z}' + \mathbf{r}_z); \rho_3)$ as in the previous two cases, while
    \begin{eqnarray*}
        C'_1 =  \mathsf{COM}(\eta, \mathbf{M}\cdot (\mathbf{z}'+ \mathbf{r}_z) - \mathbf{u}; \rho_1).
    \end{eqnarray*}
Receiving a challenge $Ch$ from $\widehat{\mathcal{V}}$, it responds as follows:
\begin{itemize}
  \item If $Ch=1$: Send $\mathrm{RSP}$ computed as in the case $(\overline{Ch}=2, Ch=1)$.
  \item If $Ch=2$: Send $\mathrm{RSP}$ computed as in the case $(\overline{Ch}=1, Ch=2)$.
 \item If $Ch=3$: Output $\bot$ and abort.
\end{itemize}
%\smallskip

\noindent
We observe that, in every case, since $\mathsf{COM}$ is statistically hiding, the distribution of the commitment $\mathrm{CMT}$ and the distribution of the challenge~$Ch$ from~$\widehat{\mathcal{V}}$ are statistically close to those in the real interaction. Hence, the probability that the simulator outputs~$\bot$ is negligibly close to~$1/3$. Moreover, one can check that whenever the simulator does not halt, it will provide an accepted transcript, the distribution of which is statistically close to that of the prover in the real interaction. In other words, we have constructed a simulator that can successfully impersonate the honest prover with probability negligibly close to~$2/3$.

\smallskip

\noindent
{\bf Argument of Knowledge.} Suppose that $\mathrm{RSP}_1 = (\mathbf{t}_z, \mathbf{t}_r, \rho_{2}, \rho_{3})$, $\mathrm{RSP}_2 = (\eta_2, \mathbf{z}_2, \rho_{1}, \rho_{3})$, $\mathrm{RSP}_3 = (\eta_3, \mathbf{z}_3, \rho_{1}, \rho_{2})$ are $3$ valid responses to the same commitment $\mathrm{CMT} = (C_1, C_2, C_3)$, with respect to all $3$ possible values of the challenge. The validity of these responses implies that:
\[
\begin{cases}
\mathbf{t}_z \in \mathsf{VALID}; \hspace*{5pt}
C_1 = \mathsf{COM}(\eta_2, \mathbf{M}\cdot \mathbf{z}_2 - \mathbf{u};\rho_1) = \mathsf{COM}(\eta_3, \mathbf{M}\cdot \mathbf{z}_3; \rho_1); \\
C_2 = \mathsf{COM}(\mathbf{t}_r; \rho_2) = \mathsf{COM}(\Gamma_{\eta_3}(\mathbf{z}_3); \rho_2); \\
{C}_3 = \mathsf{COM}(\mathbf{t}_z + \mathbf{t}_r; \rho_3) = \mathsf{COM}(\Gamma_{\eta_2}(\mathbf{z}_2); \rho_3).
\end{cases}
\]
Since \textsf{COM} is computationally binding, we can deduce that:
\[
\begin{cases}
\mathbf{t}_z \in \mathsf{VALID}; \hspace*{2.8pt}
\eta_2 = \eta_3; \hspace*{2.8pt}
\mathbf{t}_r = \Gamma_{\eta_3}(\mathbf{z}_3);
\hspace*{2.8pt} \mathbf{t}_z + \mathbf{t}_r = \Gamma_{\eta_2}(\mathbf{z}_2) \bmod q; \\[2.5pt]
\mathbf{M}\cdot \mathbf{z}_2 - \mathbf{u} = \mathbf{M}\cdot \mathbf{z}_3 \bmod q.
\end{cases}
\]
Since $\mathbf{t}_z \in \mathsf{VALID}$, if we let $\mathbf{z}' = [\Gamma_{\eta_2}]^{-1}(\mathbf{t}_z)$, then $\mathbf{z}' \in \mathsf{VALID}$. Furthermore, we have $$\Gamma_{\eta_2}(\mathbf{z}') + \Gamma_{\eta_2}(\mathbf{z}_3) = \Gamma_{\phi_2}(\mathbf{z}_2) \bmod q,$$
which implies that $\mathbf{z}' + \mathbf{z}_3 = \mathbf{z}_2 \bmod q$, and that
$\mathbf{M}\cdot \mathbf{z}' + \mathbf{M}\cdot \mathbf{z}_3 = \mathbf{M}\cdot \mathbf{z}_2 \bmod q$. As a result, we have $\mathbf{M}\cdot \mathbf{z}' = \mathbf{u} \bmod q$.
This concludes the proof.

\qed
\end{proof}

\end{document}